\setlist[enumerate,1]{label=(\roman*)}
\title{A Theory of the Saving Rate of the Rich\thanks{We thank Chris Carroll, {\'E}milien Gouin-Bonenfant, Ben Moll, Johannes Wieland, and seminar participants at UCSD and 2020 CRETA Economic Theory Conference for valuable feedback and suggestions. A previous version of this paper was circulated under the title ``Asymptotic Marginal Propensity to Consume''.}}
\author{Qingyin Ma\thanks{International School of Economics and Management, Capital University of Economics and Business. Email: \href{mailto:qingyin.ma@cueb.edu.cn}{qingyin.ma@cueb.edu.cn}.} \and Alexis Akira Toda\thanks{Department of Economics, University of California San Diego. Email: \href{mailto:atoda@ucsd.edu}{atoda@ucsd.edu}.}}
\numberwithin{equation}{section}
\numberwithin{thm}{section}
\numberwithin{exmp}{section}
\newcommand{\cC}{\mathcal{C}}
\newcommand{\cH}{\mathcal{H}}
\newcommand{\ZZ}{\mathsf{Z}}
\begin{document}

\maketitle

\begin{abstract}
Empirical evidence suggests that the rich have higher propensity to save than do the poor. While this observation may appear to contradict the homotheticity of preferences, we theoretically show that that is not the case. Specifically, we consider an income fluctuation problem with homothetic preferences and general shocks and prove that consumption functions are asymptotically linear, with an exact analytical characterization of asymptotic marginal propensities to consume (MPC). We provide necessary and sufficient conditions for the asymptotic MPCs to be zero. %We solve a calibrated model with standard constant relative risk aversion utility and show that asymptotic MPCs can be zero in empirically plausible settings, implying an increasing and large saving rate of the rich and high wealth inequality.
We calibrate a model with standard constant relative risk aversion utility and show that zero asymptotic MPCs are empirically plausible, implying that our mechanism has the potential to accommodate a large saving rate of the rich and high wealth inequality (small Pareto exponent) as observed in the data.

\medskip

{\bf Keywords:} asymptotic linearity, income fluctuation problem, monotone convex map, saving rate.

\medskip

{\bf JEL codes:} C65, D15, D52, E21.
\end{abstract}

\section{Introduction}

Empirical evidence suggests that the rich have higher propensity to save than do the poor.\footnote{\cite*{Quadrini1999} documents that entrepreneurs (who tend to be rich) have high saving rates. \cite*{DynanSkinnerZeldes2004} document that there is a positive association between saving rates and lifetime income. More recently, using Norwegian administrative data, \cite*{FagerengHolmMollNatvikWP} show that among households with positive net worth, saving rates are increasing in wealth.} This fact implies that the rich have lower marginal propensity to consume (MPC), which has important economic consequences. For example, when the rich have lower MPC, the consumption tax, which is a popular tax instrument in many countries, becomes regressive and may not be desirable from equity perspectives. MPC heterogeneity also implies that the wealth distribution matters for determining aggregate demand and hence monetary and fiscal policies \citep*{KaplanMollViolante2018,MianStraubSufiWP}. 

Why do the rich save so much? Intuition suggests that canonical models of consumption and savings that feature (identical) homothetic preferences are unable to explain the high saving rate of the rich: in such models, consumption (hence saving) functions should be asymptotically linear in wealth due to homotheticity, implying an asymptotically constant saving rate. A seemingly obvious explanation for the high saving rate of the rich is that preferences are not homothetic.\footnote{For example, \cite*{Carroll2000Why} considers a `capitalist spirit' model in which agents directly get utility from holding wealth, where the utility functions for consumption and wealth have different curvatures. \cite*{DeNardi2004} considers a model with bequest, which is mathematically similar. \cite*{StraubSavingWP} estimates that the elasticity of consumption with respect to permanent income is below 1 (which implies concavity of consumption functions) and uses non-homothetic preferences to explain it. Another possibility is to introduce frictions such as portfolio adjustment costs \citep*{FagerengHolmMollNatvikWP}.} However, non-homothetic preferences have some undesirable theoretical properties. First, they are inconsistent with balanced growth (whereas many aggregate economic variables such as real per capita GDP are near unit root processes), at least in basic models in which preference parameters are constant. Second, non-homothetic utility functions have more parameters than homothetic ones, which introduces arbitrariness in model specification and calibration.

In this paper we theoretically show that the intuition of ``homotheticity implies (asymptotic) linearity'' is only partially correct. We consider a standard income fluctuation problem with (homothetic) constant relative risk aversion (CRRA) preference but with capital and labor income risks in a general Markovian setting. We prove that the consumption functions are asymptotically linear in wealth, or the asymptotic marginal propensities to consume converge to some constants.\footnote{Throughout the paper we say that a consumption function $c(a)$ (where $a>0$ is financial wealth) is \emph{asymptotically linear} if the asymptotic \emph{average} propensity to consume $\bar{c}=\lim_{a\to\infty}c(a)/a$ exists. This condition is weaker than $\lim_{a\to\infty}\abs{c(a)-\bar{c}a-d}=0$ for some $\bar{c},d\in \R$, which may be a more common definition of asymptotic linearity. If the asymptotic MPC $\bar{c}=\lim_{a\to\infty}c'(a)$ exists, then l'H\^opital's rule implies $\lim_{a\to\infty}c(a)/a=\lim_{a\to\infty}c'(a)=\bar{c}$. Although not necessarily mathematically precise, due to the lack of better language we use ``constant asymptotic average propensity to consume'', ``constant asymptotic MPC'', and ``asymptotic linearity'' interchangeably.} While this statement is intuitive, there is one surprise: we obtain an exact analytical characterization of the asymptotic MPCs and prove that they can be zero. The asymptotic MPCs depend only on risk aversion and the stochastic processes for the discount factor and return on wealth, and are independent of the income process. Furthermore, we derive necessary and sufficient conditions for zero asymptotic MPCs. When the asymptotic MPCs are zero, the saving rates of the rich converge to one as agents get wealthier. Thus, we provide a potential explanation for why the rich save so much, and we do so with standard homothetic preferences.

To prove that consumption functions are asymptotically linear with particular slopes, we apply policy function iteration as in \cite*{LiStachurski2014} and \cite*{MaStachurskiToda2020JET}. Since agents cannot consume more than their financial wealth in the presence of borrowing constraints, a natural upper bound on consumption is asset, which is linear with a slope of 1. Starting from this candidate consumption function, policy function iteration results in increasingly tighter upper bounds. On the other hand, we directly obtain lower bounds by restricting the space of candidate consumption functions such that they have linear lower bounds with specific slopes. We analytically derive these slopes based on the fixed point theory of monotone convex maps developed in \cite*{Du1990}, which has recently been applied in economics by \cite*{Toda2019JME} and \cite*{BorovickaStachurski2020}. Finally, we show that the upper and lower bounds thus obtained have identical slopes, implying the asymptotic linearity of consumption functions with an exact characterization of asymptotic MPCs.

To assess the empirical plausibility of our new mechanism, we numerically solve a partial equilibrium model with CRRA utility and capital income risk calibrated to the U.S.\ economy. We find that with moderate risk aversion (above 3), the asymptotic MPCs become zero, the saving rates of the rich are increasing and approach 1, and the implied wealth Pareto exponent is close to the value in the data.

%\subsection{Related literature}

Our paper is related to the theoretical studies of the income fluctuation problem, which is a key building block of heterogeneous-agent models in modern macroeconomics.\footnote{See, for example, \cite*{Cao2020} and \cite*{Acikgoz2018} for the existence of equilibrium with and without aggregate shocks, where the theoretical properties of the income fluctuation problem play an important role. \cite*{LehrerLight2018} and \cite*{Light2018} prove comparative statics results regarding savings. \cite*{Light2020} proves the uniqueness of stationary equilibrium in an Aiyagari model that exhibits a certain gross substitute property.} \cite*{ChamberlainWilson2000} study the existence of a solution assuming bounded utility and applying the contraction mapping theorem. \cite*{LiStachurski2014} relax the boundedness assumption and apply policy function iteration. \cite*{BenhabibBisinZhu2015} consider a special model with CRRA utility, constant discounting, and \iid and mutually independent returns and income shocks to study the tail behavior of wealth. \cite*{MaStachurskiToda2020JET} allow for stochastic discounting and returns on wealth in a general Markovian setting and discuss the ergodicity, stochastic stability, and tail behavior of wealth. \cite*{Carroll2020} examines detailed properties of a special model with CRRA utility, constant discounting and risk-free rate, and \iid permanent and transitory income shocks.

While the main focus of these papers is the existence, uniqueness, and computation of a solution, we focus on the asymptotic behavior of consumption with general shocks. \cite*{CarrollKimball1996} show the concavity of consumption functions in a class of income fluctuation problems with hyperbolic absolute risk aversion (HARA) utility,\footnote{\cite*{TodaHARA} shows that HARA is necessary for the concavity of consumption functions.} which implies asymptotic linearity. However, they do not characterize the asymptotic MPCs as we do. As an intermediate result to examine the wealth accumulation process, Proposition~5 of \cite*{BenhabibBisinZhu2015} characterizes the asymptotic MPC of a special model described above. \cite*{Carroll2020} also intuitively discusses the asymptotic linearity of the consumption function in a model without capital income risk, and points out in Appendix~A.2.2 and Figure~6 the possibility of zero asymptotic MPCs, although that case requires a negative interest rate. Our contribution relative to these results is that we obtain a rigorous and complete characterization of asymptotic MPCs in a general setting (including capital income risk and Markovian shocks), %provide mathematically rigorous proofs, 
analyze the necessity of these advanced features in generating zero asymptotic MPCs, and show through a numerical example that they are empirically plausible.

The rest of the paper is organized as follows. %After a brief discussion of the related literature, 
Section~\ref{sec:AL} introduces a general income fluctuation problem, proves the asymptotic linearity of consumption functions with homothetic preferences, and discusses some examples. Section~\ref{sec:plausible} applies the theory to a calibrated model and shows that zero asymptotic MPCs are empirically plausible and generate high wealth inequality (small Pareto exponent). Appendices~\ref{sec:IF} and \ref{sec:proof} contain the proofs. Replication files for Section~\ref{sec:plausible} are available at \url{https://github.com/alexisakira/savingrate}.

\section{Asymptotic linearity of consumption functions}\label{sec:AL}

In this section we introduce an income fluctuation problem that generalizes the setting in \cite*{MaStachurskiToda2020JET} and study the asymptotic property of the consumption functions when preferences are homothetic.

\subsection{Income fluctuation problem}
Time is discrete and denoted by $t=0,1,2,\dotsc$. Let $a_t$ be the financial wealth of the agent at the beginning of period $t$. The agent chooses consumption $c_t\ge 0$ and saves the remaining wealth $a_t-c_t$. The period utility function is $u$ and the discount factor, gross return on wealth, and non-financial income in period $t$ are denoted by $\beta_t,R_t,Y_t$, where we normalize $\beta_0=1$. Thus the agent solves
\begin{subequations}\label{eq:IF}
\begin{align}
&\maximize && \E_0\sum_{t=0}^\infty \left(\prod_{i=0}^t\beta_i\right)u(c_t) \notag\\
&\st && a_{t+1}=R_{t+1}(a_t-c_t)+Y_{t+1}, \label{eq:IF_budget}\\
&&& 0\le c_t\le a_t, \label{eq:IF_borrow}
\end{align}
\end{subequations}
where the initial wealth $a_0=a>0$ is given, \eqref{eq:IF_budget} is the budget constraint, and \eqref{eq:IF_borrow} implies that the agent cannot borrow.\footnote{The no-borrowing condition $a_t-c_t\ge 0$ is without loss of generality as discussed in \cite*{ChamberlainWilson2000} and \cite*{LiStachurski2014}.} The stochastic processes $\set{\beta_t,R_t,Y_t}_{t\ge 1}$ obey
\begin{equation}
\beta_t=\beta(Z_{t-1},Z_t,\zeta_t),\quad R_t=R(Z_{t-1},Z_t,\zeta_t),\quad Y_t=Y(Z_{t-1},Z_t,\zeta_t),\label{eq:betaRY}
\end{equation}
where $\beta,R,Y$ are nonnegative measurable functions, $\set{Z_t}_{t\ge 0}$ is a time-homogeneous Markov chain taking values in a finite set $\ZZ=\set{1,\dots,Z}$ with a transition probability matrix $P$, and the innovations $\set{\zeta_t}$ are independent and identically distributed (\iid) over time and could be vector-valued.

Before discussing the properties of the income fluctuation problem \eqref{eq:IF}, we note that it is very general despite the fact that it is cast as an infinite-horizon optimization problem in a stationary environment. For example, a finite lifetime is permitted by allowing $\beta(Z_{t-1},Z_t,\zeta_t)=0$ in some states. Life-cycle features such as age-dependent income and mortality risk \citep*{huggett1996} are also permitted by supposing that agents have some finite upper bound for age, time is part of the state $z\in \ZZ$, and that the discount factor $\beta(Z_{t-1},Z_t,\zeta_t)$ includes survival probability.

To simplify the notation, we introduce the following conventions. We use a hat to denote a random variable that is realized next period, for example $Z=Z_t$ and $\hat{Z}=Z_{t+1}$. When no confusion arises, we write $\hat{\beta}$ for $\beta(Z,\hat{Z},\hat{\zeta})$ and define $\hat{R},\hat{Y}$ analogously. Conditional expectations are abbreviated using subscripts, for example
\begin{equation*}
\E_zX=\CE{X|Z=z}\quad \text{and}\quad \E_{z,\hat{z}}X=\CE{X|Z=z,\hat{Z}=\hat{z}}.
\end{equation*}
For $\theta\in \R$, we define the matrix $K(\theta)$ related to the transition probability matrix $P$, discount factor $\beta$, and return $R$ by
\begin{equation}
K_{z\hat{z}}(\theta)\coloneqq P_{z\hat{z}}\E_{z,\hat{z}}\hat{\beta}\hat{R}^\theta=P_{z\hat{z}}\E\beta(z,\hat{z},\hat{\zeta})R(z,\hat{z},\hat{\zeta})^\theta\in [0,\infty]. \label{eq:Ktheta}
\end{equation}
The matrix $K(\theta)$ for various values of $\theta$ appears throughout the paper. For a square matrix $A$, the scalar $r(A)$ denotes its spectral radius (largest absolute value of all eigenvalues), \ie,
\begin{equation}
r(A)\coloneqq \max \set{\abs{\alpha}|\text{$\alpha$ is an eigenvalue of $A$}}.\label{eq:spectralRadius}
\end{equation}
The spectral radius \eqref{eq:spectralRadius} plays an important role in the subsequent discussion. %Finally, for two matrices $A=(A_{z\hat{z}})$ and $B=(B_{z\hat{z}})$ of the same size, $A\odot B=(A_{z\hat{z}}B_{z\hat{z}})$ denotes the Hadamard (entry-wise) product.

Consider the following assumptions.

\begin{asmp}\label{asmp:Inada}
The utility function $u:[0,\infty)\to \R \cup \set{-\infty}$ is continuously differentiable on $(0,\infty)$, $u'$ is positive and strictly decreasing on $(0,\infty)$, and $u'(\infty)<1$.
\end{asmp}

Assumption~\ref{asmp:Inada} is essentially the usual monotonicity and concavity assumptions together with a form of Inada condition ($u'(\infty)<1$).

\begin{asmp}\label{asmp:spectral}
Let $K$ be as in \eqref{eq:Ktheta}. The following conditions hold:
\begin{enumerate}
\item\label{item:Ebeta} The matrices $K(0)$ and $K(1)$ are finite,
\item\label{item:rbeta} $r(K(0))<1$ and $r(K(1))<1$,
\item\label{item:EY} $\E_{z,\hat{z}}\hat{Y}<\infty$, $\E_{z,\hat{z}}u'(\hat{Y})<\infty$, and $\E_{z,\hat{z}}\hat{\beta}\hat{R}u'(\hat{Y})<\infty$ for all $(z,\hat{z})\in \ZZ^2$.
\end{enumerate}
\end{asmp}

Using the definition of $K$ in \eqref{eq:Ktheta}, condition~\ref{item:Ebeta} is equivalent to $\E_{z,\hat{z}}\hat{\beta}<\infty$ and $\E_{z,\hat{z}}\hat{\beta}\hat{R}<\infty$ for all $(z,\hat{z})\in \ZZ^2$. The condition $r(K(0))<1$ in \ref{item:rbeta} generalizes $\beta<1$ to the case with random discount factors. The condition $r(K(1))<1$ generalizes the `impatience' condition $\beta R<1$ to the stochastic case.

Our setting and Assumptions~\ref{asmp:Inada}, \ref{asmp:spectral} are similar to those in \cite*{MaStachurskiToda2020JET} but slightly more general. They suppose that $\beta_t,R_t,Y_t$ depend only on the current state $Z_t$ and \iid innovations that are mutually independent, whereas we allow $\beta_t,R_t,Y_t$ to also depend on the previous state $Z_{t-1}$ as in \eqref{eq:betaRY} and the innovations could be correlated (because $\zeta_t$ in \eqref{eq:betaRY} is vector-valued with arbitrary distribution). Although the potential dependence on $Z_{t-1}$ is mathematically redundant because we can always square the state space as $\ZZ^2$ and define a new variable $\tilde{Z}_t\coloneqq (Z_{t-1},Z_t)$, it is computationally advantageous to reduce the dimensionality. \cite*{MaStachurskiToda2020JET} suppose that the utility function $u$ is twice continuously differentiable and $u'(0)=\infty$. Our only substantive generalization is that we allow the possibility $u'(0)<\infty$, which for example can accommodate hyperbolic absolute risk aversion utility.\footnote{In addition to the counterparts of Assumptions~\ref{asmp:Inada} and \ref{asmp:spectral}, \cite*{MaStachurskiToda2020JET} assume that the transition probability matrix $P$ is irreducible. However, irreducibility is required only for their ergodicity result, not for existence and uniqueness of a solution.}

Under the maintained assumptions, Theorem~\ref{thm:exist} below states that the income fluctuation problem \eqref{eq:IF} admits a unique solution and provides a computational algorithm. To make its statement precise, we introduce further definitions. Let $\cC$ be the space of candidate consumption functions such that $c:(0,\infty)\times \ZZ\to \R$ is continuous, is increasing in the first argument, $0\le c(a,z)\le a$ for all $a>0$ and $z\in \ZZ$, and
\begin{equation}
\sup_{(a,z)\in (0,\infty)\times \ZZ}\abs{u'(c(a,z))-u'(a)}<\infty.\label{eq:uprimediff}
\end{equation}
For $c,d\in \cC$, define the metric
\begin{equation}
\rho(c,d)=\sup_{(a,z)\in (0,\infty)\times \ZZ}\abs{u'(c(a,z))-u'(d(a,z))}.\label{eq:rho}
\end{equation}
When $u'$ is positive, continuous, and strictly decreasing (implied by Assumption~\ref{asmp:Inada}), it is straightforward (\eg, Proposition 4.1 of \cite*{LiStachurski2014}) to show that $(\cC,\rho)$ is a complete metric space.

If the income fluctuation problem \eqref{eq:IF} has a solution and the nonnegativity and borrowing constraints $0\le c_t\le a_t$ do not bind, the Euler equation implies
$$u'(c_t)=\E_t \beta_{t+1}R_{t+1}u'(c_{t+1}).$$
If $c_t=0$ or $c_t=a_t$, then clearly $u'(c_t)=u'(0)$ or $u'(c_t)=u'(a_t)$. Therefore combining these three cases, we can compactly express the Euler equation as
$$u'(c_t)=\min\set{\max\set{\E_t \beta_{t+1}R_{t+1}u'(c_{t+1}),u'(a_t)},u'(0)}.$$
Based on this observation, given a candidate consumption function $c\in \cC$, it is natural to update $c(a,z)$ by the value $\xi\in [0,a]$ that solves the Euler equation
\begin{equation}
u'(\xi)=\min\set{\max\set{\E_z \hat{\beta}\hat{R}u'(c(\hat{R}(a-\xi)+\hat{Y},\hat{Z})),u'(a)},u'(0)}.\label{eq:xi}
\end{equation}
The following lemma shows that such a $\xi$ uniquely exists.

\begin{lem}\label{lem:xi}
Suppose that $u'$ is continuous, positive, strictly decreasing, and $\E_{z,\hat{z}}\hat{\beta}\hat{R}<\infty$ and $\E_{z,\hat{z}}u'(\hat{Y})<\infty$ for all $(z,\hat{z})\in \ZZ^2$. Then for any $c\in \cC$, $a>0$, and $z\in \ZZ$, there exists a unique $\xi\in [0,a]$ satisfying \eqref{eq:xi}, with $\xi>0$ if $u'(0)=\infty$.
\end{lem}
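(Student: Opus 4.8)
The plan is to rewrite the Euler-type equation \eqref{eq:xi} in a form to which the intermediate value theorem and strict monotonicity apply, treating $\xi$ as the unknown. First I would fix $c\in\cC$, $a>0$, and $z\in\ZZ$ and define, for $\xi\in[0,a]$, the right-hand side expectation
\begin{equation*}
g(\xi)\coloneqq \E_z\bigl[\hat{\beta}\hat{R}\,u'\bigl(\hat{R}(a-\xi)+\hat{Y}\bigr)\bigr].
\end{equation*}
The first task is to check that $g$ is well defined and finite on $[0,a]$: since $\hat{R}(a-\xi)+\hat{Y}\ge\hat{Y}>0$ whenever $\hat{Y}>0$ (and one must handle the measure-zero issue of $\hat{Y}=0$ using $u'(0)\le\infty$ together with the convention that the product is interpreted appropriately), monotonicity of $u'$ gives $u'(\hat{R}(a-\xi)+\hat{Y})\le u'(\hat{Y})$, hence $g(\xi)\le \E_z[\hat{\beta}\hat{R}u'(\hat{Y})]<\infty$ by the hypothesis $\E_{z,\hat z}\hat\beta\hat R u'(\hat Y)<\infty$ and the finiteness of $\ZZ$. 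So $g:[0,a]\to[0,\infty)$ is finite-valued.

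Next I would record the monotonicity and continuity of both sides of \eqref{eq:xi}. As $\xi$ increases, $\hat R(a-\xi)+\hat Y$ decreases, so $u'$ of it increases, so $g$ is nondecreasing in $\xi$; therefore the right-hand side of \eqref{eq:xi}, namely $\min\{\max\{g(\xi),u'(a)\},u'(0)\}$, is also nondecreasing in $\xi$. The left-hand side $u'(\xi)$ is strictly decreasing in $\xi$ by Assumption~\ref{asmp:Inada}. Continuity of $g$ follows from the dominated convergence theorem: for $\xi_n\to\xi$ the integrand converges pointwise by continuity of $u'$ on $(0,\infty)$ and is dominated by $\hat\beta\hat R u'(\hat Y)$, which is integrable; continuity of $u'(\xi)$ and of the $\min/\max$ composition is immediate. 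Hence the function $\Phi(\xi)\coloneqq u'(\xi)-\min\{\max\{g(\xi),u'(a)\},u'(0)\}$ is continuous and strictly decreasing on $[0,a]$, so it has at most one zero.

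For existence I would evaluate $\Phi$ at the endpoints. At $\xi=a$: the left side is $u'(a)$ and the right side is $\min\{\max\{g(a),u'(a)\},u'(0)\}\ge u'(a)$ (since the outer $\min$ with $u'(0)$ only lowers toward $u'(a)$, and one checks $u'(a)\le u'(0)$), so $\Phi(a)\le 0$. At $\xi=0$: if $u'(0)<\infty$, the right side is $\min\{\max\{g(0),u'(a)\},u'(0)\}\le u'(0)$ while the left side is $u'(0)$, giving $\Phi(0)\ge 0$; if $u'(0)=\infty$, then $\Phi(0^+)=+\infty$ because $u'(\xi)\to\infty$ while the right side stays bounded by the finite value $\max\{g(0^+),u'(a)\}$ near $0$ (here I would note $g$ is finite and continuous, hence bounded near $0$, so the $\min$ with $u'(0)=\infty$ is inactive for small $\xi$). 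By the intermediate value theorem applied to the continuous function $\Phi$ (on $[0,a]$ in the first case, or using the limit at $0$ in the second), a zero exists; uniqueness is from strict monotonicity. Finally, when $u'(0)=\infty$, $\Phi(0^+)=+\infty>0=\Phi(\xi)$ forces the root $\xi$ to be strictly positive, giving the last claim.

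The main obstacle I expect is not any single step but the careful bookkeeping around the degenerate cases: ensuring $g$ is genuinely finite when $\hat Y$ can vanish on a null set and $u'(0)$ may be infinite, and correctly tracking which branch of the $\min/\max$ is active near each endpoint so that the sign of $\Phi$ at the endpoints is unambiguous. These require invoking the integrability hypotheses $\E_{z,\hat z}u'(\hat Y)<\infty$ and $\E_{z,\hat z}\hat\beta\hat R u'(\hat Y)<\infty$ precisely where stated, plus the observation that $u'$ being strictly decreasing gives $u'(a)<u'(0)$ so the truncation interval $[u'(a),u'(0)]$ is nondegenerate.
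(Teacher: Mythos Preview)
Your overall strategy---continuity, strict monotonicity, intermediate value theorem---is the same as the paper's. There is, however, a genuine gap: you have dropped the consumption function $c$ from the expectation. Equation~\eqref{eq:xi} involves $\E_z\hat\beta\hat R\,u'\bigl(c(\hat R(a-\xi)+\hat Y,\hat Z)\bigr)$, not $\E_z\hat\beta\hat R\,u'\bigl(\hat R(a-\xi)+\hat Y\bigr)$. This is not a harmless slip, because your integrability bound relies on the omission. Since any $c\in\cC$ satisfies $c(a',z')\le a'$ and $u'$ is decreasing, one has $u'\bigl(c(a',z')\bigr)\ge u'(a')$, so the inequality $u'\bigl(c(\hat R(a-\xi)+\hat Y,\hat Z)\bigr)\le u'(\hat Y)$ that your argument would need is false in general. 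The paper closes this gap by invoking the defining property~\eqref{eq:uprimediff} of $\cC$: with $M\coloneqq\sup_{(a',z')}\lvert u'(c(a',z'))-u'(a')\rvert<\infty$ one obtains $u'\bigl(c(\hat R(a-\xi)+\hat Y,\hat Z)\bigr)\le u'(\hat R(a-\xi)+\hat Y)+M\le u'(\hat Y)+M$, and then $\hat\beta\hat R\bigl[u'(\hat Y)+M\bigr]$ serves as the dominating function for both finiteness and the dominated convergence step. You never use~\eqref{eq:uprimediff}, which is exactly where the structure of $\cC$ beyond ``increasing and $0\le c\le a$'' enters the argument; once you restore $c$ and add this bound, the rest of your proof goes through as written.

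A secondary point: integrability of that dominating function requires $\E_{z,\hat z}\hat\beta\hat R\,u'(\hat Y)<\infty$. You invoke this as ``the hypothesis,'' but the lemma as stated only assumes $\E_{z,\hat z}\hat\beta\hat R<\infty$ and $\E_{z,\hat z}u'(\hat Y)<\infty$ separately; the paper's own proof quietly imports the joint moment condition from Assumption~\ref{asmp:spectral}\ref{item:EY}.
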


When Assumptions~\ref{asmp:Inada}, \ref{asmp:spectral} hold and $c\in \cC$, $a>0$, and $z\in \ZZ$, by Lemma~\ref{lem:xi} we can define a unique number $Tc(a,z)\coloneqq \xi\in [0,a]$ that solves \eqref{eq:xi}. We call the operator $T$ defined on $\cC$ the \emph{time iteration operator}.\footnote{The time iteration operator was introduced by \cite*{Coleman1990}. Several papers such as \cite*{DattaMirmanReffett2002}, \cite*{Rabault2002}, \cite*{MorandReffett2003}, \cite*{Kuhn2013}, and \cite*{LiStachurski2014} use this approach to establish existence of solutions and study theoretical properties.} Analogous to Theorem 2.2 of \cite*{MaStachurskiToda2020JET}, we obtain the following existence and uniqueness result.

\begin{thm}\label{thm:exist}
Suppose Assumptions~\ref{asmp:Inada} and \ref{asmp:spectral} hold. Then $T$ is a monotone self map on $\cC$ and admits a unique fixed point $c\in \cC$, which is also the unique solution to the income fluctuation problem \eqref{eq:IF}. Furthermore, starting from any $c_0\in \cC$ and letting $c_n=T^nc_0$, we have $c_n\to c$.
\end{thm}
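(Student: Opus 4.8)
The plan is to use the time-iteration (policy function iteration) route behind Theorem~2.2 of \cite{MaStachurskiToda2020JET}, checking that the mildly more general primitives \eqref{eq:betaRY} do not disrupt it. There are three pieces: (i) $T$ is a well-defined monotone self map on $(\cC,\rho)$; (ii) $T$ is a contraction in a metric equivalent to $\rho$, hence has a unique fixed point to which every orbit $T^nc_0$ converges; and (iii) that fixed point is the unique solution of \eqref{eq:IF}.

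For (i): Lemma~\ref{lem:xi} gives that $Tc(a,z)=\xi\in[0,a]$ is well defined. Write \eqref{eq:xi} as $u'(\xi)=\Phi_a(A_{c,a}(\xi))$, where $\Phi_a(x)=\min\set{\max\set{x,u'(a)},u'(0)}$ is nondecreasing, $1$-Lipschitz, and nonincreasing in $a$, and $A_{c,a}(\xi)=\E_z\hat\beta\hat Ru'(c(\hat R(a-\xi)+\hat Y,\hat Z))$ is nondecreasing in $\xi$ and nonincreasing in $a$ (using $\hat R\ge0$ and that $c$ is increasing, $u'$ decreasing). Since $u'(\xi)$ is strictly decreasing, raising $a$ moves the unique crossing point weakly up, so $Tc$ is increasing in $a$; continuity of $Tc$ in $a$ follows from joint continuity of $(a,\xi)\mapsto\Phi_a(A_{c,a}(\xi))$, which holds by dominated convergence with dominating function $\hat\beta\hat R(u'(\hat Y)+M_c)$, integrable by Assumption~\ref{asmp:spectral}\ref{item:Ebeta} and \ref{item:EY}, where $M_c<\infty$ is the supremum in \eqref{eq:uprimediff} for $c$. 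The one genuinely new verification is that $Tc$ again satisfies \eqref{eq:uprimediff}: from $0\le Tc\le a$ we get $u'(Tc(a,z))\ge u'(a)$, and using $u'(c(x,\hat z))\le u'(x)+M_c$ and $\hat R(a-Tc)+\hat Y\ge\hat Y$ in \eqref{eq:xi} we get $u'(Tc(a,z))\le\E_z\hat\beta\hat Ru'(\hat Y)+M_c(K(1)\mathbf 1)(z)$, finite and independent of $a$; hence $\sup_{a,z}\abs{u'(Tc(a,z))-u'(a)}<\infty$. Monotonicity of $T$ is immediate: $c\le d$ makes $u'(c(\cdot))\ge u'(d(\cdot))$, so the right-hand side of \eqref{eq:xi} only grows, forcing a smaller solution, that is, $Tc\le Td$.

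For (ii): fix $c,d\in\cC$ and put $g(z)=\sup_{a>0}\abs{u'(c(a,z))-u'(d(a,z))}$, so $\rho(c,d)=\max_zg(z)$. Comparing \eqref{eq:xi} for $c$ and $d$ at the larger of $Tc(a,z),Td(a,z)$, and using monotonicity of $A_{c,a}$ in $\xi$, the bound $\abs{A_{c,a}(\xi)-A_{d,a}(\xi)}\le\E_z\hat\beta\hat Rg(\hat Z)=(K(1)g)(z)$, and the monotone $1$-Lipschitz clip $\Phi_a$, one obtains the componentwise estimate $\sup_{a>0}\abs{u'(Tc(a,z))-u'(Td(a,z))}\le(K(1)g)(z)$. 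Because $r(K(1))<1$ (Assumption~\ref{asmp:spectral}\ref{item:rbeta}), pick $\lambda\in(r(K(1)),1)$ and $v=\sum_{n\ge0}\lambda^{-n}K(1)^n\mathbf 1$, a strictly positive vector with $K(1)v\le\lambda v$; the equivalent metric $\rho_v(c,d)=\max_z g(z)/v(z)$ then satisfies $\rho_v(Tc,Td)\le\lambda\rho_v(c,d)$. Since $(\cC,\rho)$, hence $(\cC,\rho_v)$, is complete, the Banach fixed point theorem yields a unique fixed point $c\in\cC$ with $T^nc_0\to c$ for all $c_0\in\cC$.

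For (iii): the fixed point generates a feasible plan (by construction $0\le c_t\le a_t$ and \eqref{eq:IF_budget} holds) satisfying the Euler equation in the clipped form motivating \eqref{eq:xi}; optimality follows from the standard sufficiency argument for concave dynamic programs, in which concavity of $u$ plus the Euler equation produce a telescoping inequality against any competing feasible plan, and $r(K(0))<1$, $r(K(1))<1$ together with Assumption~\ref{asmp:spectral}\ref{item:EY} are what is needed to push the limiting (transversality) term to a nonpositive limit, with strict concavity of $u$ giving uniqueness of the optimum. I expect the main obstacle to be step (ii): since $\E_z\hat\beta\hat R$ may exceed $1$, one cannot contract in $\rho$ directly and must route through the vectorial bound by $K(1)$ and the positive sub-eigenvector $v$, which is exactly where $r(K(1))<1$---not a pointwise impatience condition---does the work; a lesser subtlety is checking in (i) that $Tc$ inherits \eqref{eq:uprimediff} and is jointly continuous, which uses the integrability hypotheses of Assumption~\ref{asmp:spectral}.
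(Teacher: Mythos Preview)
Your proof is correct and follows essentially the same architecture as the paper's: verify that $T$ (equivalently $\tilde T$ acting on marginal utilities) is a monotone self map, establish the componentwise bound $g\mapsto K(1)g$ on $\abs{u'(Tc)-u'(Td)}$, and then invoke Euler-plus-transversality sufficiency for optimality. The one minor variation is that you exploit $r(K(1))<1$ via a strictly positive sub-eigenvector $v$ to make $T$ itself a contraction in the equivalent weighted metric $\rho_v$, whereas the paper iterates the same componentwise bound and uses the Gelfand spectral radius formula to conclude that $T^k$ is a contraction in the unweighted metric for some $k$; both devices are standard and interchangeable.
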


The proofs of Lemma~\ref{lem:xi} and Theorem~\ref{thm:exist} are relegated to Appendix~\ref{sec:IF}. Theorem~\ref{thm:exist} implies that the unique solution to the income fluctuation problem \eqref{eq:IF} can be computed by policy function iteration starting from any candidate consumption function $c_0\in \cC$; there are many such functions, for instance $c_0(a,z)=a$.

%See \citet*[Theorem 2.2]{MaStachurskiToda2020JET}.\footnote{In addition to Assumptions~\ref{asmp:Inada} and \ref{asmp:spectral}, \cite*{MaStachurskiToda2020JET} assume that the transition probability matrix $P$ is irreducible. However, irreducibility is required only for their ergodicity result, not for existence and uniqueness of a solution.}

%Exploiting policy function iteration, \cite*{MaStachurskiToda2020JET} show several properties such as
%\begin{enumerate*}[(i)]
%\item consumption and savings are increasing in wealth and
%\item consumption is increasing in income.
%\end{enumerate*}

\subsection{Asymptotic linearity of consumption functions}

To study the asymptotic behavior of consumption, we strengthen Assumption~\ref{asmp:Inada} as follows.

\begin{oneshot}[Assumption~\ref{asmp:Inada}']
The utility function exhibits constant relative risk aversion $\gamma>0$: we have
\begin{equation}
u(c)=\begin{cases}
\frac{c^{1-\gamma}}{1-\gamma}, & (\gamma\neq 1)\\
\log c. & (\gamma=1)
\end{cases}\label{eq:CRRA}
\end{equation}
Furthermore, letting $K$ be as in \eqref{eq:Ktheta}, the matrix $K(1-\gamma)$ is finite.% $\E_{z,\hat{z}}\hat{\beta}\hat{R}^{1-\gamma}<\infty$ for all $(z,\hat{z})\in \ZZ^2$.
\footnote{\label{fn:convention}We adopt the convention $\beta R^{1-\gamma}=(\beta R)R^{-\gamma}$ and $0\cdot \infty=0$. In particular, $\beta R^{1-\gamma}=0$ whenever $R=0$, even if $\beta>0$ and $\gamma>1$. %Then $\E_z \beta R^{1-\gamma}\in [0,\infty]$ is well-defined even if $\gamma>1$ and $(\beta,R)=(0,0)$ with positive probability.
This convention is necessary for avoiding tedious case-by-case analysis in the statements and proofs of theorems.}
\end{oneshot}

Theorem~\ref{thm:linear} below, which is our main theoretical result, shows that when the utility function exhibits constant relative risk aversion, the consumption functions are asymptotically linear and characterizes the asymptotic MPCs. To avoid overwhelming the reader with notation and technicalities, we maintain the additional condition that $K(1-\gamma)$ is finite as in Assumption~\ref{asmp:Inada}', although this condition can be dropped. Furthermore, Theorem~\ref{thm:linear} only provides a necessary and almost sufficient condition for the asymptotic MPCs to be zero. We provide a complete characterization in Theorem~\ref{thm:complete} below.

\begin{thm}[Asymptotic linearity]\label{thm:linear}
Suppose Assumptions~\ref{asmp:Inada}' and \ref{asmp:spectral} hold and let $K$ be as in \eqref{eq:Ktheta}. Then the following statements are true:
\begin{enumerate}
\item\label{item:linear1} If $r(K(1-\gamma))<1$, then for all $z\in \ZZ$ we have
\begin{equation}
\lim_{a\to\infty}\frac{c(a,z)}{a}\eqqcolon \bar{c}(z)>0,\label{eq:cbar}
\end{equation}
where $\bar{c}(z)=x^*(z)^{-1/\gamma}$ and $x^*=(x^*(z))_{z=1}^Z\in \R_+^Z$ is the unique finite solution to
the system of equations
\begin{equation}
x(z)=(Fx)(z)\coloneqq \left(1+(K(1-\gamma)x)(z)^{1/\gamma}\right)^\gamma, \quad z=1,\dots,Z. \label{eq:fixedpoint}
\end{equation}
\item\label{item:linear2} If $r(K(1-\gamma))\ge 1$ and $K(1-\gamma)$ is irreducible, then for all $z\in \ZZ$ we have
$$\lim_{a\to\infty}\frac{c(a,z)}{a}=0.$$
\end{enumerate}
\end{thm}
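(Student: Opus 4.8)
The plan is to sandwich the optimal consumption function $c$ --- the fixed point of the time iteration operator $T$ furnished by Theorem~\ref{thm:exist} --- between a linear lower bound and an asymptotically linear upper bound with matching slope vectors, the slopes being identified through the auxiliary map $F$ in \eqref{eq:fixedpoint}. As a preliminary I would study $F$ as a self map of $\R_+^Z$: it is increasing, satisfies $F\mathbf{0}=\mathbf{1}$ (the all-ones vector), is concave when $\gamma\ge 1$ and convex when $\gamma\le 1$ (the scalar map $s\mapsto(1+s^{1/\gamma})^\gamma$ has a second derivative whose sign is that of $1-\gamma$), and is asymptotically linear at infinity in the sense that $F(tx)/t\to K(1-\gamma)x$ as $t\to\infty$ for each $x$ (indeed $Fx-\mathbf{1}$ is dominated by $K(1-\gamma)x$ when $\gamma\le 1$ and dominates it when $\gamma\ge 1$, by sub/superadditivity of $s\mapsto s^\gamma$). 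Following the fixed point theory of monotone convex maps in \cite*{Du1990}, the iterates $F^n\mathbf{1}$ increase to a finite limit $x^*$ if and only if $r(K(1-\gamma))<1$, in which case $x^*$ is the unique finite fixed point of $F$ (uniqueness by a Krasnoselskii-type argument using concavity and $F\mathbf{0}>\mathbf{0}$ when $\gamma\ge 1$, and by the estimate $\abs{x^*-y^*}\le K(1-\gamma)\abs{x^*-y^*}$ with $r(K(1-\gamma))<1$ when $\gamma\le 1$); and if $r(K(1-\gamma))\ge 1$ with $K(1-\gamma)$ irreducible, then $(F^n\mathbf{1})(z)\to\infty$ for every $z$ (a finite limit would give $K(1-\gamma)x^*\le x^*$ with a strict coordinate, forcing $r(K(1-\gamma))<1$, and irreducibility propagates the blow-up to all states). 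Put $m^*\coloneqq(x^*)^{-1/\gamma}$ (coordinatewise power); since $x^*=Fx^*\ge F\mathbf{0}=\mathbf{1}$, we have $m^*\in(0,1]^Z$.

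For the bounds, take $c_0(a,z)=a\in\cC$ and $c_n\coloneqq T^nc_0$, so that $c_n\to c$ and, $T$ being monotone with $Tc=c\le c_0$, also $c\le c_n$ for every $n$. Suppose first $r(K(1-\gamma))<1$. I would show that $T$ maps the nonempty set $\set{d\in\cC: d(a,z)\ge m^*(z)a\ \text{for all}\ (a,z)}$ into itself: substituting $d(b,\hat z)\ge m^*(\hat z)b$ and $\hat Y\ge 0$ into the Euler equation \eqref{eq:xi} at $\xi=Td(a,z)$ and using $x^*=Fx^*$ (equivalently $1+(K(1-\gamma)x^*)(z)^{1/\gamma}=x^*(z)^{1/\gamma}$), one gets $Td(a,z)\ge m^*(z)a$, whether or not the constraints in \eqref{eq:xi} bind. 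Since $c_0$ lies in this set, so does its pointwise limit $c$; hence $c(a,z)\ge m^*(z)a$ and $\liminf_{a\to\infty}c(a,z)/a\ge m^*(z)$.

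For the upper bound, the crux is the lemma: if $d\in\cC$ and $\lim_{b\to\infty}d(b,z)/b=m(z)\in(0,1]$ for every $z$, then $\lim_{a\to\infty}Td(a,z)/a=\Phi(m)(z)$, where $\Phi(m)(z)\coloneqq\big(1+(K(1-\gamma)m^{-\gamma})(z)^{1/\gamma}\big)^{-1}$; equivalently, in the variable $x=m^{-\gamma}$ the induced slope map is exactly $x\mapsto Fx$. Granting this, induction from $c_0$ (slope $\mathbf{1}$) shows $c_n$ is asymptotically linear with slope $(F^n\mathbf{1})^{-1/\gamma}$, so $\limsup_{a\to\infty}c(a,z)/a\le(F^n\mathbf{1})(z)^{-1/\gamma}$ for each $n$; letting $n\to\infty$ and invoking the preliminary step, $\limsup_{a\to\infty}c(a,z)/a\le m^*(z)=x^*(z)^{-1/\gamma}$ when $r(K(1-\gamma))<1$, and $\limsup_{a\to\infty}c(a,z)/a\le 0$, hence the limit is $0$, when $r(K(1-\gamma))\ge 1$ and $K(1-\gamma)$ is irreducible. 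Together with the lower bound this gives \eqref{eq:cbar}, with $\bar c(z)=x^*(z)^{-1/\gamma}$ and $x^*$ the unique finite solution of \eqref{eq:fixedpoint}, completing both statements.

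I expect the slope lemma stated above to be the main obstacle. Its proof requires: (i) showing the borrowing and nonnegativity constraints in \eqref{eq:xi} are slack for all large $a$, except exactly in states $z$ with $K_{z\hat z}(1-\gamma)=0$ for all $\hat z$ (there $Td(a,z)=a$ for every $a$ and $\Phi(m)(z)=1$, consistent with the claim); (ii) passing $a\to\infty$ inside $\E_z[\hat\beta\hat R u'(d(\hat R(a-Td(a,z))+\hat Y,\hat Z))]$ after multiplying by $a^\gamma$ --- a dominated-convergence step, the dominating function built from the asymptotic linearity of $d$ and the finiteness conditions of Assumptions~\ref{asmp:Inada}' and \ref{asmp:spectral}, with the convention $0\cdot\infty=0$ absorbing $\set{\hat R=0}$ and extra care in the boundary case $m(z)=1$; and (iii) deducing that every subsequential limit $\ell$ of $Td(a,z)/a$ solves $\ell^{-\gamma}=(1-\ell)^{-\gamma}(K(1-\gamma)m^{-\gamma})(z)$, an equation with the unique root $\ell=\Phi(m)(z)$ in $(0,1)$, so the full limit exists. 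Everything else --- the monotonicity and self-map properties of $T$ from Theorem~\ref{thm:exist}, the algebra relating $\Phi$ to $F$, and the Perron--Frobenius facts invoked above --- is routine.
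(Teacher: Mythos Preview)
The overall architecture of your proof --- sandwiching $c$ between a linear lower bound (via invariance of $\{d\ge m^*a\}$ under $T$) and a sequence of asymptotically linear upper bounds coming from the iterates $c_n=T^nc_0$, with slopes governed by the map $F$ --- is exactly the paper's strategy. Your treatment of the fixed-point theory of $F$ and the lower-bound argument also match the paper's Propositions~\ref{prop:MPC_seq} and~\ref{prop:LB}.

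The one substantive difference is your ``slope lemma'' for the upper bound. You assert that if $\lim_{b\to\infty}d(b,z)/b=m(z)\in(0,1]$ then the \emph{full} limit $\lim_{a\to\infty}Td(a,z)/a$ exists and equals $\Phi(m)(z)$, to be proved by dominated convergence. The paper instead proves only the one-sided inequality
\[
\limsup_{a\to\infty}\frac{Td(a,z)}{a}\le (Fx)(z)^{-1/\gamma}
\quad\text{whenever}\quad
\limsup_{a\to\infty}\frac{d(a,z)}{a}\le x(z)^{-1/\gamma}
\]
(Proposition~\ref{prop:UB}), via \emph{Fatou's lemma}. This weaker statement is all that is needed: since $c\le c_n$, the $\limsup$ bound on $c_n$ transfers directly to $c$, and the lower bound then pins down the limit. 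Fatou requires only nonnegativity of the integrand $\hat\beta\hat R\lambda_n^{-\gamma}$, so no dominating function is needed --- which neatly sidesteps exactly the obstacle you flag in your step~(ii). Your DCT route can in fact be pushed through (membership in $\cC$ combined with the asymptotic slope $m(z)>0$ yields a \emph{global} linear lower bound $d(b,\hat z)\ge c_0 b$, from which a dominating multiple of $\hat\beta\hat R^{1-\gamma}$ follows once you have established that the subsequential limit satisfies $\ell<1$), but this is more labor and proves more than is required.
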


The proof of Theorem~\ref{thm:linear} is technical and relegated to Appendix~\ref{sec:proof}. Here we heuristically discuss the intuition for why we would expect the conclusion of Theorem~\ref{thm:linear} to hold. Suppose the limit \eqref{eq:cbar} exists. Assuming that the borrowing constraint does not bind, the Euler equation \eqref{eq:xi} implies
$$u'(\xi)=\E_z\hat{\beta}\hat{R}u'(c(\hat{R}(a-\xi)+\hat{Y},\hat{Z})),$$
where $\xi=c(a,z)$. Setting $u'(c)=c^{-\gamma}$ as in Assumption~\ref{asmp:Inada}', setting $c(a,z)=\bar{c}(z)a$ motivated by \eqref{eq:cbar}, multiplying both sides by $a^\gamma$, letting $a\to\infty$, and interchanging expectations and limits, it must be
\begin{equation}
\bar{c}(z)^{-\gamma}=\E_z\hat{\beta}\hat{R}^{1-\gamma}\bar{c}(\hat{Z})^{-\gamma}(1-\bar{c}(z))^{-\gamma}.\label{eq:Euler_asym}
\end{equation}
Multiplying both sides of \eqref{eq:Euler_asym} by $(1-\bar{c}(z))^\gamma$ and setting $x(z)=\bar{c}(z)^{-\gamma}$, after some algebra we obtain
\begin{equation}
x(z)=\left(1+\left(\E_z\hat{\beta}\hat{R}^{1-\gamma}x(\hat{Z})\right)^{1/\gamma}\right)^\gamma,\quad  z=1,\dots,Z.\label{eq:xzeq}
\end{equation}
Noting that $\hat{\beta},\hat{R}$ depend only on $Z$, $\hat{Z}$, and the \iid innovation $\hat{\zeta}$, we have
$$\E_z\hat{\beta}\hat{R}^{1-\gamma}x(\hat{Z})=\sum_{\hat{z}=1}^ZP_{z\hat{z}}\E_{z,\hat{z}}\hat{\beta}\hat{R}^{1-\gamma}x(\hat{z})=(K(1-\gamma)x)(z),$$
where we have used the definition of $K$ in \eqref{eq:Ktheta}. Therefore we can rewrite \eqref{eq:xzeq} as \eqref{eq:fixedpoint}. This discussion motivates the fixed point equation \eqref{eq:fixedpoint}.

Next, we discuss the intuition for the spectral condition $r(K(1-\gamma))\gtrless 1$. When the entries of the vector $x\in \R_+^Z$ are large, since $K\coloneqq K(1-\gamma)$ is a nonnegative matrix, it follows from the definition of $F$ in \eqref{eq:fixedpoint} that
$$Fx\approx Kx.$$
Since for large $x$ the function $x\mapsto Fx$ is almost linear, whether iterating $x\mapsto Fx$ converges or not depends on whether the largest eigenvalue of the coefficient matrix $K$ is less or greater than 1. When $r(K)<1$, $F$ in \eqref{eq:fixedpoint} behaves like a contraction and we would expect it to have a unique fixed point. When $r(K)\ge 1$, because $F$ is monotonic, we would expect the iteration of $x\mapsto Fx$ to diverge to infinity, and hence $\bar{c}(z)=x(z)^{-1/\gamma}$ to converge to 0.

Theorem~\ref{thm:linear} roughly says two things: with homothetic preferences,
\begin{enumerate*}[(i)]
\item consumption functions are asymptotically linear, and
\item the asymptotic MPCs can be zero.
\end{enumerate*}
The first point is not surprising based on the intuition of scale invariance with homothetic preferences, although we are not aware of a rigorous proof in a general setting.\footnote{Proposition 5 of \cite*{BenhabibBisinZhu2015} shows \eqref{eq:cbar} in the special case when $\beta<1$ is constant, $R,Y$ are \iid and mutually independent, have bounded supports in $(0,\infty)$, and satisfy $\E \beta R<1$ and $\E \beta R^{1-\gamma}<1$. \cite*{Carroll2020} provides a heuristic discussion similar to the one presented after Theorem~\ref{thm:linear} in the special case with constant $\beta<1$ and $R>0$. } The second point is nontrivial and surprising, and it depends on whether the condition
\begin{equation}
r(K(1-\gamma))<1 \label{eq:RIC}
\end{equation}
holds or not. A condition of the form $\E_z\hat{\beta}\hat{R}^{1-\gamma}<1$, which \cite*{Carroll2020} calls the ``return impatience condition'' and implies \eqref{eq:RIC}, is often required for the existence of a solution in dynamic programming problems with homothetic preferences.\footnote{See, for example, the discussion on p.~244 of \cite*{samuelson1969}, Assumption 1c of \cite*{alvarez-stokey1998}, Equation (9) of \cite*{krebs2006}, Equation (18) of \cite*{Toda2014JET}, Assumption 1(iii) of \cite*{BenhabibBisinZhu2015}, Equation (3) of \cite*{Toda2019JME}, or Equation (17) of \cite*{Carroll2020}.} The following proposition explains why this condition has often been assumed in the literature.

\begin{prop}\label{prop:noinc}
Suppose Assumption~\ref{asmp:Inada}' holds and $\gamma\neq 1$. Then the optimal consumption-saving problem \eqref{eq:IF} with zero income ($Y\equiv 0$) has a solution (with finite lifetime utility) if and only if \eqref{eq:RIC} holds. Under this condition, the optimal value and consumption functions are
\begin{subequations}\label{eq:noinc}
\begin{align}
V(a,z)&=\frac{x^*(z)}{1-\gamma}a^{1-\gamma},\label{eq:vnoinc}\\
c(a,z)&=x^*(z)^{-1/\gamma}a,\label{eq:cnoinc}
\end{align}
\end{subequations}
where $x^*\in \R_+^Z$ is the unique finite solution to \eqref{eq:fixedpoint}.
\end{prop}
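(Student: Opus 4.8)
The plan is to prove the proposition by a guess-and-verify argument that exploits the scale invariance of the zero-income problem under CRRA utility. Since the feasible set scales linearly in initial wealth and the objective is homogeneous of degree $1-\gamma$, one expects $V(a,z)=\frac{v(z)}{1-\gamma}a^{1-\gamma}$ for some $v\in\R_+^Z$. Substituting this guess and $u(c)=c^{1-\gamma}/(1-\gamma)$ into the Bellman equation $V(a,z)=\max_{0\le c\le a}\{u(c)+\E_z[\hat\beta V(\hat R(a-c),\hat Z)]\}$, the right-hand side collapses to $\max_{0\le c\le a}\frac{1}{1-\gamma}(c^{1-\gamma}+s\,(a-c)^{1-\gamma})$ with $s\coloneqq(K(1-\gamma)v)(z)\in[0,\infty)$ (finite by Assumption~\ref{asmp:Inada}'), whose maximizer is $c=a/(1+s^{1/\gamma})$ with value $\frac{(1+s^{1/\gamma})^\gamma}{1-\gamma}a^{1-\gamma}$. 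Hence the guess is internally consistent exactly when $v(z)=(1+(K(1-\gamma)v)(z)^{1/\gamma})^\gamma=(Fv)(z)$, i.e.\ $v$ is a finite fixed point of $F$; in that case $v\ge\mathbf 1$ componentwise and, writing $v=x^\ast$, the maximizer is $c=x^\ast(z)^{-1/\gamma}a$ as in \eqref{eq:cnoinc}. Existence and uniqueness of such a fixed point when \eqref{eq:RIC} holds is exactly what the proof of Theorem~\ref{thm:linear} delivers, via the fixed point theory of monotone concave/convex maps \citep*{Du1990}. Conversely, when $r(K(1-\gamma))\ge 1$ there is no finite fixed point: any finite $x=Fx$ satisfies $x(z)^{1/\gamma}=1+(K(1-\gamma)x)(z)^{1/\gamma}>(K(1-\gamma)x)(z)^{1/\gamma}$, hence $x>K(1-\gamma)x$ componentwise, so pairing with a nonnegative left Perron eigenvector $w\ne 0$ of $K(1-\gamma)$ gives $w^\top x>w^\top K(1-\gamma)x=r(K(1-\gamma))\,w^\top x\ge w^\top x$ (using $x>0\Rightarrow w^\top x>0$), a contradiction.

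For the ``if'' direction, assume \eqref{eq:RIC}, let $x^\ast$ be the unique finite fixed point of $F$, and set $V(a,z)=\frac{x^\ast(z)}{1-\gamma}a^{1-\gamma}$; by the computation above $V$ solves the Bellman equation with maximizer $c(a,z)=x^\ast(z)^{-1/\gamma}a$. To conclude that $V$ is the value function and that this policy is optimal, I would run the standard verification argument. Iterating the Bellman equality along the candidate policy and taking time-$0$ conditional expectations telescopes to $V(a,z)=\E_0\sum_{t=0}^T(\prod_{i=0}^t\beta_i)u(c_t)+\E_0[(\prod_{i=0}^{T+1}\beta_i)V(a_{T+1},Z_{T+1})]$. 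Under the candidate policy $a_{t+1}=R_{t+1}\,g(Z_t)\,a_t$ with $g(z)\coloneqq 1-x^\ast(z)^{-1/\gamma}\in[0,1)$, so the transversality term equals $\frac{a^{1-\gamma}}{1-\gamma}(M^{T+1}x^\ast)(z)$ with $M\coloneqq\operatorname{diag}(g(z)^{1-\gamma})\,K(1-\gamma)$; since $x^\ast=Fx^\ast$ gives $Mx^\ast=\operatorname{diag}(g)\,x^\ast\le(\max_z g(z))\,x^\ast$ with $\max_z g(z)<1$, iteration yields $0\le M^{T+1}x^\ast\le(\max_z g(z))^{T+1}x^\ast\to 0$, so the transversality term vanishes and the candidate attains $V(a,z)$. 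For an arbitrary feasible plan the Bellman \emph{inequality} gives the same identity with ``$\ge$''; because $V$ has the constant sign of $1-\gamma$ the transversality term is then dropped in the favorable direction, and letting $T\to\infty$ (monotone convergence applies since $u$ has constant sign) yields $V(a,z)\ge\E_0\sum_{t=0}^\infty(\prod_{i=0}^t\beta_i)u(c_t)$. Together these establish \eqref{eq:noinc}.

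For the ``only if'' direction, suppose the zero-income problem admits a solution with finite lifetime utility, so its value function $V$ is finite-valued and attained. Scale invariance of the feasible correspondence and $(1-\gamma)$-homogeneity of the objective force $V(a,z)=V(1,z)\,a^{1-\gamma}$ with $v(z)\coloneqq(1-\gamma)V(1,z)>0$ and finite; invoking the principle of optimality (the problem is a well-posed positive dynamic program when $\gamma<1$ and a negative one when $\gamma>1$), $V$ satisfies the Bellman equation, whence the reduction of the first paragraph forces $v=Fv$, and by the eigenvector argument above this requires $r(K(1-\gamma))<1$. I expect the main obstacle to be making the verification step in the ``if'' direction fully rigorous --- in particular, combining the transversality bound with the limit interchange uniformly across the regimes $\gamma<1$ and $\gamma>1$ --- together with the (standard but worth stating carefully) appeal to the dynamic programming principle in the ``only if'' direction; the algebra showing that $V$ solves the Bellman equation and the Perron--Frobenius argument ruling out finite fixed points of $F$ when $r(K(1-\gamma))\ge 1$ are routine by comparison.
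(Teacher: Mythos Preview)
Your proposal is correct and follows essentially the same route as the paper's (sketched) proof: exploit homogeneity to reduce the Bellman equation to the fixed-point equation \eqref{eq:fixedpoint}, invoke the existence/uniqueness result for $F$ (the paper's Proposition~\ref{prop:MPC_seq}, whose Step~1 in Lemma~\ref{lem:xn} is exactly your Perron--Frobenius argument), and then verify transversality along the candidate policy; your computation $Mx^*=\operatorname{diag}(g)\,x^*$ is correct and, together with $M\ge 0$, yields $M^{T+1}x^*\le(\max_z g(z))^{T+1}x^*\to 0$ as claimed. One caveat: the assertion that for an \emph{arbitrary} feasible plan the transversality term can be ``dropped in the favorable direction'' because $V$ has the sign of $1-\gamma$ is valid only for $\gamma<1$; when $\gamma>1$ the term is nonpositive, so dropping it from the right-hand side of the iterated Bellman inequality goes the wrong way, and a separate argument is needed---the paper likewise does not spell this out and simply defers to \cite*{Toda2019JME}, which is consistent with your own flagging of this as the main remaining obstacle.
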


Proposition~\ref{prop:noinc} implies that for a solution to the income fluctuation problem \eqref{eq:IF} to exist, the condition \eqref{eq:RIC} may be violated only if income $Y$ can be positive. In fact, the Inada condition $u'(0)=\infty$ for the CRRA utility and the condition $\E_{z,\hat{z}} u'(\hat{Y})<\infty$ in Assumption~\ref{asmp:spectral}\ref{item:EY} imply that $Y>0$ almost surely. Contrary to the intuition from the zero income model, Theorem~\ref{thm:exist} above shows that Assumptions~\ref{asmp:Inada} and \ref{asmp:spectral} are sufficient for the existence of a solution to general income fluctuation problems, and no conditions on risk aversion (including \eqref{eq:RIC}) are necessary.

As discussed above, Theorem~\ref{thm:linear} does not cover all possible cases as the matrix $K(1-\gamma)$ need not be finite or irreducible in particular applications. We can generalize Theorem~\ref{thm:linear} to cover all possible cases at the cost of making the notation slightly more complicated. To this end, let $K=K(1-\gamma)$ be as in \eqref{eq:Ktheta}, where each entry $K_{z\hat{z}}(1-\gamma)=P_{z\hat{z}}\E_{z,\hat{z}}\hat{\beta}\hat{R}^{1-\gamma}$ could be infinite (recall the convention in Footnote~\ref{fn:convention}). By relabeling the states $z=1,\dots,Z$ if necessary, without loss of generality we may assume that $K$ is block upper triangular,
\begin{equation}
K=\begin{bmatrix}
K_1 & \cdots & *\\
\vdots & \ddots & \vdots \\
0 & \cdots & K_J
\end{bmatrix},\label{eq:K}
\end{equation}
where each diagonal block $K_j$ is irreducible.\footnote{Recall that a square matrix $A$ is reducible if there exists a permutation matrix $P$ such that $P^\top AP$ is block upper triangular with at least two diagonal blocks. Matrices that are not reducible are called irreducible. Hence by induction a decomposition of the form \eqref{eq:K} is always possible. By definition scalars ($1\times 1$ matrices, including zero) are irreducible, so some $K_j$ in \eqref{eq:K} can be zero if it is $1\times 1$.} Partition $\ZZ$ as $\ZZ=\ZZ_1\cup \dots \cup \ZZ_J$ accordingly. Then we have the following complete characterization.

\begin{thm}[Complete characterization of asymptotic MPCs]\label{thm:complete}
Suppose Assumption~\ref{asmp:spectral} holds and the utility function exhibits constant relative risk aversion $\gamma>0$. Express $K=K(1-\gamma)$ as in \eqref{eq:K}. Define the sequence $\set{x_n}_{n=0}^\infty \in [0,\infty]^Z$ by $x_0=1$ and $x_n=Fx_{n-1}$, where $F$ is as in \eqref{eq:fixedpoint} and we apply the convention $0\cdot \infty=0$. Then $\set{x_n}$ monotonically converges to $x^*\in [1,\infty]^Z$, and the limit \eqref{eq:cbar} holds with $\bar{c}(z)=x^*(z)^{-1/\gamma}\in [0,1]$.

Furthermore, $\bar{c}(z)=0$ if and only if there exist $j$, $\hat{z}\in \ZZ_j$, and $m\in \N$ such that $K^m_{z\hat{z}}>0$ and $r(K_j)\ge 1$, where $r(K_j)=\infty$ if some entry of $K_j$ is infinite.
\end{thm}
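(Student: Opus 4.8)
The plan is to split the statement into three assertions --- (A) the iterates $x_n=F^n1$ increase to a fixed point $x^*\in[1,\infty]^Z$ of $F$; (B) $\lim_{a\to\infty}c(a,z)/a=x^*(z)^{-1/\gamma}$ for every $z$; (C) $x^*(z)=\infty$ precisely when $z$ reaches, along the directed graph of $K=K(1-\gamma)$, some diagonal block $\ZZ_j$ with $r(K_j)\ge1$ --- and to prove them in this order. Part (A) is quick: on $[0,\infty]^Z$ with the convention $0\cdot\infty=0$ the map $F$ is monotone entrywise (since $K\ge0$ and $t\mapsto(1+t^{1/\gamma})^\gamma$ is increasing) and commutes with increasing limits, and because $x_1=F1\ge1=x_0$ monotonicity yields $x_n\uparrow x^*=Fx^*\ge1$, so $x^*(z)^{-1/\gamma}\in[0,1]$. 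The real work is in (B); granting (A) and (B), part (C) is a combinatorial and Perron--Frobenius argument.

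For (B) I would run the upper/lower envelope scheme behind Theorem~\ref{thm:linear}. For the upper envelope, $T$ is monotone and $c\le a=c_0(a,z)$, so $c\le T^nc_0$ for all $n$; applying the Euler recursion \eqref{eq:xi} that defines $T$ to a candidate whose slope at infinity is $x^{-1/\gamma}$ returns one with slope $(Fx)^{-1/\gamma}$, so that $\limsup_{a\to\infty}c(a,z)/a\le x_n(z)^{-1/\gamma}$ for every $n$, hence $\le x^*(z)^{-1/\gamma}$. For the matching lower bound I would restrict the candidate space to functions dominating a linear function of slope $x^*(z)^{-1/\gamma}$ on the states $z$ with $x^*(z)<\infty$, where the construction in the proof of Theorem~\ref{thm:linear} carries over once $x^*$ is known to solve \eqref{eq:fixedpoint}; for states with $x^*(z)=\infty$ the upper envelope already forces $\lim_{a\to\infty}c(a,z)/a=0=\infty^{-1/\gamma}$. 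Two features go beyond Theorem~\ref{thm:linear}: $K$ may be reducible, which is handled by carrying out the lower-envelope construction block by block (blocks with $r(K_j)\ge1$ contributing slope zero), and $K$ may have infinite entries, which is handled by truncating the return so that $K$ becomes finite and passing to the monotone limit. This truncation-and-limit step, and checking it is consistent with the definition \eqref{eq:IF} of the income fluctuation problem, is the main obstacle: Theorem~\ref{thm:linear} as stated covers neither reducible nor infinite $K$, so the envelope estimates have to be genuinely re-derived.

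For (C), granting (A) and (B) it suffices to identify $\{z:x^*(z)=\infty\}$. The structural observation I would rely on is that the set $\mathcal G$ of states reachable from $z$ in the directed graph of $K$ (including $z$ itself) is a union of entire blocks $\ZZ_j$: each $K_j$ is irreducible, so reaching one state of $\ZZ_j$ forces reaching all of it, and $\mathcal G$ is forward-closed; hence $x^*$ restricted to $\mathcal G$ satisfies the fixed-point equation for the principal submatrix $K|_{\mathcal G}$, itself block upper triangular with diagonal blocks $\{K_j:\ZZ_j\subseteq\mathcal G\}$. If $z$ reaches no block with $r(K_j)\ge1$, then every such $K_j$ is finite with $r(K_j)<1$, so $r(K|_{\mathcal G})=\max_j r(K_j)<1$; choosing $v=(I-K|_{\mathcal G})^{-1}w>0$ for some $w>0$ gives $K|_{\mathcal G}v=v-w<v$ componentwise, and after rescaling so $v\ge1$ one checks $F_{\mathcal G}(tv)\le tv$ for all large $t$, whence $x^*|_{\mathcal G}=\lim_nF_{\mathcal G}^n1\le tv<\infty$ and $x^*(z)<\infty$. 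Conversely, if $\ZZ_j\subseteq\mathcal G$ with $r(K_j)\ge1$ (or $K_j$ has an infinite entry), then $x^*|_{\ZZ_j}$ is a supersolution of $F_j$ lying above $1$ and hence dominates $\lim_nF_j^n1$; that limit is $+\infty$ on all of $\ZZ_j$, because a finite limit would be a fixed point $y\ge1$ of $F_j$ with $y>K_jy$ componentwise (using $(1+t^{1/\gamma})^\gamma>t$), which by Perron--Frobenius forces $r(K_j)<1$, while if $K_j$ has an infinite entry then $F_j1$ is already infinite in some coordinate --- and irreducibility of $K_j$ then spreads the value $+\infty$ over all of $\ZZ_j$. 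Finally, since $x^*=Fx^*$, $x^*(\hat z)=\infty$ together with $K_{z'\hat z}>0$ forces $x^*(z')=\infty$, so propagating along a $K$-path from $z$ into $\ZZ_j$ gives $x^*(z)=\infty$. This combinatorial step is routine once (A) and (B) are available; the only points needing care are the reduction to whole blocks and the Perron--Frobenius step.
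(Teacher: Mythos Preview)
Your plan (A)--(B)--(C) is exactly the paper's route: monotone convergence of $x_n=F^n1$, a sandwich $x^*(z)^{-1/\gamma}\le\liminf c/a\le\limsup c/a\le x^*(z)^{-1/\gamma}$ built from Propositions~\ref{prop:UB} and~\ref{prop:LB}, and a block-by-block reachability argument for $\{z:x^*(z)=\infty\}$. Your (C) is in fact spelled out more fully than the paper's terse two cases, but the content is the same (the paper compares $x_n$ to the iterates $\tilde x_n$ of the block-diagonal matrix $\tilde K=\diag(K_1,\dots,K_J)$ and uses $x_{m+n}\ge K^m\tilde x_n$ to propagate $+\infty$, which is your Perron--Frobenius step phrased differently).

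The one substantive divergence is your handling of infinite entries in $K$. You flag ``truncating the return so that $K$ becomes finite and passing to the monotone limit'' as the main obstacle, and note you would then have to verify consistency with the original problem~\eqref{eq:IF}. The paper avoids this entirely. For the upper envelope, the Fatou's-lemma estimate in Proposition~\ref{prop:UB} is valid with no finiteness assumption on $K$: the chain of inequalities yields $\alpha^{-\gamma}\ge(1-\alpha)^{-\gamma}(Kx)(z)$, which simply forces $\alpha=0$ when $(Kx)(z)=\infty$. For the lower envelope, the paper observes that $x_n(z)$ depends only on coordinates $\hat z$ reachable from $z$ (i.e., with $K^m_{z\hat z}>0$ for some $m$), drops the rest, and applies the Proposition~\ref{prop:LB} argument to the reduced problem; since the states with finite $x^*$ form a forward-closed set on which the fixed-point equation $x^*=Fx^*$ holds with finite data, no truncation of returns is ever needed. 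Your truncation route would work, but it buys you a genuine continuity burden (consumption functions of truncated problems converging to the original) that the paper's direct approach sidesteps.
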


\subsection{Implications of asymptotic linearity}

In this section we discuss the implications of our theoretical results.

As is clear from Theorems~\ref{thm:linear} and \ref{thm:complete}, the asymptotic MPCs $\bar{c}(z)$ depend only on the matrix $K(1-\gamma)$. Since the matrix $K$ in \eqref{eq:Ktheta} does not involve the income $Y$, we immediately obtain the following corollary. %, which depends on relative risk aversion $\gamma$, transition probability matrix $P$, ``multiplicative shocks'' $\beta$ and $R$, but not on ``additive shocks'' $Y$. 

\begin{cor}[Irrelevance of income]\label{cor:irrelevant}
Let everything be as in Theorem~\ref{thm:complete}. The asymptotic MPCs $\bar{c}(z)$ depend only on the relative risk aversion $\gamma$, transition probability matrix $P$, the discount factor $\beta$, and the return on wealth $R$, and not on income $Y$.
\end{cor}

Corollary~\ref{cor:irrelevant} verifies the intuition in \cite*{Gouin-BonenfantTodaParetoExtrapolation} that only ``multiplicative shocks'' such as $\beta$ and $R$ matter for characterizing the behavior of wealthy agents, and ``additive shocks'' such as $Y$ are irrelevant. They use the asymptotic MPCs to extrapolate the consumption functions and study the tail behavior of wealth in heterogeneous-agent models.

A natural question that arises from the discussion around \eqref{eq:RIC} is whether the case $r(K(1-\gamma))\ge 1$ (and hence zero asymptotic MPCs) is empirically plausible, or even theoretically possible. We argue in Section~\ref{sec:plausible} that $r(K(1-\gamma))\ge 1$ is empirically plausible. The following proposition shows that $\gamma>1$ is necessary for zero asymptotic MPCs. Furthermore, if persistent capital loss ($R(z,z,\zeta)<1$ with positive probability for some $z$ with $P_{zz}>0$) is possible, then zero asymptotic MPCs arise for sufficiently high risk aversion.

\begin{prop}\label{prop:gamless1}
If Assumption~\ref{asmp:spectral}\ref{item:rbeta} holds and $\gamma\le 1$, then $r(K(1-\gamma))<1$. If there exists $z\in \ZZ$ such that $P_{zz}>0$, $\beta(z,z,\zeta)>0$, and $0<R(z,z,\zeta)<1$ with positive probability, then $r(K(1-\gamma))\ge 1$ for sufficiently large $\gamma>1$.
\end{prop}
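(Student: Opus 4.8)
The plan is to prove the two claims separately, in both cases reasoning about the spectral radius $r(K(1-\gamma))$ through monotonicity of entries and comparison with the known-finite matrices $K(0)$ and $K(1)$.

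For the first claim ($\gamma\le 1$ implies $r(K(1-\gamma))<1$), I would exploit convexity of the map $\theta\mapsto K_{z\hat z}(\theta)=P_{z\hat z}\E_{z,\hat z}\hat\beta\hat R^\theta$ in $\theta$ (this is an integral of the convex function $\theta\mapsto \hat R^\theta$, weighted by $\hat\beta$). Writing $1-\gamma = (1-\gamma)\cdot 1 + \gamma\cdot 0$ as a convex combination of $1$ and $0$ when $0\le\gamma\le1$, convexity gives the entrywise bound
\begin{equation*}
K_{z\hat z}(1-\gamma)\le (1-\gamma)K_{z\hat z}(1)+\gamma K_{z\hat z}(0).
\end{equation*}
(I would need to check this also handles the convention $R=0\Rightarrow \hat R^{1-\gamma}=0$ in footnote~\ref{fn:convention}, but since all three matrices treat that state as contributing zero, the inequality is unaffected.) Thus $K(1-\gamma)\le (1-\gamma)K(1)+\gamma K(0)$ entrywise, both matrices on the right are finite by Assumption~\ref{asmp:spectral}\ref{item:Ebeta}, so $K(1-\gamma)$ is finite. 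Since the spectral radius of a nonnegative matrix is monotone under entrywise domination, and is a convex function of the matrix on the cone of nonnegative matrices (or, more elementarily, since $r((1-\gamma)K(1)+\gamma K(0))\le (1-\gamma)r(K(1))+\gamma r(K(0))$ — this last step I would justify either by convexity of the Perron root or by noting the Perron eigenvector argument: the Perron root of a nonnegative matrix $A$ equals $\inf\{\lambda>0: Av\le\lambda v \text{ for some } v\gg 0\}$, so taking Perron eigenvectors is not even needed, a comparison suffices), we get
\begin{equation*}
r(K(1-\gamma))\le r\big((1-\gamma)K(1)+\gamma K(0)\big)\le (1-\gamma)r(K(1))+\gamma r(K(0))<1,
\end{equation*}
using Assumption~\ref{asmp:spectral}\ref{item:rbeta}. (When $\gamma=0$ or $\gamma=1$ this is immediate; the content is $0<\gamma<1$.)

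For the second claim, the idea is to bound $r(K(1-\gamma))$ from below by a single diagonal entry. If $z$ satisfies $P_{zz}>0$, $\beta(z,z,\zeta)>0$ and $0<R(z,z,\zeta)<1$ with positive probability, then for a nonnegative matrix $r(K(1-\gamma))\ge K_{zz}(1-\gamma)=P_{zz}\E[\beta(z,z,\zeta)R(z,z,\zeta)^{1-\gamma}]$. As $\gamma\to\infty$, on the event $\{0<R(z,z,\zeta)<1,\ \beta(z,z,\zeta)>0\}$ the integrand $\beta(z,z,\zeta)R(z,z,\zeta)^{1-\gamma}=\beta(z,z,\zeta)R(z,z,\zeta)^{-(\gamma-1)}\to\infty$ pointwise, so by the monotone convergence theorem (the integrand is increasing in $\gamma$ once $\gamma\ge 1$, on that event) $K_{zz}(1-\gamma)\to\infty$, in particular it exceeds $1$ for $\gamma$ large enough; hence $r(K(1-\gamma))\ge 1$ for sufficiently large $\gamma$.

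The main obstacle is the first claim: verifying cleanly that $r$ is subadditive/convex enough to push the convex combination through, while keeping track of the degenerate cases (entries that are zero because $R=0$, or because $P_{z\hat z}=0$) and making sure the finiteness of $K(1-\gamma)$ is established before invoking spectral-radius monotonicity. The cleanest route is the order-theoretic characterization of the Perron root via subinvariant vectors ($r(A)\le\lambda$ whenever $Av\le\lambda v$ for some strictly positive $v$), which lets me avoid eigenvector manipulations entirely: take $v\gg0$ with $K(1)v\le r_1 v$ and $K(0)v\le r_0 v$... but since these need not share an eigenvector I would instead simply use monotonicity $r(K(1-\gamma))\le r(B)$ for $B:=(1-\gamma)K(1)+\gamma K(0)$ and then bound $r(B)$ by picking the Perron eigenvector $v\gg 0$ of $B$ itself and noting $r(B)v = Bv = (1-\gamma)K(1)v+\gamma K(0)v$, whence componentwise reasoning at the index where $v$ is, say, maximal, does not immediately work either — so the honest statement is that I would invoke the known convexity of the spectral radius of nonnegative matrices as a function of the matrix (equivalently, of $\log$ of the entries it is even log-convex, but plain convexity suffices here), cite it, and move on; this is standard (e.g. via Kingman's theorem on log-convexity of the Perron root) and is the one external fact the argument leans on.
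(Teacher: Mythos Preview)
Your argument for the second claim is correct and essentially identical to the paper's: bound $r(K(1-\gamma))$ from below by the single diagonal entry $K_{zz}(1-\gamma)=P_{zz}\E_{z,z}\hat\beta\hat R^{1-\gamma}$, and observe that this diverges as $\gamma\to\infty$ under the stated hypotheses.

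The first claim, however, has a genuine gap. Your entrywise bound $K(1-\gamma)\le(1-\gamma)K(1)+\gamma K(0)$ via convexity of $\theta\mapsto \hat R^\theta$ is correct, and monotonicity of the Perron root gives $r(K(1-\gamma))\le r\big((1-\gamma)K(1)+\gamma K(0)\big)$. But the final step --- pushing the convex combination through $r$ --- fails: the spectral radius is \emph{not} a convex function on the cone of nonnegative matrices. For instance, with
\[
A=\begin{pmatrix}0&4\\1&0\end{pmatrix},\qquad B=\begin{pmatrix}0&1\\4&0\end{pmatrix},
\]
one has $r(A)=r(B)=2$ while $r\big((A+B)/2\big)=5/2$. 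Kingman's theorem, which you invoke, does not assert matrix convexity of $r$; it says that if each entry $A_{ij}(t)$ is \emph{log}-convex in a scalar parameter $t$, then $r(A(t))$ is log-convex in $t$. Equivalently (this is the Elsner--Johnson--Dias da Silva form), $r\big(A^{(1-t)}\odot B^{(t)}\big)\le r(A)^{1-t}r(B)^t$ for entrywise powers and Hadamard products. That is a statement about geometric, not arithmetic, means of the entries.

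The paper's proof exploits precisely this. Rather than the arithmetic majorant from convexity, it uses H\"older's inequality,
\[
\E_{z,\hat z}\hat\beta\hat R^{1-\gamma}=\E_{z,\hat z}\hat\beta^{\gamma}(\hat\beta\hat R)^{1-\gamma}\le(\E_{z,\hat z}\hat\beta)^{\gamma}(\E_{z,\hat z}\hat\beta\hat R)^{1-\gamma},
\]
to obtain the \emph{geometric} entrywise bound $K(1-\gamma)\le K(0)^{(\gamma)}\odot K(1)^{(1-\gamma)}$. Then Elsner--Johnson--Dias da Silva yields $r(K(1-\gamma))\le r(K(0))^{\gamma}r(K(1))^{1-\gamma}<1$. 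So you identified the right external tool, but it must be applied directly to the log-convex family $\theta\mapsto K(\theta)$ (the log-convexity of the entries being exactly H\"older), not to the looser arithmetic majorant you constructed, from which the conclusion cannot be recovered.
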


Example~\ref{exmp:lognormal} below (with \iid lognormal returns) shows that zero asymptotic MPCs are theoretically possible for any $\gamma>1$. The following proposition shows that the presence of capital income risk is crucial for zero asymptotic MPCs.

\begin{prop}\label{prop:Rconst}
Suppose Assumption~\ref{asmp:spectral}\ref{item:rbeta} holds and there is no capital income risk, so $R(z,\hat{z},\hat{\zeta})\equiv R$ is constant. If $r(K(1-\gamma))\ge 1$, then $R<1$.
\end{prop}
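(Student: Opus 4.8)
The plan is to exploit the fact that, when $R$ is deterministic, the whole family $\set{K(\theta)}$ collapses to scalar multiples of $K(0)$, which turns the spectral condition into an elementary inequality for $R$. First I would fix the sign of $1-\gamma$. Proposition~\ref{prop:gamless1} shows that, under Assumption~\ref{asmp:spectral}\ref{item:rbeta}, $\gamma\le1$ forces $r(K(1-\gamma))<1$; hence the standing hypothesis $r(K(1-\gamma))\ge1$ rules out $\gamma\le1$, so $\gamma>1$ and $1-\gamma<0$. I would also dispose of the degenerate case $R=0$: then $\hat\beta\hat R^{1-\gamma}=0$ by the convention in Footnote~\ref{fn:convention}, so $K(1-\gamma)$ is the zero matrix and $r(K(1-\gamma))=0<1$, again contradicting the hypothesis; thus $R>0$.

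Next I would record the factorization. With $R(z,\hat z,\hat\zeta)\equiv R>0$, the definition \eqref{eq:Ktheta} gives $K_{z\hat z}(\theta)=P_{z\hat z}\E_{z,\hat z}\hat\beta\hat R^\theta=R^\theta P_{z\hat z}\E_{z,\hat z}\hat\beta=R^\theta K_{z\hat z}(0)$ for every $\theta$, so $K(\theta)=R^\theta K(0)$ and therefore $r(K(\theta))=R^\theta r(K(0))$.

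Finally I would combine the pieces: evaluating at $\theta=1-\gamma$ and using the hypothesis yields $R^{1-\gamma}r(K(0))=r(K(1-\gamma))\ge1$, and since $r(K(0))<1$ by Assumption~\ref{asmp:spectral}\ref{item:rbeta} this forces $R^{1-\gamma}>1$. As $1-\gamma<0$, the map $x\mapsto x^{1-\gamma}$ is strictly decreasing on $(0,\infty)$, so $R^{1-\gamma}>1=1^{1-\gamma}$ gives $R<1$, which is the claim.

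I do not anticipate a real obstacle: the argument is short once Proposition~\ref{prop:gamless1} is invoked to supply $\gamma>1$, and everything else is bookkeeping. The only points needing care are the degenerate cases $\gamma=1$ and $R=0$ (both excluded above via the CRRA conventions and Proposition~\ref{prop:gamless1}) and making sure the factorization $K(\theta)=R^\theta K(0)$ is used only after $R>0$ has been established, so that $R^\theta$ is unambiguous; the strictness $r(K(0))<1$ (rather than $\le 1$) is also essential, since it is what upgrades $R\le1$ to $R<1$.
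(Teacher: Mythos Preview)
Your argument is correct and hinges on the same factorization $K(\theta)=R^\theta K(0)$ as the paper. The only difference is which spectral condition from Assumption~\ref{asmp:spectral}\ref{item:rbeta} you exploit: you use $r(K(0))<1$, which forces you to first secure $\gamma>1$ via Proposition~\ref{prop:gamless1}; the paper instead rewrites $R^{1-\gamma}r(K(0))=R^{-\gamma}r(K(1))$ and uses $r(K(1))<1$, so that $1\le R^{-\gamma}r(K(1))$ gives $R^\gamma\le r(K(1))<1$ and hence $R<1$ for any $\gamma>0$, with no appeal to Proposition~\ref{prop:gamless1}. Your route is perfectly valid but slightly longer; the paper's is self-contained in two lines.
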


\begin{proof}
If $R(z,\hat{z},\hat{\zeta})=R$ is constant, then by \eqref{eq:Ktheta} we obtain $K(\theta)=R^\theta K(0)$. Therefore if Assumption~\ref{asmp:spectral}\ref{item:rbeta} holds and $r(K(1-\gamma))\ge 1$, then 
\begin{equation*}
1\le R^{1-\gamma}r(K(0))=R^{-\gamma}r(K(1))\implies R\le (r(K(1)))^{1/\gamma}<1.\qedhere
\end{equation*}
\end{proof}

With capital income risk, because capital loss is common, the second part of Proposition~\ref{prop:gamless1} states that zero asymptotic MPCs are possible. On the other hand, Proposition~\ref{prop:Rconst} implies that in a stationary environment with risk-free returns, zero asymptotic MPCs can arise only if the interest rate is negative, which is unrealistic.

\iffalse
However, the proposition does not rule out the possibility of zero asymptotic MPCs with positive interest rates in growing economies. For example, suppose that the income process is modified to
\begin{equation*}
Y_t=Y(Z_{t-1},Z_t,\zeta_t)\e^{gt},
\end{equation*}
where $g$ is the growth rate. Although our theory requires a stationary income process, it is straightforward to allow for constant growth in income by detrending the model when the utility function is CRRA. After simple algebra (\eg, Section 2.2 of \citealp*{Carroll2020}), instead of \eqref{eq:betaRY}, it suffices to use
\begin{subequations}\label{eq:betaRYtilde}
\begin{align}
\tilde{\beta}_t&=\beta(Z_{t-1},Z_t,\zeta_t) \e^{(1-\gamma)g},\\
\tilde{R}_t&=R(Z_{t-1},Z_t,\zeta_t)\e^{-g},\\
\tilde{Y}_t&=Y_t\e^{-gt}=Y(Z_{t-1},Z_t,\zeta_t),
\end{align}
\end{subequations}
which are stationary. Assuming constant $\beta$ and $R$ for simplicity, Assumption~\ref{asmp:spectral}\ref{item:rbeta} then becomes
\begin{align*}
1&>r(K(0))=\tilde{\beta}=\beta \e^{(1-\gamma)g},\\
1&>r(K(1))=\tilde{\beta}\tilde{R}=\beta R\e^{-\gamma g}.
\end{align*}
The condition for zero asymptotic MPC is then
\begin{equation*}
1\le r(K(1-\gamma))=\tilde{\beta}\tilde{R}^{1-\gamma}=\beta R^{1-\gamma}.
\end{equation*}
\fi

\subsection{Examples}

The system of fixed point equations \eqref{eq:fixedpoint} is in general nonlinear and does not admit a closed-form solution. Below, we discuss several examples with explicit solutions.

\begin{exmp}\label{exmp:gamma1}
If $\gamma=1$, then \eqref{eq:fixedpoint} becomes
$$x^*=1+K(0)x^*\iff x^*=(I-K(0))^{-1}1\gg 0.$$
Note that since $r(K(0))<1$ by Assumption~\ref{asmp:spectral}\ref{item:rbeta}, $(I-K(0))^{-1}=\sum_{k=0}^\infty K(0)^k$ exists and is nonnegative.
\end{exmp}

\begin{exmp}\label{exmp:z1}
If $b=b(z,\hat{z})=\E_{z,\hat{z}}\hat{\beta}\hat{R}^{1-\gamma}$ does not depend on $(z,\hat{z})$, then $K(1-\gamma)=bP$. If $x=k1$ is a multiple of the vector $1$, then $K(1-\gamma)x=bPk1=bk1$ because $P$ is a transition probability matrix. Thus if $b<1$, \eqref{eq:fixedpoint} reduces to
$$x^*(z)=(1+(bx^*(z))^{1/\gamma})^\gamma\iff x^*(z)=(1-b^{1/\gamma})^{-\gamma}\iff \bar{c}(z)=1-b^{1/\gamma}.$$
This example shows that with constant discounting ($\beta(z,\hat{z},\hat{\zeta})\equiv \beta$) and risk-free saving ($R(z,\hat{z},\hat{\zeta})\equiv R$), the asymptotic MPC is constant regardless of the income shocks: 
$$\bar{c}(z)=\begin{cases*}
1-(\beta R^{1-\gamma})^{1/\gamma} & if $\beta R^{1-\gamma}<1$,\\
0 & otherwise.
\end{cases*}$$
This case has been studied in \cite*{Carroll2020} in an \iid setting.
\end{exmp}

\begin{exmp}\label{exmp:lognormal}
Suppose the return on wealth $R_t=R(Z_{t-1},Z_t,\zeta_t)$ does not depend on $(Z_{t-1},Z_t)$, so $R_t=R(\zeta_t)$. Assume further that $\log R_t$ is normally distributed with standard deviation $\sigma$ and mean $\mu-\sigma^2/2$, so $\E R=\e^{\mu}$. Let the discount factor $\beta=\e^{-\delta}$ be constant, where $\delta>0$ is the discount rate. Then using the property of the normal distribution, we obtain
\begin{align*}
&1>\E \beta R=\e^{-\delta+\mu}\iff \delta>\mu,\\
&1>\E \beta R^{1-\gamma}=\e^{-\delta+(1-\gamma)(\mu-\gamma\sigma^2/2)}\iff \delta>(1-\gamma)\left(\mu-\frac{1}{2}\gamma\sigma^2\right).
\end{align*}
Therefore assuming $\delta>\mu$ for Assumption~\ref{asmp:spectral}\ref{item:rbeta} to hold, it follows from Example~\ref{exmp:z1} that
$$\bar{c}(z)=\begin{cases*}
1-\e^{-\psi\delta-(1-\psi)(\mu-\gamma\sigma^2/2)}>0 & if $\delta>(1-\gamma)\left(\mu-\frac{1}{2}\gamma\sigma^2\right)$,\\
0 & otherwise,
\end{cases*}$$
where $\psi=1/\gamma$ is the elasticity of intertemporal substitution. If $\gamma>1$, then $(1-\gamma)(\mu-\gamma\sigma^2/2)\to\infty$ as $\gamma,\sigma\to\infty$, so the asymptotic MPC is 0 if risk aversion or volatility is sufficiently high.
\end{exmp}

\subsection{Asymptotic MPCs and saving rates}

In this section we apply our theory of asymptotic MPCs to shed light on the saving rate of the rich.

As is common in the literature, we define an agent's saving rate by the change in net worth divided by total income excluding capital loss (to prevent the denominator from becoming negative):
\begin{equation}
s_{t+1}=\frac{a_{t+1}-a_t}{\max\set{(R_{t+1}-1)(a_t-c_t),0}+Y_{t+1}}.\label{eq:saverate1}
\end{equation}
For $x\in \R$, define its positive and negative parts by $x^+=\max\set{x,0}$ and $x^-=-\min\set{x,0}$. Then $x=x^+-x^-$. Using the budget constraint \eqref{eq:IF_budget}, the saving rate \eqref{eq:saverate1} can be rewritten as
\begin{align}
s_{t+1}&=\frac{[(R_{t+1}-1)^+-(R_{t+1}-1)^-](a_t-c_t)+Y_{t+1}-c_t}{(R_{t+1}-1)^+(a_t-c_t)+Y_{t+1}}\notag \\
&=1-\frac{(\hat{R}-1)^-(1-c/a)+c/a}{(\hat{R}-1)^+(1-c/a)+\hat{Y}/a}\in (-\infty,1).\label{eq:saverate2}
\end{align}
Letting $a\to\infty$, the saving rate of an infinitely wealthy agent becomes
\begin{equation}
\bar{s}\coloneqq 1-\frac{(\hat{R}-1)^-(1-\bar{c})+\bar{c}}{(\hat{R}-1)^+(1-\bar{c})} \in [-\infty,1],\label{eq:saverateinf}
\end{equation}
where $\bar{c}$ is the asymptotic MPC. Under what conditions can the saving rate \eqref{eq:saverate2} be increasing in wealth, and in particular, can the asymptotic saving rate \eqref{eq:saverateinf} become positive? The following proposition provides a negative answer within a class of models.

\begin{prop}\label{prop:Bewley}
Consider a canonical \cite*{bewley1977} model in which agents are infinitely-lived and relative risk aversion $\gamma$, discount factor $\beta$, and return on wealth $R$ are constant. Then in the stationary equilibrium the asymptotic saving rate \eqref{eq:saverateinf} is negative.
\end{prop}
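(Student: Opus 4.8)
The \emph{plan} is to reduce the claim to an explicit computation, using that with constant $\gamma$, $\beta$, $R$ the asymptotic MPC has a closed form. Since $\beta$ and $R$ are constant, $K(1-\gamma)=\beta R^{1-\gamma}P$ is a scalar multiple of the stochastic matrix $P$, so $r(K(1-\gamma))=\beta R^{1-\gamma}$, and Example~\ref{exmp:z1} yields $\bar c=1-(\beta R^{1-\gamma})^{1/\gamma}$ when $\beta R^{1-\gamma}<1$ and $\bar c=0$ otherwise; in every case $\bar c\in[0,1)$. The other ingredient is $\beta R<1$: with $\beta$, $R$ constant one has $r(K(1))=\beta R$, so this is precisely the maintained Assumption~\ref{asmp:spectral}\ref{item:rbeta}, and it is also the familiar property of a stationary \cite*{bewley1977} equilibrium, since $\beta R\ge 1$ makes wealth diverge and rules out a stationary distribution (cf.\ \cite*{ChamberlainWilson2000}).

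With these in hand I would substitute $\bar c$ into \eqref{eq:saverateinf} and split on the sign of $R-1$. If $R\le 1$, then $(\hat R-1)^+=0$, so the fraction in \eqref{eq:saverateinf} has a vanishing denominator while its numerator $(1-R)(1-\bar c)+\bar c$ is strictly positive (either $R<1$, or $R=1$ and $\bar c=1-\beta^{1/\gamma}>0$ because $\beta R^{1-\gamma}=\beta<1$); hence $\bar s=-\infty$, which is negative and admissible since $\bar s\in[-\infty,1]$ by \eqref{eq:saverateinf}. If $R>1$, then $\beta R^{1-\gamma}=(\beta R)R^{-\gamma}<1$, so $\bar c=1-(\beta R^{1-\gamma})^{1/\gamma}\in(0,1)$ and $1-\bar c=(\beta R^{1-\gamma})^{1/\gamma}$, while $(\hat R-1)^-=0$, so \eqref{eq:saverateinf} collapses to $\bar s=1-\bar c/[(R-1)(1-\bar c)]$. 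A short calculation gives $R(1-\bar c)=(\beta R)^{1/\gamma}$, hence $\bar s<0\iff \bar c>(R-1)(1-\bar c)\iff 1>R(1-\bar c)=(\beta R)^{1/\gamma}\iff \beta R<1$, which holds.

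I do not anticipate a genuine difficulty here; the proof is bookkeeping. The only things requiring care are invoking Example~\ref{exmp:z1} to pin down $\bar c$ (including the $\bar c=0$ branch), handling the degenerate subcases of \eqref{eq:saverateinf} where a denominator is zero and the saving rate equals $-\infty$, and making explicit that $\beta R<1$ is forced --- which in this constant-coefficient setting is immediate from Assumption~\ref{asmp:spectral}\ref{item:rbeta}.
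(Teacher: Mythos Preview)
Your proposal is correct and follows essentially the same approach as the paper: compute $\bar c$ via Example~\ref{exmp:z1}, use $\beta R<1$, and split on $R\gtrless 1$ to evaluate \eqref{eq:saverateinf}. The only cosmetic differences are that the paper cites \cite*{StachurskiToda2019JET} (rather than Assumption~\ref{asmp:spectral}\ref{item:rbeta}) for $\beta R<1$ in stationary equilibrium, and in the $R<1$ case it takes the limit in \eqref{eq:saverate2} rather than working directly with \eqref{eq:saverateinf}.
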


%\citet*[Figure 1]{DeNardiFella2017} numerically show that the saving rate in a particular Bewley model is decreasing in wealth and becomes negative as agents become wealthier. 
Proposition~\ref{prop:Bewley} proves that the negativity of the asymptotic saving rate is inevitable in any canonical (stationary) Bewley model.\footnote{This result has a similar flavor to \cite*{StachurskiToda2019JET}, who prove that canonical Bewley models cannot explain the tail behavior of wealth.} Thus, these models are unable to explain the observed positive saving rates of the rich. The following proposition shows that just by allowing $\beta$ or $R$ to be stochastic need not solve the problem when $\bar{c}>0$.

\begin{prop}\label{prop:Bewley2}
Consider a \cite*{bewley1977} model in which agents are infinitely-lived, relative risk aversion $\gamma$ is constant, and $\set{\beta_t,R_t}_{t\ge 1}$ is \iid with $\E \beta R^{1-\gamma}<1$. If the stationary equilibrium wealth distribution has an unbounded support, then the asymptotic saving rate \eqref{eq:saverateinf} evaluated at $\hat{R}=\E R$ is nonpositive.
\end{prop}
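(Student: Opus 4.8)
The plan is to combine the closed-form asymptotic MPC available in the \iid\ case with the Pareto-tail characterization of stationary wealth. Since the environment is \iid, the scalar $b\coloneqq\E\beta R^{1-\gamma}$ does not depend on any state, so Example~\ref{exmp:z1} applies and gives $\bar c=1-b^{1/\gamma}\in(0,1)$ (strictly positive because $0<b<1$ by hypothesis). Now substitute $\hat R=\E R$ into \eqref{eq:saverateinf}. If $\E R\le 1$, then $(\hat R-1)^+=0$ while the numerator equals $(1-\E R)(1-\bar c)+\bar c>0$, so $\bar s=-\infty\le 0$ and there is nothing to prove. Hence assume $\E R>1$, so that $(\hat R-1)^-=0$; then, using $1-\bar c=(\E\beta R^{1-\gamma})^{1/\gamma}$, a one-line manipulation gives
\[
\bar s\le 0\iff \bar c\ge 1-1/\E R\iff (1-\bar c)\E R\le 1\iff (\E R)^{\gamma}\,\E\beta R^{1-\gamma}\le 1 .
\]
So the whole proposition reduces to proving the single inequality $\E[(1-\bar c)R]\le 1$.

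To establish that, I would examine how wealth accumulates along the stationary equilibrium. By Theorem~\ref{thm:linear} the optimal consumption function satisfies $c(a)/a\to\bar c$, so for each $\epsilon>0$ there is a constant $M_\epsilon$ with $(1-\bar c-\epsilon)a\le a-c(a)\le(1-\bar c+\epsilon)a+M_\epsilon$ for all $a>0$, the additive slack absorbing the bounded region of small $a$. Plugging these bounds into the budget constraint \eqref{eq:IF_budget} shows that, in its right tail, the stationary wealth law is sandwiched between the stationary laws of Kesten recursions with multiplicative coefficient $(1-\bar c\pm\epsilon)R$ and nonnegative additive terms. Because the wealth distribution has unbounded support, the Kesten--Goldie theory used for this class of models by \citet*{MaStachurskiToda2020JET} applies and yields that the stationary wealth distribution has a genuine Pareto right tail; letting $\epsilon\downarrow 0$, its index $\zeta>0$ satisfies $\E\left[((1-\bar c)R)^{\zeta}\right]\le 1$ (with equality when the tail is driven by the multiplicative shock rather than by income).

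To close the argument, note that a stationary equilibrium has finite aggregate wealth, so the stationary mean $\E[a]$ is finite, which forces the Pareto index to satisfy $\zeta\ge 1$. The map $\theta\mapsto\E\left[((1-\bar c)R)^{\theta}\right]$ is convex and equals $1$ at $\theta=0$, and by the previous paragraph it is $\le 1$ at $\theta=\zeta$; since $1\in[0,\zeta]$, convexity gives $\E[(1-\bar c)R]\le 1$. Together with the first paragraph this yields $\bar s\le 0$. The conclusion is ``nonpositive'' rather than ``negative'' because the borderline index $\zeta=1$ makes the inequality tight and $\bar s=0$; contrast Proposition~\ref{prop:Bewley}, where $R$ is deterministic and the equilibrium impatience condition forces $\zeta>1$, hence $\bar s<0$.

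The main obstacle is exactly the middle step---extracting $\E[(1-\bar c)R]\le 1$ from ``unbounded support plus stationary equilibrium.'' A naive approach, integrating \eqref{eq:IF_budget} over the stationary distribution, only delivers $\E[a]\,(1-(1-\bar c)\E R)\le\E Y$, which is vacuous precisely in the problematic range $(1-\bar c)\E R>1$: there, a Kesten recursion whose multiplicative coefficient has mean above $1$ but negative log-mean still admits a stationary law, merely one with infinite mean (Pareto index below $1$). Ruling this out genuinely requires the power-law tail theory together with the finiteness of equilibrium aggregate capital, and the unbounded-support hypothesis is what guarantees the right tail really is a power law, so that $\E[a]<\infty$ translates into $\zeta\ge 1$.
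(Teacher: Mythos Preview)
Your reduction coincides with the paper's: both invoke Example~\ref{exmp:z1} to obtain $\bar c=1-(\E\beta R^{1-\gamma})^{1/\gamma}\in(0,1)$, dispose of $\E R\le 1$ trivially, and for $\E R>1$ boil the claim down to the single inequality $(1-\bar c)\,\E R\le 1$. The divergence is entirely in how that inequality is justified.

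The paper's argument for the key inequality is one sentence: ``Since $\E R(1-\bar c)$ is the expected growth rate of wealth for infinitely wealthy agents, if the wealth distribution is unbounded and $\E R(1-\bar c)>1$, then wealth will grow at the top, which violates stationarity.'' You correctly diagnose in your final paragraph that this is not self-evidently rigorous---a Kesten recursion with $\E A>1$ but $\E\log A<0$ can be stationary, merely with infinite mean. Your route (Kesten--Goldie pins the tail index $\zeta$ via $\E[((1-\bar c)R)^\zeta]=1$; finite aggregate capital in a Bewley equilibrium forces $\zeta>1$; convexity of $\theta\mapsto\E[((1-\bar c)R)^\theta]$ on $[0,\zeta]$ then yields $\E[(1-\bar c)R]\le 1$) supplies that missing justification and also clarifies why the conclusion is ``nonpositive'' rather than ``negative.'' The cost is importing the tail-exponent machinery (e.g.\ Theorem~3.3 of \citealp*{MaStachurskiToda2020JET}) and making explicit that ``stationary equilibrium'' entails finite aggregate wealth; both are legitimate here. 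Two small cleanups: the Kesten--Goldie index satisfies the moment equation with equality, not merely $\le 1$ (your $\epsilon$-sandwich gives equality in the limit), and a finite mean strictly requires $\zeta>1$, so the borderline $\zeta=1$ you discuss does not actually arise under the equilibrium hypothesis---neither point affects your convexity step.
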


One possible explanation for the positive and increasing saving rates is to consider models with discount factor or return heterogeneity. If $r(K(1-\gamma))\ge 1$, then by Theorem~\ref{thm:linear} we have $\bar{c}=0$ and hence the asymptotic saving rate becomes $\bar{s}=1>0$ using \eqref{eq:saverateinf}.\footnote{Another possibility is to consider overlapping generations models. \citet*[Theorem 9]{StachurskiToda2019JET} present a model with random birth/death and show that it is possible to have $\beta R>1$ in equilibrium. In this case, by the proof of Proposition~\ref{prop:Bewley}, we have $\bar{s}>0$.}

\section{Empirical plausibility of zero asymptotic MPCs}\label{sec:plausible}

So far we have theoretically characterized the asymptotic MPCs in Theorems~\ref{thm:linear} and \ref{thm:complete}, and showed in Proposition~\ref{prop:gamless1} that zero asymptotic MPCs arise whenever capital loss is possible and risk aversion is sufficiently high. The remaining issue is whether zero asymptotic MPCs (and hence asymptotic saving rates equal to 1) can arise in empirically plausible settings. To address this issue, in this section we provide an empirically plausible proof of concept in a partial equilibrium setting to study the saving rate of the rich and wealth inequality. Analyzing a fully calibrated general equilibrium model that matches various aspects of the data is left for applied researchers.

%To address this issue, in this section we consider a stylized numerical example calibrated from U.S.\ data. We view this exercise as a proof of concept: solving a fully specified general equilibrium model that matches various aspects of the data is beyond the scope of the paper.

\subsection{Model and calibration}
\paragraph{Model}
The economy is populated by a continuum of ex ante identical, infinitely-lived dynastic households with CRRA utility with constant discount factor $\beta>0$ and relative risk aversion $\gamma>0$. A typical agent (head of household) can be in one of the following states: employed worker ($z=1$), unemployed worker ($z=2$), and entrepreneur ($z=3$), so the state space is $\ZZ=\set{1,2,3}$. The state process $\set{Z_t}_{t=0}^\infty$ is independent across households and evolves as a Markov chain with transition probability matrix $P$.

Letting $Z_t$ be the time $t$ state of a typical agent, we suppose that labor income is $Y_t=Y(Z_t)\e^{gt}>0$, where $Y:\ZZ\to (0,\infty)$ and $g$ is the aggregate growth rate of the economy. As for the return on wealth, workers (employed and unemployed) save only at gross risk-free rate $R_f>0$, whereas entrepreneurs enjoy excess returns as follows. Let $X$ be the gross excess return on risky investment, so the gross return on investment is $R_fX$. Entrepreneurs invest fraction $\theta$ of their wealth into the risky asset and are subject to capital income tax at rate $\tau_k$ that applies to excess returns. Therefore the return on wealth of a typical entrepreneur is
\begin{equation*}
R_f(1+(1-\tau_k)(X_t-1)\theta),
\end{equation*}
where for simplicity we assume that $\set{X_t}_{t=0}^\infty$ is \iid across agents and time. Finally, to introduce social mobility, we suppose that the head of a household dies with probability $p$ each period and the heir inherits the financial wealth after paying the estate tax at rate $\tau_e$. In summary, we can write the return on wealth as
\begin{equation*}
R(Z_{t-1},Z_t,\zeta_t)=\begin{cases}
(1-\tau_e d_t)R_f, & (Z_{t-1}=1,2)\\
(1-\tau_e d_t)R_f(1+(1-\tau_k)(X_t-1)\theta), & (Z_{t-1}=3)
\end{cases}
\end{equation*}
where $d_t$ is the indicator function of death (so $d_t=1$ if the household head dies and $d_t=0$ otherwise) and the \iid shock is denoted by $\zeta_t=(X_t,d_t)$.

Although the theoretical results in Section~\ref{sec:AL} requires a stationary income process, it is straightforward to allow for constant growth in income by detrending the model when the utility function is CRRA. After simple algebra (\eg, Section 2.2 of \citealp*{Carroll2020}), instead of \eqref{eq:betaRY}, it suffices to use
\begin{subequations}\label{eq:betaRYtilde}
\begin{align}
\tilde{\beta}_t&=\beta(Z_{t-1},Z_t,\zeta_t) \e^{(1-\gamma)g}=\beta \e^{(1-\gamma)g},\\
\tilde{R}_t&=R(Z_{t-1},Z_t,\zeta_t)\e^{-g},\\
\tilde{Y}_t&=Y_t\e^{-gt}=Y(Z_{t-1},Z_t,\zeta_t)=Y(Z_t),
\end{align}
\end{subequations}
which are stationary.

\paragraph{Asset returns}
We model one period as a month. To calibrate the asset return parameters, we use the 1947--2018 monthly data for U.S.\ stock market returns (volume-weighted index including dividends) and risk-free rates from the updated spreadsheet of \cite*{welch-goyal2008}.\footnote{\url{http://www.hec.unil.ch/agoyal/docs/PredictorData2018.xlsx}.} Their spreadsheet contains monthly nominal stock and risk-free returns as well as the inflation. From these we construct the real gross stock and risk-free returns $R_t^s,R_t^f$. We estimate the log risk-free rate as $\log R_f=\E[\log R_t^f]=5.3477\times 10^{-4}$ (annual rate 0.65\%). We suppose that gross excess return $X_t$ is lognormal ($\log X\sim N(\mu,\sigma^2))$ and estimate $\mu=5.4079\times 10^{-3}$ and $\sigma=0.0414$ from the mean and standard deviation of the log excess returns $\log R_t^s-\log R_t^f$. For computational purposes, we discretize the distribution of $\log X$ using the 7-point Gauss-Hermite quadrature.

\paragraph{Portfolio}
To calibrate the risky portfolio share $\theta$, we use the 1913--2012 wealth share data of the wealthiest households in U.S.\ estimated by \cite*{saez-zucman2016}. Specifically, in Table B5b of their Online Appendix, they report the composition of wealth of the top 0.01\% across asset groups (equities, fixed income claims, housing, business assets, and pensions). We classify equities, business, and pension as ``risky asset'' and fixed income claims and housing as ``risk-free asset'' to compute the portfolio share $\theta$ for all years,\footnote{These portfolio shares are relatively stable over time. Although the classification of housing and pension may be ambiguous, because these two categories comprise a small fraction (about 10\%) of the portfolio, choosing different classifications yields quantitatively similar results.} take the average across all years, and obtain $\theta=0.6373$.

\paragraph{Income process}

We choose the transition probability matrix $P$ such that
\begin{enumerate*}[(i)]
\item conditional on remaining a worker, unemployment lasts on average for 3 months,
\item conditional on being a worker, unemployment rate is 5\%,
\item an entrepreneur becomes a worker at annual rate 2\%,\footnote{\citet*[Table 1]{GilchristYankovZakrajsek2009} document that the credit spread of large firms is 192 basis points, or about 2\%. We interpret firm exit as switching from entrepreneur to worker.} and
\item the fraction of entrepreneurs is 11.5\% (fraction of ``active business owners'' in \citealp*[Table 1]{CagettiDeNardi2006}).
\end{enumerate*}
The resulting transition probability matrix is
\begin{equation*}
P=\begin{bmatrix}
0.9822 & 0.0175 & 0.0002\\
0.3333 & 0.6665 & 0.0002\\
0.0016 & 0.0001 & 0.9983
\end{bmatrix}.
\end{equation*}
We set $(Y(1),Y(2),Y(3))=(1,0.2,2.5)$ so that the income of an unemployed worker is 20\% of an employed worker, and an entrepreneur earns 2.5 times as much as an employed worker.

\paragraph{Other parameters}

We calibrate the remaining parameters as follows. The discount factor is $\beta = \e^{-\delta/12}$ with $\delta=0.04$ so that the annual discounting is $4\%$. The death probability of the household head is $p=\e^{-1/(25\times 12)}$ so that a generation lasts for 25 years on average. The capital income tax rate is $\tau_k=0.25$ based on the estimate in \cite*{mcdaniel2007average} using national account statistics. The estate tax rate is $\tau_e=0.4$, which is the current value in U.S. We calibrate the growth rate $g$ from the U.S.\ real per capita GDP in 1947--2018 and obtain $g=1.6208\times 10^{-3}$ at the monthly frequency.

\subsection{Empirical plausibility of zero asymptotic MPCs}

\paragraph{Asymptotic MPCs}

In the current setting, Assumption~\ref{asmp:Inada}' and conditions~\ref{item:Ebeta} and \ref{item:EY} of Assumption~\ref{asmp:spectral} obviously hold. To apply Theorems~\ref{thm:exist} and \ref{thm:linear}, it remains to verify $r(K(0))<1$, $r(K(1))<1$, and determine whether $r(K(1-\gamma)) \gtrless 1$, where we compute $K$ in \eqref{eq:Ktheta} using the effective discount factors and returns in \eqref{eq:betaRYtilde}. Figure~\ref{fig:rK} shows the determination of the asymptotic MPC $\bar{c}(z)$ when we change the relative risk aversion $\gamma$ and the annual discount rate $\delta$. The blue dashed and dotted lines show the boundaries of the existence conditions $r(K(0))<1$ and $r(K(1))<1$ in Assumption~\ref{asmp:spectral}\ref{item:rbeta}. By Theorem~\ref{thm:exist}, for any $(\gamma,\delta)$ configuration above these lines, a solution to the income fluctuation problem exists. The red curve shows the discount rate corresponding to $r(K(1-\gamma))=1$. By Theorem~\ref{thm:linear}, for any $(\gamma,\delta)$ configuration above (below) this curve, we obtain $\bar{c}(z)>0$ ($\bar{c}(z)=0$). Figure~\ref{fig:rK} reveals that the asymptotic MPCs can be zero if relative risk aversion is moderately high (above 3).

\begin{figure}[!htb]
\centering
\includegraphics[width=0.7\linewidth]{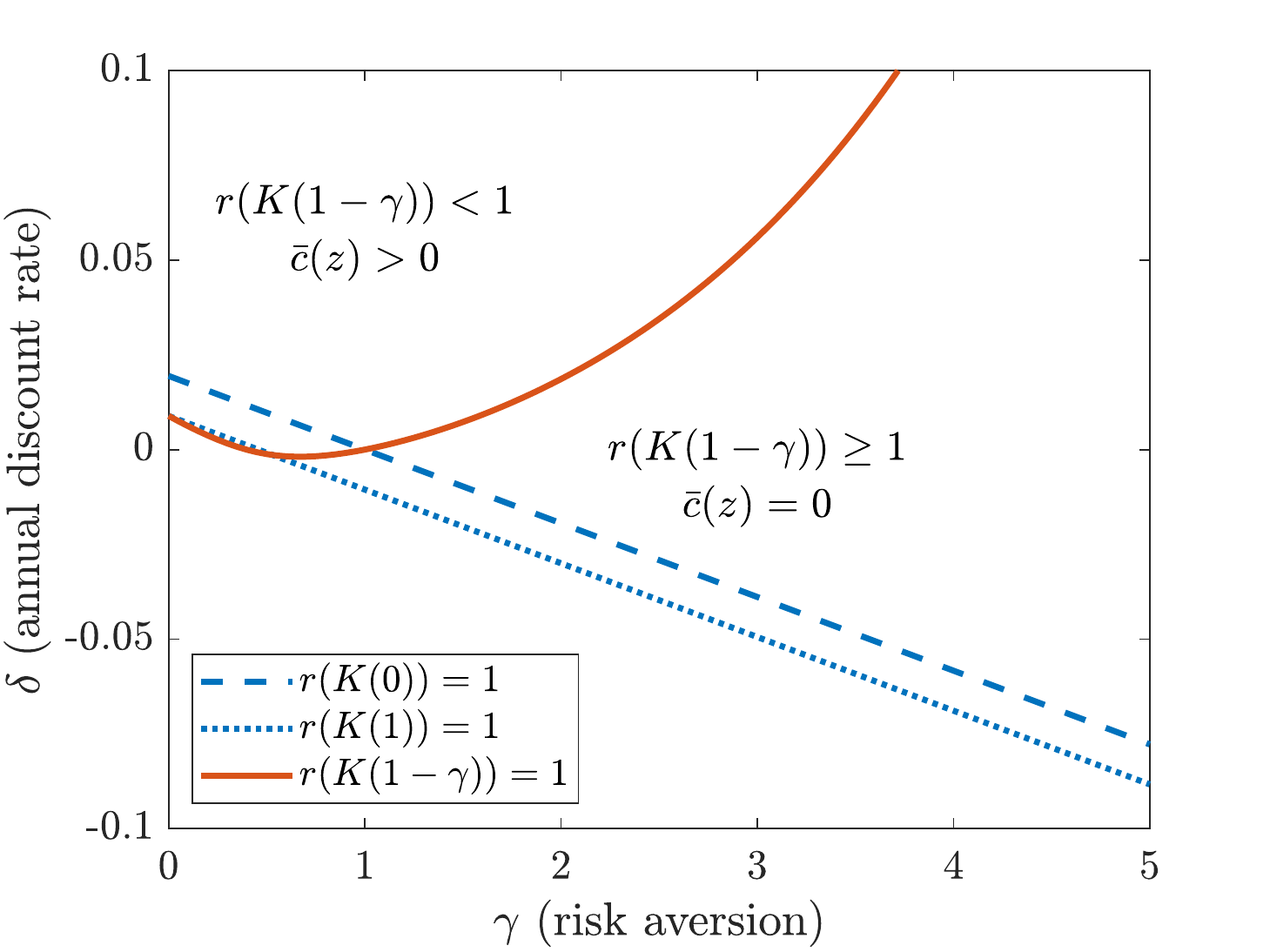}
\caption{Determination of asymptotic MPCs.}\label{fig:rK}
\end{figure}

\paragraph{Consumption functions}

We next solve the model for $\gamma=2,4$ using policy function iteration.\footnote{To avoid the root-finding in \eqref{eq:xi} and speed up the algorithm, we use the endogenous grid point method \citep*{Carroll2006}.} According to Figure~\ref{fig:rK} and Theorem~\ref{thm:exist}, a unique solution exists in each case given the annual discount rate $\delta=0.04$. Figure~\ref{fig:cf} shows the optimal consumption rule. Consistent with our theory, for $\gamma=2$ ($\bar{c}(z)>0$) the consumption functions are approximately linear with positive slopes for high asset level. When $\gamma=4$ ($\bar{c}(z)=0$), the consumption functions show a distinctive concave pattern.

\begin{figure}[!htb]
\centering

\begin{subfigure}{0.48\linewidth}
\includegraphics[width=\linewidth]{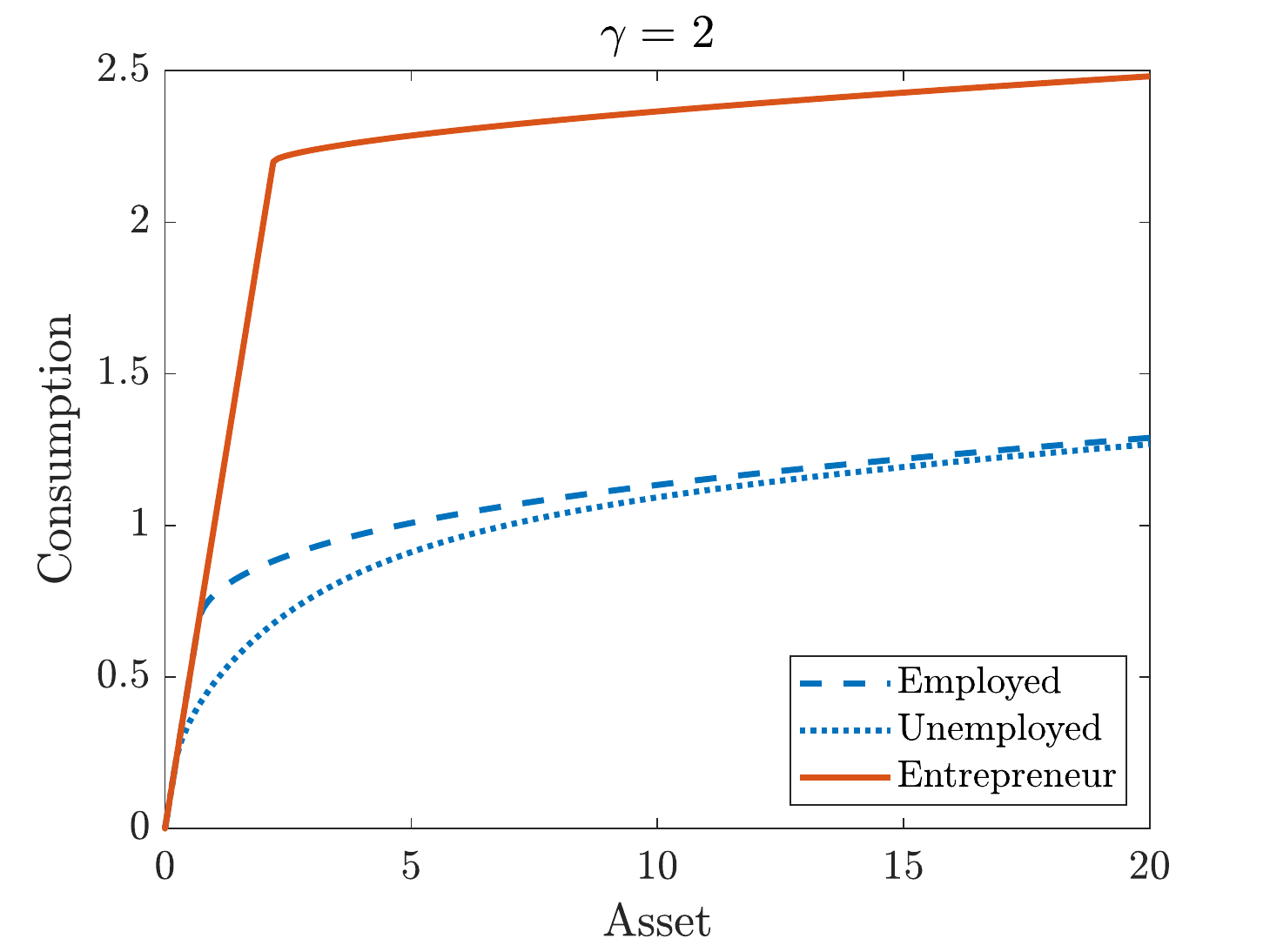}
\end{subfigure}
\begin{subfigure}{0.48\linewidth}
\includegraphics[width=\linewidth]{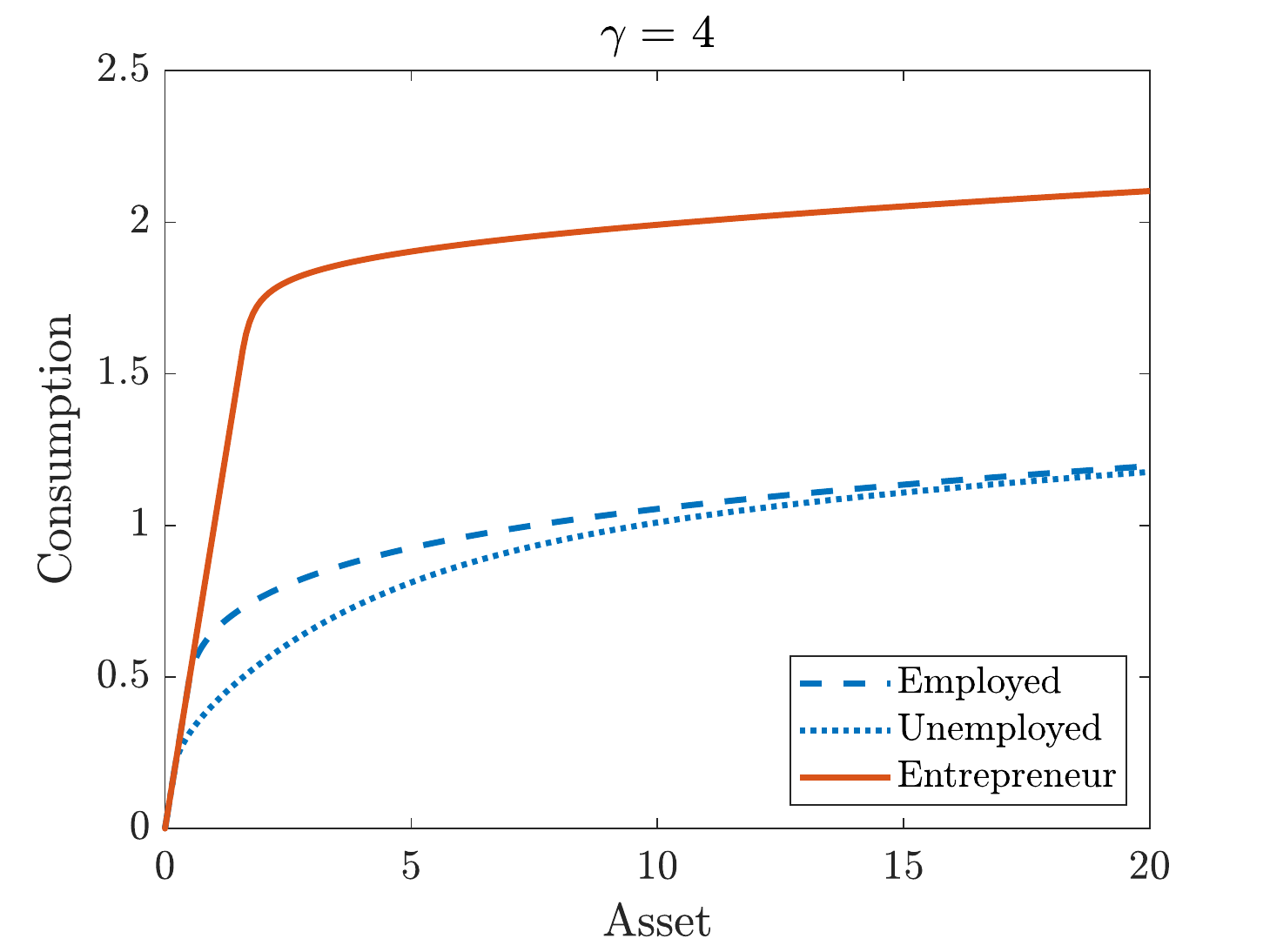}
\end{subfigure}
\begin{subfigure}{0.48\linewidth}
\includegraphics[width=\linewidth]{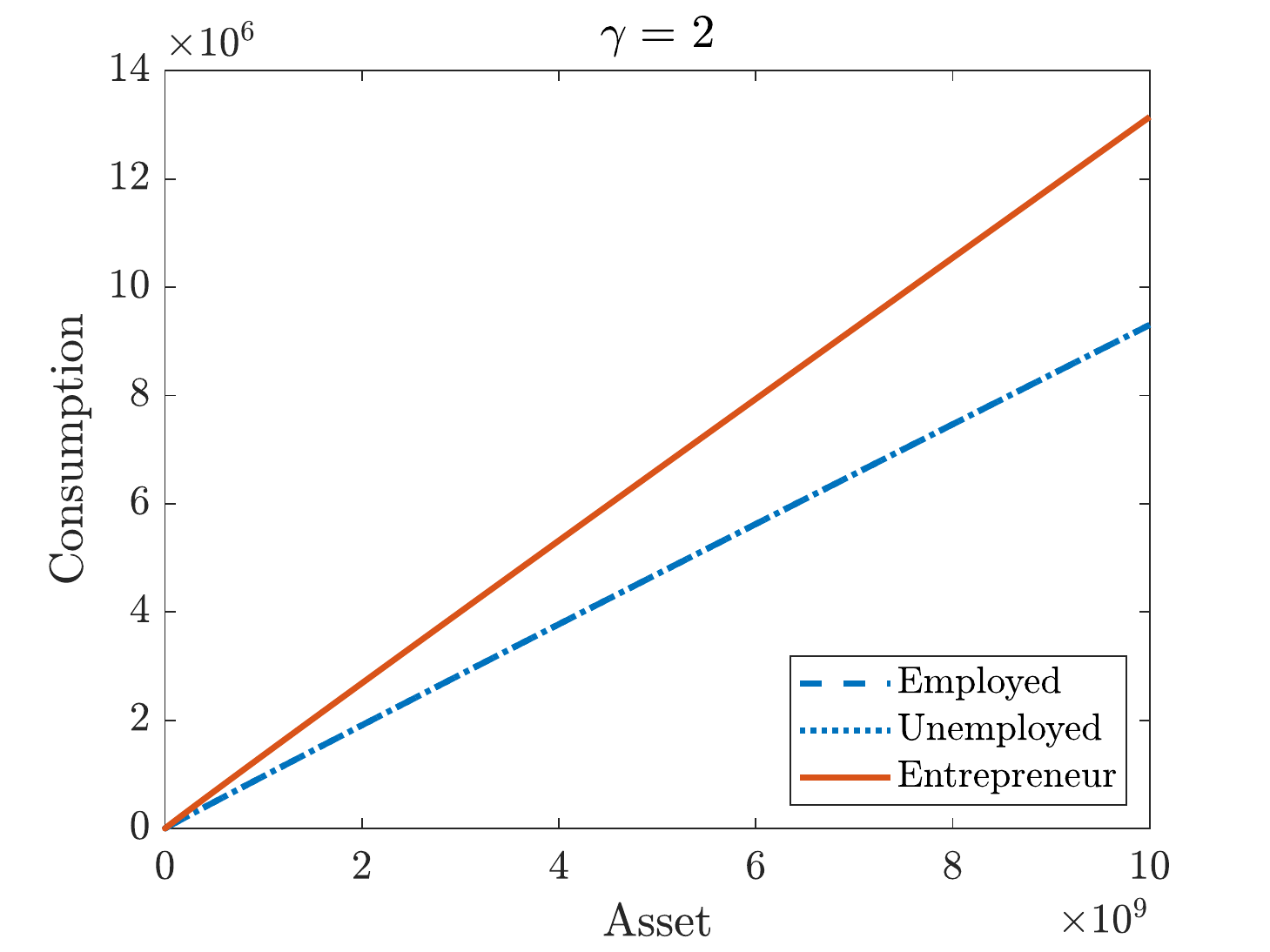}
\end{subfigure}
\begin{subfigure}{0.48\linewidth}
\includegraphics[width=\linewidth]{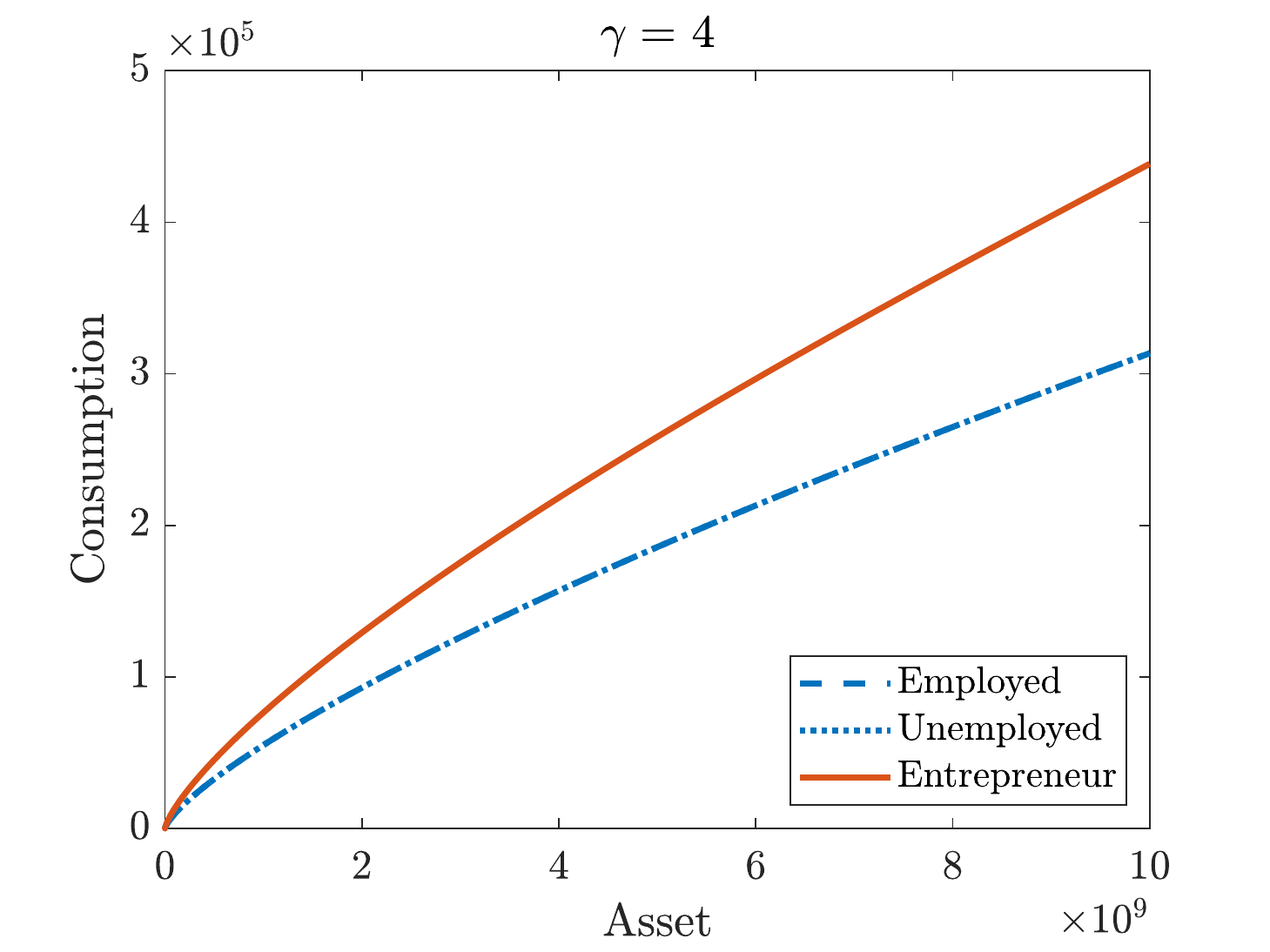}
\end{subfigure}
\caption{Consumption functions.}\label{fig:cf}
\caption*{\footnotesize Note: The top and bottom panels plot the consumption functions in the range $a\in [0,20]$ and $a\in [0,10^{10}]$, respectively. Here and in other figures, the left (right) panels correspond to $\gamma=2$ ($\gamma=4$).}
\end{figure}

Figure~\ref{fig:cr} plots the consumption rates ($c(a,z)/a$) in log-log scale. We see that the consumption rates are decreasing in wealth for each state. For $\gamma=2$, as asset level gets large, the asymptotic MPCs approach to positive constants that coincide with the theoretical values calculated based on Theorem~\ref{thm:linear} (dotted lines). Thus the consumption functions are asymptotically linear, consistent with the theorem. For $\gamma=4$, the consumption rates exhibit a clear decreasing trend even when asset is extremely large ($a \approx 10^{10}$), which is consistent with zero asymptotic MPC established in Theorem~\ref{thm:linear}.

\begin{figure}[!htb]
\centering
\begin{subfigure}{0.48\linewidth}
\includegraphics[width=\linewidth]{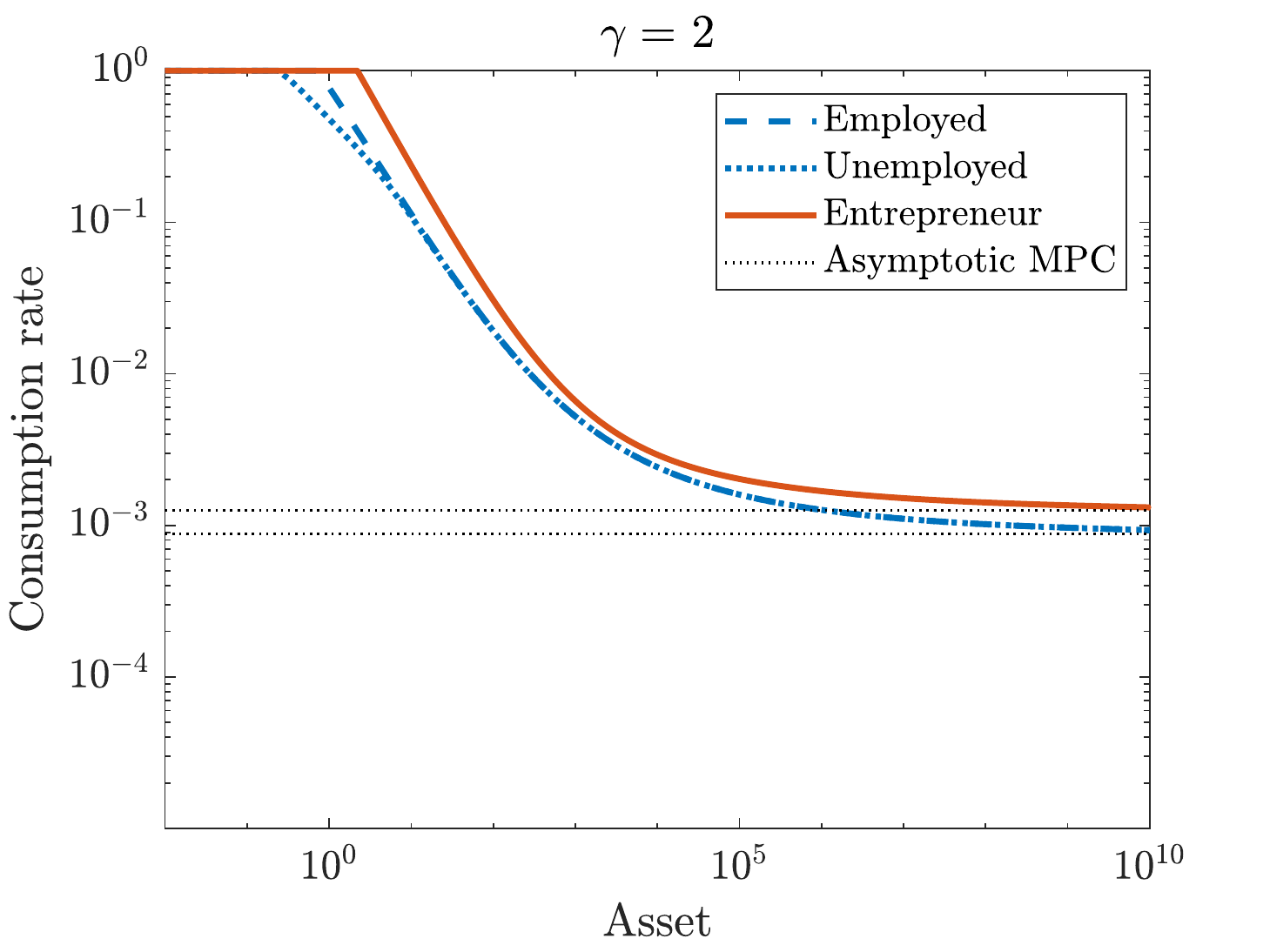}
\end{subfigure}
\begin{subfigure}{0.48\linewidth}
\includegraphics[width=\linewidth]{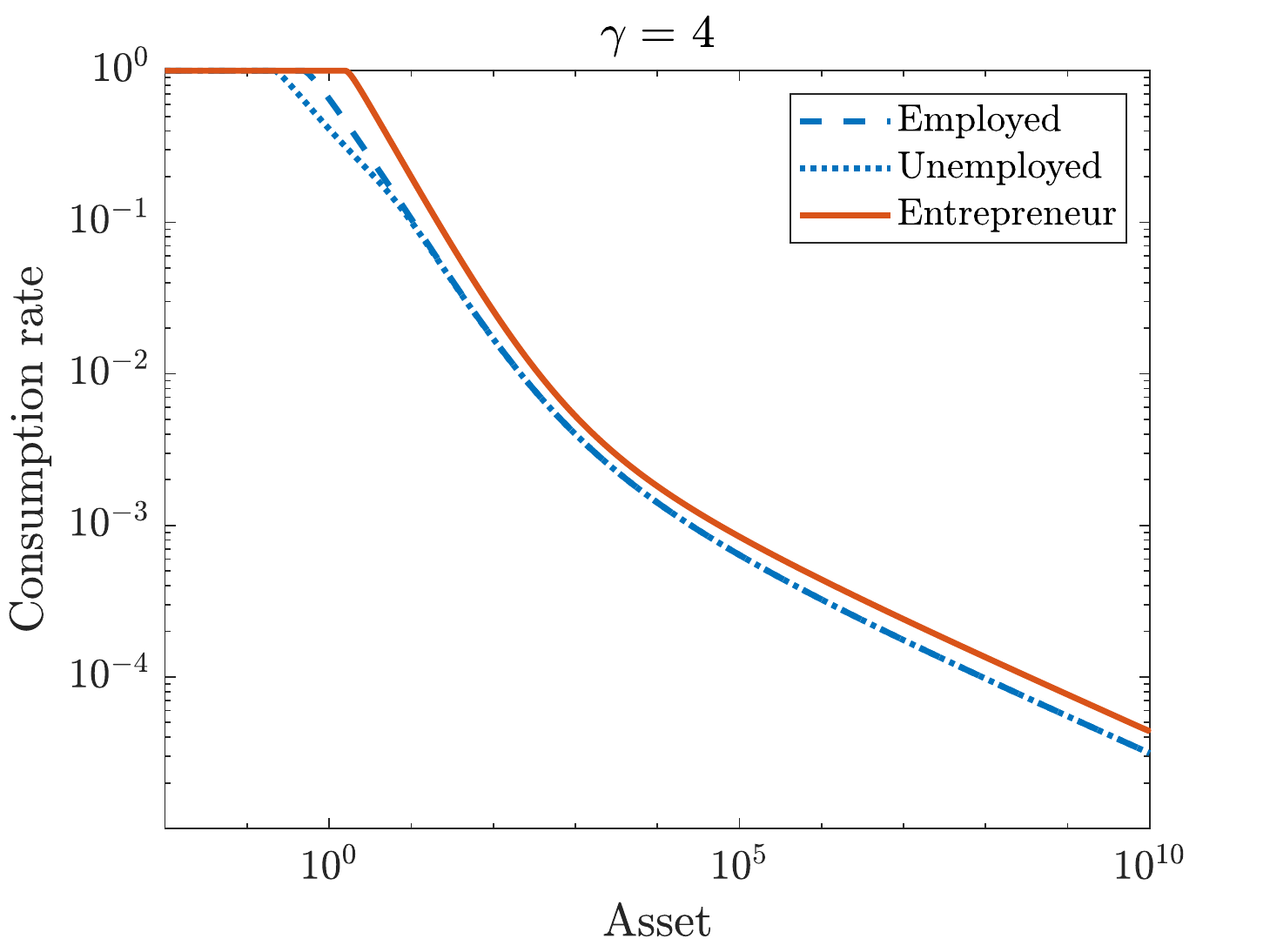}
\end{subfigure}
\caption{Consumption rates.}\label{fig:cr}
\end{figure}

\paragraph{Saving rates}

We compute the saving rate in each state using the definition \eqref{eq:saverate1}. In our setting, $s_t$ depends on $(Z_{t-1},Z_t,\zeta_t)$, which can take $3\times 3\times 7\times 2=126$ states (3 states for $Z_{t-1}$ and $Z_t$ each, 7 states for the discretized gross excess return $X_t$, and 2 states for indicator of death $d_t$). To reduce the dimension, Figure~\ref{fig:sr} shows the saving rates assuming $Z_{t-1}=Z_t=z$ (no change in occupation), $\log X_t=\mu$ (median return), and $d_t=0$ (survival).

When wealth is low, the borrowing constraint binds and labor income is the only source of income and net worth accumulation. Using \eqref{eq:saverate2}, we obtain
\begin{equation*}
s_{t+1}=\frac{\hat{Y}-a}{\hat{Y}}=1-a/\hat{Y},
\end{equation*}
which is decreasing in asset. When $\hat{R}>1$, by \eqref{eq:saverate2} we obtain
\begin{equation*}
s_{t+1}=1-\frac{c}{(\hat{R}-1)(a-c)+\hat{Y}}.
\end{equation*}
Thus when wealth is moderately high so that $c<a$ but $(\hat{R}-1)(a-c)\ll \hat{Y}$, the saving rate is decreasing because $c$ is increasing in $a$ but the denominator is roughly constant at $\hat{Y}>0$. The saving rate starts to increase when wealth is relatively high ($\approx 100\sim 1000$). When $\gamma=2$, the saving rate of extremely wealthy entrepreneurs is positive. This finding does not contradict Proposition~\ref{prop:Bewley2} because the model features Markovian shocks. However, for the relevant region of the state space (say $a\le 10^4$), where agents spend most time, the saving rate is either small or negative. On the other hand, when $\gamma=4$, the saving rate of entrepreneurs remains large and positive, and the asymptotic saving rate equals 1. This example illustrates that the empirically observed large positive and increasing saving rate (see Figure~1 of \cite*{FagerengHolmMollNatvikWP}) could potentially be explained by models with capital income risk, particularly those with zero asymptotic MPCs.

\begin{figure}[!htb]
\centering
\begin{subfigure}{0.48\linewidth}
\includegraphics[width=\linewidth]{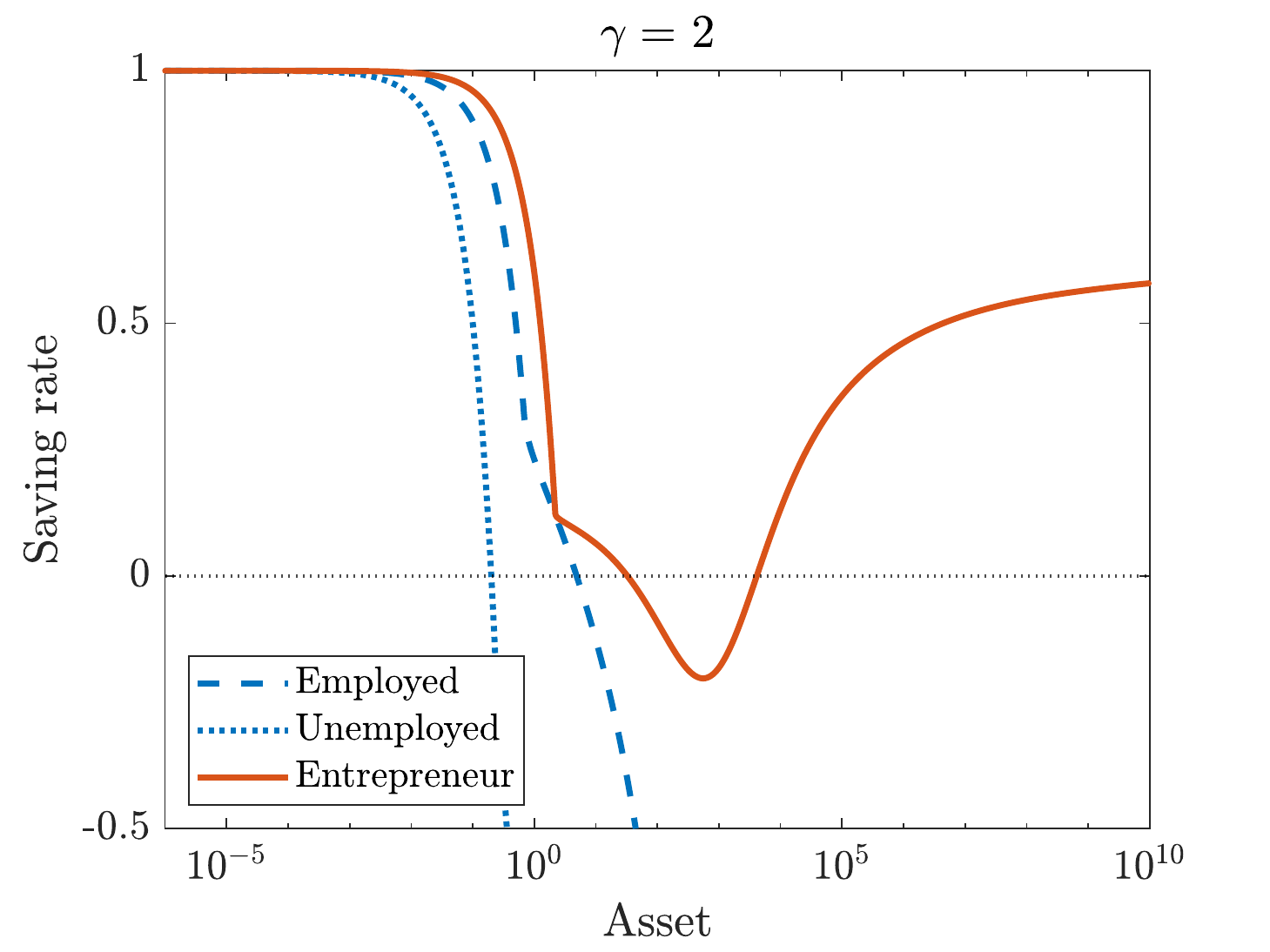}
\end{subfigure}
\begin{subfigure}{0.48\linewidth}
\includegraphics[width=\linewidth]{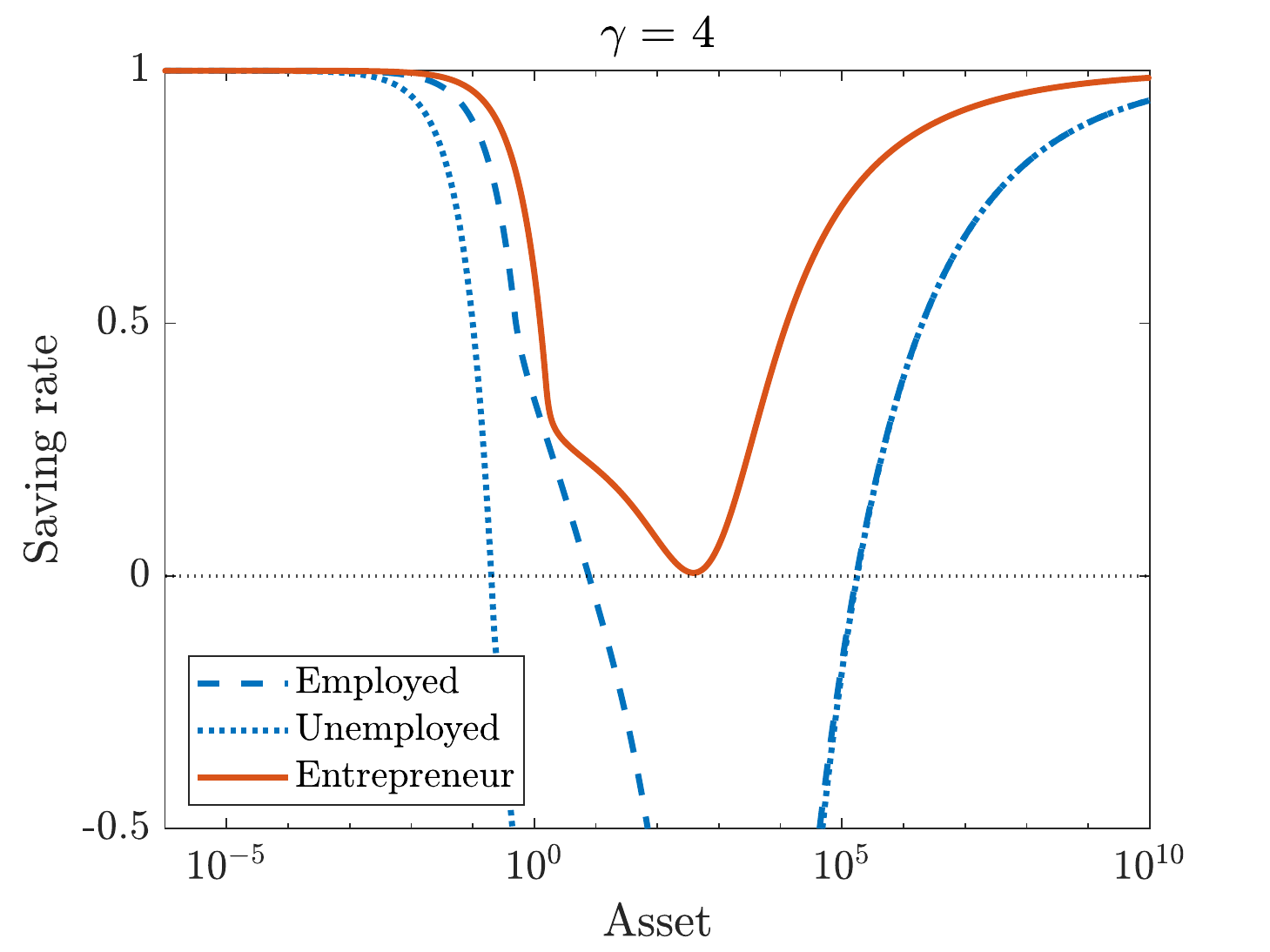}
\end{subfigure}
\caption{Saving rates.}\label{fig:sr}
\end{figure}

\subsection{Zero asymptotic MPCs and wealth inequality}

Finally, we investigate the implication of saving rates on the stationary wealth distribution. Let $G=(G_{z\hat{z}})$ be the matrix whose $(z,\hat{z})$ entry is the conditional expected return $\E R(z,\hat{z},\hat{\zeta})$. Then Theorems 3.1 and 3.2 of \cite*{MaStachurskiToda2020JET} imply that a sufficient condition for the existence of a unique stationary wealth distribution is
\begin{equation}
r(P\odot G)<1,\label{eq:ergodicCond}
\end{equation}
where $\odot$ denotes the Hadamard (entry-wise) product of matrices. In our numerical example, we have $r(P\odot G)=0.9991<1$, so \eqref{eq:ergodicCond} holds.

Due to the presence of capital income risk, the wealth distribution has a Pareto upper tail as shown by \citet*[Theorem 3.3]{MaStachurskiToda2020JET}. Because the wealth distribution has a heavy upper tail, truncating the distribution using a finite grid leads to substantial truncation error. Therefore to numerically compute the stationary wealth distribution, we apply the Pareto extrapolation method of \cite*{Gouin-BonenfantTodaParetoExtrapolation}, which extrapolates the wealth distribution by a Pareto distribution outside the grid.\footnote{Readers interested in the detailed implementation are referred to \cite*{Gouin-BonenfantTodaParetoExtrapolation}. We use a 100-point affine-exponential grid for the asset in the range $a\in [0,10^4]$.}

Figure~\ref{fig:wDist} shows the stationary wealth distribution of the normalized wealth $\tilde{a}_t=a_t\e^{-gt}$ in log-log scale. The vertical axis shows the tail probability $\Pr(\tilde{a}_t>a)$ for thresholds $a\in [0,10^4]$. The log-log plots of the wealth distribution show a straight line pattern for high asset level, implying a power law behavior (\ie, $\Pr(\tilde{a}_t>a)\sim a^{-\alpha}$ for large $a$, where $\alpha>1$ is the Pareto exponent), which is consistent with theory. Letting $M$ be the matrix of conditional moment generating functions of log wealth growth defined by
\begin{equation}
M_{z\hat{z}}(\alpha)=\E (R(z,\hat{z},\hat{\zeta})(1-\bar{c}(z)))^\alpha,\label{eq:Mzz}
\end{equation}
using the formula in \cite*{BeareToda-dPL}, the Pareto exponent $\alpha$ solves
\begin{equation}
r(P\odot M(\alpha))=1.\label{eq:BeareToda}
\end{equation}
Numerically solving this equation, the Pareto exponents are $\alpha(2)=3.745$ for $\gamma=2$ and $\alpha(4)=1.714$ for $\gamma=4$. Thus the wealth distribution is more unequal (the Pareto exponent is smaller) when risk aversion is higher. This is because with $\gamma=4$, we have $\bar{c}(z)=0$, so $M_{z\hat{z}}(\alpha)$ in \eqref{eq:Mzz} becomes larger, which makes the solution to \eqref{eq:BeareToda} smaller. In the data, the U.S.\ wealth Pareto exponent is 1.52 \citep*[Table 8]{Vermeulen2018}, which is close to the value $\alpha(4)=1.714$ but much smaller than $\alpha(2)=3.745$. Therefore a model with zero asymptotic MPCs is potentially useful for explaining the observed high wealth inequality and small Pareto exponent.

\begin{figure}[!htb]
\centering
\includegraphics[width=0.7\linewidth]{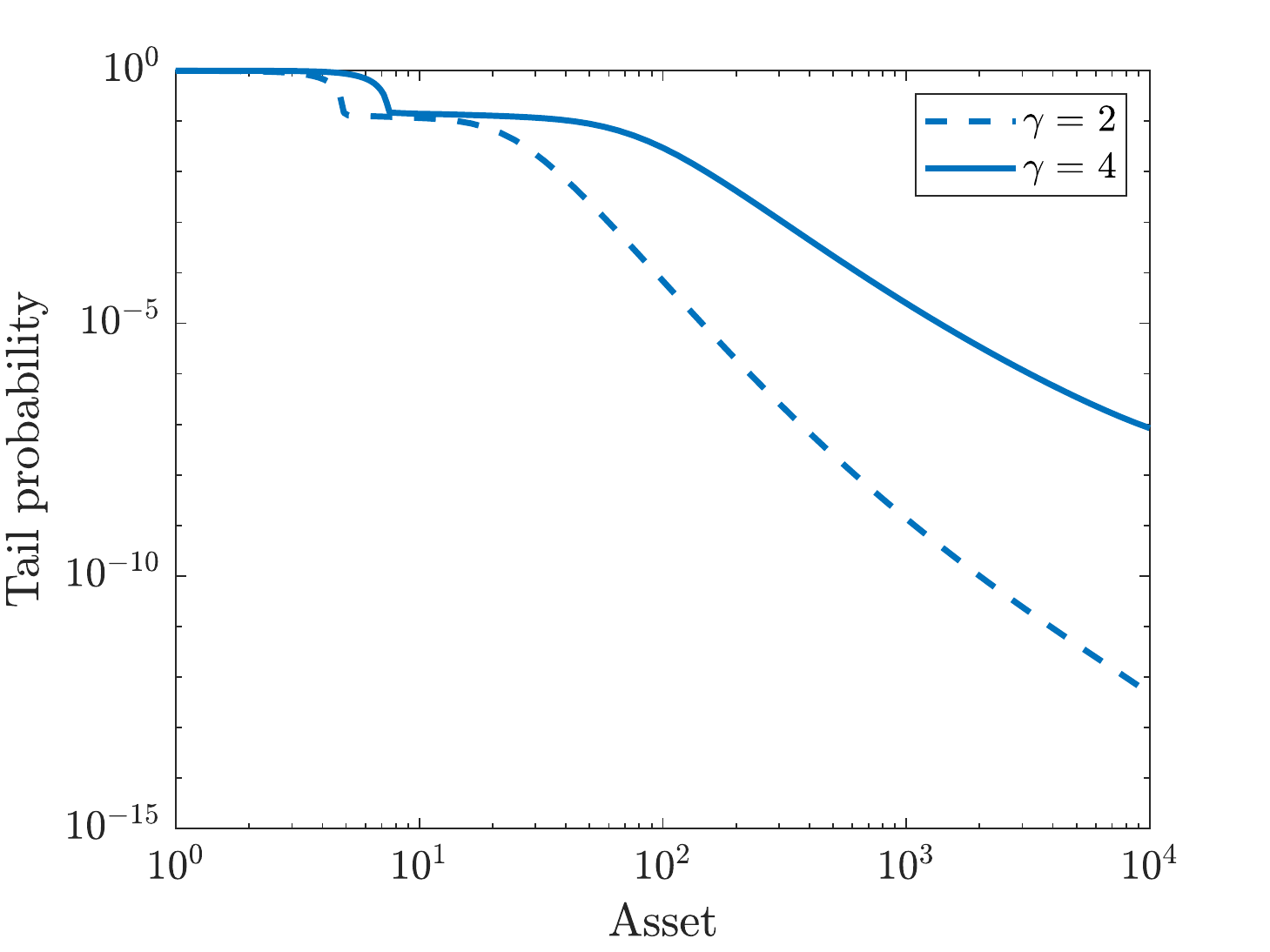}
\caption{Stationary distribution of normalized wealth $\tilde{a}_t=a_t\e^{-gt}$.}\label{fig:wDist}
\end{figure}

\section{Concluding remarks}

In this paper we have rigorously established that the homotheticity of preferences imply the asymptotic linearity of policy functions in a general income fluctuation problem. Furthermore, we have obtained an exact analytical characterization of the asymptotic marginal propensities to consume. Somewhat surprisingly, the asymptotic MPCs may become zero when capital loss is possible, implying a saving rate converging to 1 as agents get richer. Using a stylized model, we have demonstrated that zero asymptotic MPCs are empirically plausible. Our mechanism has the potential to accommodate a large saving rate of the rich and high wealth inequality (small Pareto exponent) as observed in the data without resorting to non-homothetic preferences or other frictions.

%\bibliographystyle{plainnat}
%\bibliography{localbib}

\appendix

\section{Solving the income fluctuation problem}\label{sec:IF}

\begin{proof}[Proof of Lemma~\ref{lem:xi}]
Since $c\in \cC$, by \eqref{eq:uprimediff} we have
\begin{equation*}
M\coloneqq \sup_{(a,z)\in (0,\infty)\times \ZZ}\abs{u'(c(a,z))-u'(a)}<\infty.
\end{equation*}
Since $\beta,R,Y\ge 0$, $c\in \cC$ is increasing in its first argument, and $u'$ is decreasing, the function $\xi\mapsto \hat{\beta}\hat{R}u'(c(\hat{R}(a-\xi)+\hat{Y},\hat{Z}))$ is increasing. Hence for $\xi\in [0,a]$, we have
\begin{equation}
0\le \hat{\beta}\hat{R}u'(c(\hat{R}(a-\xi)+\hat{Y},\hat{Z}))\le \hat{\beta}\hat{R}u'(c(\hat{Y},\hat{Z}))\le \hat{\beta}\hat{R}[u'(\hat{Y})+M].\label{eq:betaRuY}
\end{equation}
Using the (constant) function $\hat{\beta}\hat{R}[u'(\hat{Y})+M]$ as the dominating function, an application of the dominated convergence theorem together with Assumption~\ref{asmp:spectral}\ref{item:Ebeta}\ref{item:EY} implies that
\begin{equation*}
\xi\mapsto \E_{z,\hat{z}}\hat{\beta}\hat{R}u'(c(\hat{R}(a-\xi)+\hat{Y},\hat{Z}))
\end{equation*}
is finite, continuous, and increasing in $\xi\in [0,a]$. Therefore
\begin{equation*}
\E_z\hat{\beta}\hat{R}u'(c(\hat{R}(a-\xi)+\hat{Y},\hat{Z}))=\sum_{\hat{z}=1}^ZP_{z\hat{z}}\E_{z,\hat{z}}\hat{\beta}\hat{R}u'(c(\hat{R}(a-\xi)+\hat{Y},\hat{Z}))
\end{equation*}
is also finite, continuous, and increasing in $\xi\in [0,a]$. Noting that $u'$ is continuous and strictly decreasing on $(0,\infty)$,
\begin{equation*}
g(\xi)\coloneqq u'(\xi)-\min\set{\max\set{\E_z \hat{\beta}\hat{R}u'(c(\hat{R}(a-\xi)+\hat{Y},\hat{Z})),u'(a)},u'(0)}
\end{equation*}
is continuous and strictly decreasing on $(0,a]$, and it is also continuous at $\xi=0$ if $u'(0)<\infty$. Since $u'(a)\le u'(0)$, we have
\begin{equation*}
g(a)\le u'(a)-\min\set{u'(a),u'(0)}=u'(a)-u'(a)=0.
\end{equation*}
If $u'(0)<\infty$, then $g(0)\ge u'(0)-u'(0)=0$. If $u'(0)=\infty$, then using \eqref{eq:betaRuY},
\begin{align*}
g(\xi)&=u'(\xi)-\max\set{\E_z \hat{\beta}\hat{R}u'(c(\hat{R}(a-\xi)+\hat{Y},\hat{Z})),u'(a)}\\
&\ge u'(\xi)-\max\set{\E_z \hat{\beta}\hat{R}[u'(\hat{Y})+M],u'(a)}\to \infty
\end{align*}
as $\xi\downarrow 0$. Therefore by the intermediate value theorem, there exists $\xi\in [0,a]$ with $g(\xi)=0$ (with $\xi>0$ if $u'(0)=\infty$), and $\xi$ is unique because $g$ is strictly decreasing.
\end{proof}

The proof of Theorem~\ref{thm:exist} is long and technical, but very similar to the proof of Theorem 2.2 of \cite*{MaStachurskiToda2020JET}. Therefore we only provide a sketch of the proof and explain how our weaker assumptions can be handled in a similar way.

To construct a contraction mapping, it is convenient to work in the space of functions $h:(0,\infty)\to \R^Z$ defined by $h(a)=(h_1(a),\dots,h_Z(a))$ with $h_z(a)\coloneqq u'(c(a,z))$. Noting that $u'$ is continuous and strictly decreasing, we can easily see from the definition of $\cC$ that each $h_z$ is continuous, decreasing, and $h_z(a)-u'(a)$ is nonnegative and bounded (see \eqref{eq:uprimediff}). Therefore define the space $\cH$ by the set of functions $h:(0,\infty)\to \R^Z$ such that each $h_z$ is continuous, decreasing, $h_z(a)-u'(a)\ge 0$ for all $a>0$, and
\begin{equation*}
\sup_{a\in (0,\infty)}\abs{h_z(a)-u'(a)}<\infty.
\end{equation*}
For $h^1,h^2\in \cH$, if we define
\begin{equation*}
\tilde{\rho}(h^1,h^2)=\max_{z\in \ZZ}\sup_{a\in (0,\infty)}\abs{h^1_z(a)-h^2_z(a)},
\end{equation*}
then $(\cH,\tilde{\rho})$ becomes a complete metric space. For $h\in \cH$, $a>0$, and $z\in \ZZ$, define the function $\tilde{T}h:(0,\infty)\to \R^Z$ by
\begin{equation*}
(\tilde{T}h)_z(a)=u'(Tc(a,z)),
\end{equation*}
where $Tc(a,z)$ is the unique $\xi\in [0,a]$ solving \eqref{eq:xi}, whose existence and uniqueness is established in Lemma~\ref{lem:xi}. Then letting $\kappa=(\tilde{T}h)_z(a)=u'(\xi)$, it follows from \eqref{eq:xi} that
\begin{equation}
\kappa=\min\set{\max\set{\E_z \hat{\beta}\hat{R}h_{\hat{Z}}(\hat{R}(a-(u')^{-1}(\kappa))+\hat{Y}),u'(a)},u'(0)}.\label{eq:kappa}
\end{equation}
Using a similar argument to the proofs of Proposition B.4 and Lemma B.3 of \cite*{MaStachurskiToda2020JET}, we can show that $\tilde{T}$ is a monotone self map on $\cH$, \ie, $\tilde{T}:\cH\to \cH$ and $h^1\le h^2$ implies $\tilde{T}h^1\le \tilde{T}h^2$. The following lemma is useful for establishing that $\tilde{T}$ has a contraction property. Below, for $h\in \cH$ and $v\in \R_+^Z$, define $h+v\in \cH$ by $(h+v)_z(a)=h_z(a)+v_z$.

\begin{lem}\label{lem:hv}
Let $K$ be as in \eqref{eq:Ktheta}. For any $h\in \cH$ and $v\in \R_+^Z$, we have
\begin{equation}
\tilde{T}(h+v)\le \tilde{T}h+K(1)v.\label{eq:Tdiscount}
\end{equation}
\end{lem}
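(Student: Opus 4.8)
The plan is to unwind the definition of $\tilde{T}$ via the characterization \eqref{eq:kappa} and compare, pointwise in $(a,z)$, the values $\kappa_1 \coloneqq (\tilde{T}h)_z(a)$ and $\kappa_2 \coloneqq (\tilde{T}(h+v))_z(a)$. Write $\xi_1 = Tc_1(a,z)$ and $\xi_2 = T c_2(a,z)$ for the corresponding optimal consumption values, where $c_i = (u')^{-1}(h^i)$ with $h^1 = h$ and $h^2 = h+v$. Since $v \ge 0$, we have $h+v \ge h$, so by monotonicity of $\tilde{T}$ (established via the arguments adapted from \cite*{MaStachurskiToda2020JET}) we already know $\kappa_2 \ge \kappa_1$, i.e.\ $\xi_2 \le \xi_1$. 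The claim \eqref{eq:Tdiscount} is the quantitative refinement: $\kappa_2 - \kappa_1 \le (K(1)v)(z)$ for each $z$.

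The key step is to bound the increment in the right-hand side of the Euler characterization. First I would dispose of the boundary cases: if $\kappa_2 = u'(0)$ then $\xi_2 = 0$, hence $\xi_2 \le \xi_1$ forces nothing extra and we compare against $\kappa_1 \le u'(0)$; if $\kappa_2 = u'(a)$ then $\xi_2 = a$, and since $\xi_1 \ge \xi_2 = a$ we get $\xi_1 = a$ and $\kappa_1 = u'(a) = \kappa_2$, so the inequality is trivial with slack $(K(1)v)(z) \ge 0$. The substantive case is when the constraint is slack for $h+v$, so $\kappa_2 = \E_z \hat\beta \hat R\, h^2_{\hat Z}(\hat R(a - \xi_2) + \hat Y)$. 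Then, using $\kappa_1 \ge \max\{\E_z\hat\beta\hat R h_{\hat Z}(\hat R(a-\xi_1)+\hat Y), u'(a)\} \ge \E_z \hat\beta\hat R\, h_{\hat Z}(\hat R(a-\xi_1)+\hat Y) \ge \E_z\hat\beta\hat R\, h_{\hat Z}(\hat R(a - \xi_2) + \hat Y)$, where the last inequality uses $\xi_2 \le \xi_1$ together with monotonicity of $h_{\hat Z}$ and $\hat R \ge 0$. Therefore
\begin{align*}
\kappa_2 - \kappa_1 &\le \E_z \hat\beta\hat R\,\bigl[h^2_{\hat Z}(\hat R(a-\xi_2)+\hat Y) - h_{\hat Z}(\hat R(a-\xi_2)+\hat Y)\bigr] \\
&= \E_z \hat\beta \hat R\, v_{\hat Z} = \sum_{\hat z} P_{z\hat z}\,\E_{z,\hat z}\,\hat\beta\hat R\, v_{\hat z} = (K(1)v)(z),
\end{align*}
using the definition $(h+v)_z = h_z + v_z$ and the definition \eqref{eq:Ktheta} of $K(1)$. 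Since $\kappa_1 \le \kappa_2$, this also gives $\abs{\kappa_2 - \kappa_1} \le (K(1)v)(z)$, establishing \eqref{eq:Tdiscount} pointwise, hence as functions.

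The main obstacle I anticipate is purely bookkeeping: carefully handling the $\min/\max$ truncation in \eqref{eq:kappa} so that the chain of inequalities relating $\kappa_1$ to the untruncated conditional expectation at the point $\hat R(a - \xi_2) + \hat Y$ is valid in every case, and ensuring that all the expectations appearing are finite (this is where Assumption~\ref{asmp:spectral}\ref{item:Ebeta} and the membership $h \in \cH$, giving $h_z(a) - u'(a)$ bounded and $\E_{z,\hat z}\hat\beta\hat R u'(\hat Y) < \infty$, are invoked, exactly as in the proof of Lemma~\ref{lem:xi}). There is no deep idea beyond the observation that $\tilde{T}$ acts on the ``excess'' part $h - u'$ essentially through the linear operator $K(1)$; the monotonicity of $h$ in the asset argument plus $\xi_2 \le \xi_1$ is what lets us evaluate both $h^1$ and $h^2$ at the \emph{same} future wealth level and reduce the difference to $v_{\hat Z}$.
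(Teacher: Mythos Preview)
Your approach is the same as the paper's in substance: use monotonicity of $\tilde T$ to get $\xi_2\le\xi_1$, then evaluate both $h$ and $h+v$ at the common future wealth $\hat R(a-\xi_2)+\hat Y$ so that the difference collapses to $\E_z\hat\beta\hat R\,v_{\hat Z}=(K(1)v)(z)$. The paper avoids your case split by first invoking the elementary inequality
\[
\min\{\max\{x+\alpha,y\},z\}\le\min\{\max\{x,y\},z\}+\alpha\qquad(\alpha\ge 0),
\]
applied with $\alpha=(K(1)v)(z)$; this absorbs both the $u'(a)$ and $u'(0)$ truncations in one line, leaving only the monotonicity step $\E_z\hat\beta\hat R\,h_{\hat Z}(\hat R(a-\xi_2)+\hat Y)\le\E_z\hat\beta\hat R\,h_{\hat Z}(\hat R(a-\xi_1)+\hat Y)$, which is exactly your ``$\xi_2\le\xi_1$ plus $h$ decreasing'' observation.

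One genuine loose end in your write-up: the boundary case $\kappa_2=u'(0)$ is not actually argued---``we compare against $\kappa_1\le u'(0)$'' yields $\kappa_2-\kappa_1\ge 0$, the wrong direction. It is easily fixed by the same mechanism as your interior case: if $\kappa_1=u'(0)$ the difference is zero; if $\kappa_1<u'(0)$ then $\kappa_1=\max\{\E_z\hat\beta\hat R\,h_{\hat Z}(\hat R(a-\xi_1)+\hat Y),u'(a)\}\ge\E_z\hat\beta\hat R\,h_{\hat Z}(\hat R a+\hat Y)$ (using $\xi_2=0\le\xi_1$ and $h$ decreasing), while $\kappa_2=u'(0)\le\E_z\hat\beta\hat R\,(h+v)_{\hat Z}(\hat R a+\hat Y)$, and subtracting gives $\kappa_2-\kappa_1\le(K(1)v)(z)$. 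The paper's min/max inequality is precisely the device that makes this bookkeeping unnecessary.
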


\begin{proof}
If $x,y,z\in\R$ and $\alpha\ge 0$, note that
\begin{align}
\min\set{\max\set{x+\alpha,y},z}&\le \min\set{\max\set{x+\alpha,y+\alpha},z+\alpha}\notag \\
&=\min\set{\max\set{x,y},z}+\alpha.\label{eq:minmaxineq}
\end{align}
Letting $\kappa_v\coloneqq (\tilde{T}(h+v))_z(a)$ in \eqref{eq:kappa}, using \eqref{eq:minmaxineq}, and recalling the definition of $K$ in \eqref{eq:Ktheta}, we obtain
\begin{align*}
&(\tilde{T}(h+v))_z(a)=\kappa_v\\
&=\min\set{\max\set{\E_z \hat{\beta}\hat{R}(h_{\hat{Z}}+v_{\hat{Z}})(\hat{R}(a-(u')^{-1}(\kappa_v))+\hat{Y}),u'(a)},u'(0)}\\
&\le \min\set{\max\set{\E_z \hat{\beta}\hat{R}h_{\hat{Z}}(\hat{R}(a-(u')^{-1}(\kappa_v))+\hat{Y}),u'(a)},u'(0)}+(K(1)v)_z.
\end{align*}
Therefore to show \eqref{eq:Tdiscount}, it suffices to show
\begin{equation}
\min\set{\max\set{\E_z \hat{\beta}\hat{R}h_{\hat{Z}}(\hat{R}(a-(u')^{-1}(\kappa_v))+\hat{Y}),u'(a)},u'(0)}\le (\tilde{T}h)_z(a).\label{eq:suffineq1}
\end{equation}
Noting that $\kappa\coloneqq (\tilde{T}h)_z(a)$ satisfies \eqref{eq:kappa}, to show \eqref{eq:suffineq1}, it suffices to show
\begin{equation}
\E_z \hat{\beta}\hat{R}h_{\hat{Z}}(\hat{R}(a-(u')^{-1}(\kappa_v))+\hat{Y})\le \E_z \hat{\beta}\hat{R}h_{\hat{Z}}(\hat{R}(a-(u')^{-1}(\kappa))+\hat{Y}).\label{eq:suffineq2}
\end{equation}
Since $\tilde{T}$ is monotone and $h\le h+v$, we have $\kappa=(\tilde{T}h)_z(a)\le (\tilde{T}(h+v))_a(z)=\kappa_v$. Since $u'$ (hence $(u')^{-1}$) is strictly decreasing, we obtain
\begin{equation*}
a-(u')^{-1}(\kappa)\le a-(u')^{-1}(\kappa_v).
\end{equation*}
Since $\hat{\beta},\hat{R},\hat{Y}\ge 0$, \eqref{eq:suffineq2} holds because $h$ is decreasing.
\end{proof}

Using Lemma~\ref{lem:hv}, we can show that $\tilde{T}^k$ is a contraction for some $k\in \N$.

\begin{lem}\label{lem:Tcontract}
If Assumptions~\ref{asmp:Inada} and \ref{asmp:spectral} hold, then there exists $k\in \N$ such that $\tilde{T}^k$ is a contraction on $\cH$. Consequently, $\tilde{T}$ has a unique fixed point $h^*\in \cH$ and $\tilde{T}^nh^0\to h^*$ as $n\to\infty$ for any $h^0\in \cH$.
\end{lem}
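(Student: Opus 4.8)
The plan is to turn Lemma~\ref{lem:hv} into a contraction estimate by exploiting the monotonicity of $\tilde T$ and the spectral condition $r(K(1))<1$ in Assumption~\ref{asmp:spectral}\ref{item:rbeta}. The point is that Lemma~\ref{lem:hv} lets a uniform scalar bound on the gap between two candidates propagate, after one application of $\tilde T$, into a gap controlled by the nonnegative matrix $K(1)$ acting on that bound; iterating, the gap is controlled by the powers $K(1)^n$, which vanish because $r(K(1))<1$.

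Concretely, first I would fix $h^1,h^2\in\cH$, set $\delta\coloneqq\tilde\rho(h^1,h^2)$, and observe that $h^1\le h^2+\delta\mathbf 1$ and $h^2\le h^1+\delta\mathbf 1$, where $\delta\mathbf 1\in\R_+^Z$ and $h^i+\delta\mathbf 1\in\cH$. Using that $\tilde T$ is a monotone self map on $\cH$ (which holds under Assumptions~\ref{asmp:Inada} and \ref{asmp:spectral} by the same reasoning as in Proposition B.4 and Lemma B.3 of \cite*{MaStachurskiToda2020JET}) together with Lemma~\ref{lem:hv},
\[
\tilde T h^1\le\tilde T(h^2+\delta\mathbf 1)\le\tilde T h^2+\delta K(1)\mathbf 1,
\]
and symmetrically with the roles of $h^1,h^2$ reversed; hence $\abs{(\tilde T h^1)_z(a)-(\tilde T h^2)_z(a)}\le\delta\,(K(1)\mathbf 1)_z$ for all $a>0$ and $z\in\ZZ$. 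Since $K(1)$ is a finite nonnegative matrix (Assumption~\ref{asmp:spectral}\ref{item:Ebeta}), I can feed the vector bound $\delta K(1)\mathbf 1\in\R_+^Z$ back into the same step; an induction on $n$ then gives $\tilde T^n h^1\le\tilde T^n h^2+\delta K(1)^n\mathbf 1$ (and the reverse inequality), so
\[
\tilde\rho(\tilde T^n h^1,\tilde T^n h^2)\le\Bigl(\max_{z\in\ZZ}(K(1)^n\mathbf 1)_z\Bigr)\tilde\rho(h^1,h^2).
\]

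Next, because $r(K(1))<1$, Gelfand's formula (equivalently, $K(1)^n\to 0$ entrywise) yields $\max_{z}(K(1)^n\mathbf 1)_z\to 0$, so I would choose $k\in\N$ with $\max_{z}(K(1)^k\mathbf 1)_z<1$, making $\tilde T^k$ a contraction on the complete metric space $(\cH,\tilde\rho)$. Finally, by the standard extension of the Banach fixed point theorem: $\tilde T^k$ has a unique fixed point $h^*\in\cH$; since $\tilde T^k(\tilde T h^*)=\tilde T(\tilde T^k h^*)=\tilde T h^*$, uniqueness forces $\tilde T h^*=h^*$, and conversely any fixed point of $\tilde T$ is a fixed point of $\tilde T^k$, so $h^*$ is the unique fixed point of $\tilde T$; and writing $n=qk+r$ with $0\le r<k$, $\tilde T^n h^0=(\tilde T^k)^q(\tilde T^r h^0)\to h^*$ as $n\to\infty$, uniformly over the finitely many residues $r$.

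The only real obstacle here is bookkeeping rather than ideas: one must make sure $\tilde T$ is genuinely a monotone self map on $\cH$ and that $h+v\in\cH$ for $v\in\R_+^Z$, both of which require re-running the relevant arguments of \cite*{MaStachurskiToda2020JET} under the slightly weaker Assumptions~\ref{asmp:Inada} and \ref{asmp:spectral} used here (in particular allowing $u'(0)<\infty$). Granting those, the result reduces to the elementary linear-algebra fact that powers of a nonnegative matrix with spectral radius below one converge to zero, channelled through Lemma~\ref{lem:hv}.
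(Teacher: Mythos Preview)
Your argument is correct and matches the paper's proof essentially step for step: both use monotonicity plus Lemma~\ref{lem:hv} to propagate a vector gap through $K(1)$, iterate to get a $K(1)^n$ bound, and then invoke $r(K(1))<1$ via Gelfand's formula to obtain a contraction for some power. The only cosmetic difference is that the paper starts with the componentwise vector $v_z=\sup_a\abs{h^1_z(a)-h^2_z(a)}$ rather than your scalar $\delta\mathbf 1$, but since the final bound is taken in the sup norm (where $\norm{K(1)^k}_\infty=\max_z(K(1)^k\mathbf 1)_z$), the resulting contraction constant is identical.
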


\begin{proof}
Take any $h^1,h^2\in \cH$. Define $v\in \R_+^Z$ by 
\begin{equation*}
v_z=\sup_{a\in (0,\infty)}\abs{h^1_z(a)-h^2_z(a)}<\infty.
\end{equation*}
Then clearly $h^1\le h^2+v$, so a repeated application of Lemma~\ref{lem:hv} and the monotonicity of $\tilde{T}$ imply $\tilde{T}^kh^1\le \tilde{T}^kh^2+K(1)^kv$ for all $k$. Interchanging $h^1,h^2$, it follows that
\begin{equation*}
\abs{(\tilde{T}^kh^1)_z(a)-(\tilde{T}^kh^2)_z(a)}\le (K(1)^kv)_z
\end{equation*}
for all $k$, $a>0$, and $z\in \ZZ$. Taking the supremum over $a\in (0,\infty)$ and $z\in \ZZ$ and letting $\norm{\cdot}$ be the supremum norm on $\R^Z$ (and the induced matrix norm for $Z\times Z$ matrices), it follows that
\begin{equation*}
\tilde{\rho}(\tilde{T}^kh^1,\tilde{T}^kh^2)\le \norm{K(1)^k}\norm{v}=\norm{K(1)^k}\tilde{\rho}(h^1,h^2)
\end{equation*}
for all $k$. By the Gelfand spectral radius formula \cite[Theorem 5.7.10]{HornJohnson2013}, we have $\norm{K(1)^k}^{1/k}\to r(K(1))<1$ as $k\to\infty$ by Assumption~\ref{asmp:spectral}\ref{item:rbeta}. In particular, there exists $k\in \N$ such that $\norm{K(1)^k}<1$, which implies that $\tilde{T}^k$ is a contraction.
\end{proof}

The rest of the proof of Theorem~\ref{thm:exist} is similar to \cite*{MaStachurskiToda2020JET}. Letting $h^*\in \cH$ be the unique fixed point of $\tilde{T}$ and defining $c(a,z)=(u')^{-1}(h_z^*(a))$, we can easily verify that $\xi=c(a,z)$ satisfies the Euler equation \eqref{eq:xi}. Furthermore, $\tilde{T}^nh^0\to h^*$ for all $h^0\in \cH$ implies $T^nc_0\to c$ for all $c_0\in \cC$. Using the analogues of Lemma B.1, Lemma B.2, Proposition B.1, and Proposition 2.2 of \cite*{MaStachurskiToda2020JET}, it follows that $c(a,z)$ is the unique optimal consumption function. (The remaining conditions $r(K(0))<1$, $u'(\infty)<1$, and $\E_{z,\hat{z}}\hat{Y}<\infty$ are used to show that the value function is finite and the transversality condition holds.)

\section{Proof of main results}\label{sec:proof}

The proof of Theorem~\ref{thm:linear} is technical and consists of the following steps:
\begin{enumerate}
\item show that policy function iteration leads to increasingly tighter upper bounds on consumption functions that are asymptotically linear with explicit slopes,
\item show that the slopes of the upper bounds converge using the fixed point theory of monotone convex maps, and
\item show that the consumption functions have linear lower bounds with identical slopes to the limit of upper bounds, implying asymptotic linearity.
\end{enumerate}

Let $\cC$ be the space of candidate consumption functions and $T:\cC\to \cC$ be the time iteration operator as defined in Section~\ref{sec:AL}. Since the CRRA utility satisfies $u'(c)=c^{-\gamma}$ and hence $u'(0)=\infty$, by Lemma~\ref{lem:xi} $\xi=Tc(a,z)$ satisfies $\xi\in (0,a]$. The following proposition allows us to asymptotically bound the consumption rate $c(a,z)/a$ from above.

\begin{prop}\label{prop:UB}
Let everything be as in Theorem~\ref{thm:linear}. If $c\in \cC$ and
$$\limsup_{a\to\infty}\frac{c(a,z)}{a}\le x(z)^{-1/\gamma}$$
for some $x(z)\ge 1$ for all $z\in \ZZ$, then
\begin{equation}
\limsup_{a\to\infty}\frac{Tc(a,z)}{a}\le (Fx)(z)^{-1/\gamma}.\label{eq:Tcbar}
\end{equation}
\end{prop}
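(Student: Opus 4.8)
The plan is to take the limit $a\to\infty$ in the Euler-type equation \eqref{eq:xi} that defines $\xi=Tc(a,z)$, exploiting the asymptotic upper bound on $c$ to control the term inside the expectation. First I would write $\xi=Tc(a,z)$, so that by Lemma~\ref{lem:xi} and $u'(c)=c^{-\gamma}$,
\begin{equation*}
\xi^{-\gamma}=\min\set{\max\set{\E_z\hat\beta\hat R c(\hat R(a-\xi)+\hat Y,\hat Z)^{-\gamma},a^{-\gamma}},\infty}=\max\set{\E_z\hat\beta\hat R c(\hat R(a-\xi)+\hat Y,\hat Z)^{-\gamma},a^{-\gamma}}.
\end{equation*}
Multiplying through by $a^\gamma$ gives $(\xi/a)^{-\gamma}=\max\set{a^\gamma\,\E_z\hat\beta\hat R c(\hat R(a-\xi)+\hat Y,\hat Z)^{-\gamma},1}$. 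The key observation is that $a^\gamma c(\hat R(a-\xi)+\hat Y,\hat Z)^{-\gamma}=\hat R^{1-\gamma}\left(\frac{c(\hat R(a-\xi)+\hat Y,\hat Z)}{\hat R(a-\xi)+\hat Y}\right)^{-\gamma}\left(\frac{\hat R(a-\xi)+\hat Y}{a\hat R}\right)^{-\gamma}\hat R^{\gamma}$; more cleanly, $a^\gamma c(\cdot)^{-\gamma}=\left(\frac{c(\hat R(a-\xi)+\hat Y,\hat Z)}{a}\right)^{-\gamma}$, and one factors $c(\hat R(a-\xi)+\hat Y,\hat Z)/a = \frac{c(\hat R(a-\xi)+\hat Y,\hat Z)}{\hat R(a-\xi)+\hat Y}\cdot\frac{\hat R(a-\xi)+\hat Y}{a}$.

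Next I would take $\limsup_{a\to\infty}$. Let $\bar\xi(z)=\limsup_{a\to\infty}\xi/a\in(0,1]$ (it is bounded by $1$ since $\xi\le a$, and bounded below since it is a $T$-iterate — but strictly, one may only get a $\limsup$ along a subsequence, which is fine). Along a subsequence realizing the $\limsup$, $\frac{\hat R(a-\xi)+\hat Y}{a}\to \hat R(1-\bar\xi(z))$ a.s. (using $\hat Y/a\to0$, valid because $\E_{z,\hat z}\hat Y<\infty$), and since $\hat R(a-\xi)+\hat Y\to\infty$ whenever $\hat R>0$, the hypothesis $\limsup_{a\to\infty}c(b,\hat z)/b\le x(\hat z)^{-1/\gamma}$ gives $\liminf \frac{c(\hat R(a-\xi)+\hat Y,\hat Z)}{\hat R(a-\xi)+\hat Y}\ge$ a quantity $\ge$ ... here I must be careful about direction: I want an \emph{upper} bound on $\xi/a$, hence a \emph{lower} bound on $\xi^{-\gamma}a^\gamma$, hence I need a lower bound on $\E_z\hat\beta\hat R c(\cdot)^{-\gamma}a^\gamma$, hence a lower bound on $c(\cdot)^{-\gamma}$, i.e.\ an \emph{upper} bound on $c(\cdot)$, which is exactly what the hypothesis supplies. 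So $c(\hat R(a-\xi)+\hat Y,\hat Z)\le (x(\hat Z)^{-1/\gamma}+o(1))(\hat R(a-\xi)+\hat Y)$ on the event $\hat R>0$, giving
\begin{equation*}
a^\gamma c(\hat R(a-\xi)+\hat Y,\hat Z)^{-\gamma}\ge (x(\hat Z)+o(1))\hat R^{-\gamma}\left(\frac{\hat R(a-\xi)+\hat Y}{a}\right)^{-\gamma}\longrightarrow x(\hat Z)\hat R^{-\gamma}(\hat R(1-\bar\xi(z)))^{-\gamma}\cdot\text{(hmm)}.
\end{equation*}

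Let me restate cleanly: on $\{\hat R>0\}$ one has $a^\gamma c(\cdot)^{-\gamma}\ge (x(\hat Z)+o(1))\left(\frac{\hat R(a-\xi)+\hat Y}{a}\right)^{-\gamma}$ since $c(b,\hat z)^{-\gamma}\ge(x(\hat z)^{-1/\gamma}+o(1))^{-\gamma}b^{-\gamma}=(x(\hat z)+o(1))b^{-\gamma}$; multiplying by $\hat\beta\hat R$, applying Fatou's lemma to pass the $\liminf$ inside $\E_z$ (legitimate since the integrands are nonnegative — this is where the convention $\hat\beta\hat R^{1-\gamma}=0$ when $\hat R=0$ makes the limit integrand well-defined), and noting $\hat\beta\hat R\cdot(\hat R(1-\bar\xi(z)))^{-\gamma}=\hat\beta\hat R^{1-\gamma}(1-\bar\xi(z))^{-\gamma}$, I get
\begin{equation*}
\bar\xi(z)^{-\gamma}\ge \max\set{(1-\bar\xi(z))^{-\gamma}\,\E_z\hat\beta\hat R^{1-\gamma}x(\hat Z),\,1}=\max\set{(1-\bar\xi(z))^{-\gamma}(K(1-\gamma)x)(z),1}\ge (1-\bar\xi(z))^{-\gamma}(K(1-\gamma)x)(z).
\end{equation*}
Writing $y(z)=\bar\xi(z)^{-\gamma}$, this rearranges to $y(z)^{-1/\gamma}+(K(1-\gamma)x)(z)^{1/\gamma}y(z)^{-1/\gamma}\le 1$... — more directly, from $\bar\xi(z)^{-\gamma}(1-\bar\xi(z))^\gamma\ge(K(1-\gamma)x)(z)$ one gets $(1-\bar\xi(z))/\bar\xi(z)\ge(K(1-\gamma)x)(z)^{1/\gamma}$, hence $1/\bar\xi(z)\ge 1+(K(1-\gamma)x)(z)^{1/\gamma}$, hence $\bar\xi(z)^{-\gamma}\ge(1+(K(1-\gamma)x)(z)^{1/\gamma})^\gamma=(Fx)(z)$, which is exactly $\limsup_{a\to\infty}Tc(a,z)/a=\bar\xi(z)\le(Fx)(z)^{-1/\gamma}$, proving \eqref{eq:Tcbar}. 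The main obstacle is the interchange of limit and expectation: one cannot use dominated convergence directly (no uniform integrable bound as $a\to\infty$), so I would use Fatou's lemma in the direction that gives the lower bound on the expectation, and handle the $\limsup$/subsequence bookkeeping carefully, as well as the technical point that $c(b,\hat z)/b$ need not converge — only its $\limsup$ is controlled — which forces the estimate $c(b,\hat z)\le(x(\hat z)^{-1/\gamma}+\varepsilon)b$ for all $b$ large, with $\varepsilon\downarrow0$ at the end. The binding-constraint $\max\{\cdot,1\}$ term is harmless since it only strengthens the lower bound on $\bar\xi(z)^{-\gamma}$.
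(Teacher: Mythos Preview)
Your approach is essentially the paper's: pick a subsequence realizing $\bar\xi(z)=\limsup_a Tc(a,z)/a$, rewrite the Euler equation as $\alpha_n^{-\gamma}=\max\{\E_z\hat\beta\hat R\lambda_n^{-\gamma},1\}$ with $\lambda_n=c(\hat R(a_n-\xi_n)+\hat Y,\hat Z)/a_n$, bound $\limsup\lambda_n$ using the hypothesis, pass to the limit via Fatou, and solve the resulting inequality for $\bar\xi(z)$.

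There is one genuine gap. You write ``since $\hat R(a-\xi)+\hat Y\to\infty$ whenever $\hat R>0$,'' but this fails precisely when $\bar\xi(z)=1$: along the subsequence $(1-\alpha_n)a_n$ need not diverge (e.g.\ $\alpha_n=1-1/a_n$), so the next-period asset level stays bounded and you cannot invoke the asymptotic hypothesis on $c(\cdot,\hat z)/(\cdot)$. The paper handles this by a separate case: when $\alpha=1$ (or $\hat R=0$) one uses the crude bound $c(b,\hat z)\le b$ to get $\lambda_n\le\hat R(1-\alpha_n)+\hat Y/a_n\to 0$, so $\limsup\lambda_n\le 0=x(\hat Z)^{-1/\gamma}\hat R(1-\alpha)$ still holds and the Fatou step goes through (yielding $1\ge\infty$ unless $(K(1-\gamma)x)(z)=0$, which forces $\bar\xi(z)<1$ in the nontrivial case). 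Once you insert this case split, your argument is complete and matches the paper's.
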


\begin{proof}
Let $\alpha=\limsup_{a\to\infty}Tc(a,z)/a$. By definition, we can take an increasing sequence $\set{a_n}$ such that $\alpha=\lim_{n\to\infty} Tc(a_n,z)/a_n$. Define $\alpha_n=Tc(a_n,z)/a_n\in (0,1]$ and
\begin{equation}
\lambda_n=\frac{c(\hat{R}(1-\alpha_n)a_n+\hat{Y},\hat{Z})}{a_n}>0.\label{eq:lambdan}
\end{equation}
Let us show that
\begin{equation}
\limsup_{n\to\infty}\lambda_n\le x(\hat{Z})^{-1/\gamma}\hat{R}(1-\alpha).\label{eq:lambdalim}
\end{equation}
To see this, if $\alpha<1$ and $\hat{R}>0$, then since $\hat{R}(1-\alpha_n)a_n\to \hat{R}(1-\alpha)\cdot \infty=\infty$, by assumption we have
\begin{align*}
\limsup_{n\to\infty}\lambda_n&=\limsup_{n\to\infty}\frac{c(\hat{R}(1-\alpha_n)a_n+\hat{Y},\hat{Z})}{\hat{R}(1-\alpha_n)a_n+\hat{Y}}\left(\hat{R}(1-\alpha_n)+\frac{\hat{Y}}{a_n}\right)\\
&\le \limsup_{a\to\infty}\frac{c(a,\hat{Z})}{a}\times \hat{R}(1-\alpha)\\
&\le x(\hat{Z})^{-1/\gamma}\hat{R}(1-\alpha),
\end{align*}
which is \eqref{eq:lambdalim}. If $\alpha=1$ or $\hat{R}=0$, then since $c(a,z)\le a$, we have
\begin{align*}
\lambda_n&=\frac{c(\hat{R}(1-\alpha_n)a_n+\hat{Y},\hat{Z})}{\hat{R}(1-\alpha_n)a_n+\hat{Y}}\left(\hat{R}(1-\alpha_n)+\frac{\hat{Y}}{a_n}\right)\\
&\le \hat{R}(1-\alpha_n)+\frac{\hat{Y}}{a_n}\to \hat{R}(1-\alpha)=0,
\end{align*}
so again \eqref{eq:lambdalim} holds.

Since $\xi_n\coloneqq Tc(a_n,z)=\alpha_na_n$ solves the Euler equation \eqref{eq:xi}, using $u'(c)=c^{-\gamma}$, $u'(0)=\infty$, and the definition of $\lambda_n$ in \eqref{eq:lambdan}, we have
\begin{align}
0&=\frac{u'(\alpha_n a_n)}{u'(a_n)}-\max\set{\E_z \hat{\beta}\hat{R}\frac{u'(c(\hat{R}(1-\alpha_n)a_n+\hat{Y},\hat{Z}))}{u'(a_n)},1}\notag \\
&=\alpha_n^{-\gamma}-\max\set{\E_z \hat{\beta}\hat{R}(c(\hat{R}(1-\alpha_n)a_n+\hat{Y},\hat{Z})/a_n)^{-\gamma},1} \notag \\
&=\alpha_n^{-\gamma}-\max\set{\E_z\hat{\beta}\hat{R}\lambda_n^{-\gamma},1}\notag \\
\implies \alpha_n^{-\gamma}&=\max\set{\E_z\hat{\beta}\hat{R}\lambda_n^{-\gamma},1}\ge \E_z\hat{\beta}\hat{R}\lambda_n^{-\gamma}.\label{eq:eulern}
\end{align}

Now letting $n\to\infty$ in \eqref{eq:eulern} and applying Fatou's lemma, we obtain
\begin{align*}
\alpha^{-\gamma}=\lim_{n\to\infty}\alpha_n^{-\gamma}&\ge \liminf_{n\to\infty} \E_z\hat{\beta}\hat{R}\lambda_n^{-\gamma}\\
&\ge \E_z \liminf_{n\to\infty}\hat{\beta}\hat{R}\lambda_n^{-\gamma}\\
&=\E_z\hat{\beta}\hat{R}\left[\limsup_{n\to\infty} \lambda_n\right]^{-\gamma}\\
&\ge \E_z\hat{\beta}\hat{R}\left[x(\hat{Z})^{-1/\gamma}\hat{R}(1-\alpha)\right]^{-\gamma}
\end{align*}
by \eqref{eq:lambdalim}. Solving the inequality for $\alpha$ and using the convention $\beta R^{1-\gamma}=(\beta R)R^{-\gamma}$ and $0\cdot \infty=0$ (see Footnote~\ref{fn:convention}), we obtain
$$\limsup_{a\to\infty}\frac{Tc(a,z)}{a}=\alpha\le \frac{1}{1+\left(\E_z \hat{\beta}\hat{R}^{1-\gamma}x(\hat{Z})\right)^{1/\gamma}}=(Fx)(z)^{-1/\gamma}. \qedhere$$
\end{proof}

Starting from the trivial upper bound $c(a,z)\le a$ and applying Proposition~\ref{prop:UB} repeatedly, we obtain increasingly tighter upper bounds of $c(a,z)$. The following proposition characterizes the limits of the slopes of the upper bounds.

\begin{prop}\label{prop:MPC_seq}
Let everything be as in Theorem~\ref{thm:linear}. Then $F$ in \eqref{eq:fixedpoint} has a fixed point $x^*\in \R_+^Z$ if and only if $r(K(1-\gamma))<1$, in which case the fixed point is unique. Take any $x_0\in \R_+^Z$ and define the sequence $\set{x_n}_{n=1}^\infty\subset \R_+^Z$ by
\begin{equation}
x_n=Fx_{n-1} \label{eq:xn}
\end{equation}
for all $n\in \N$. Then the following statements are true:
\begin{enumerate}
\item If $r(K(1-\gamma))<1$, then $\set{x_n}_{n=1}^\infty$ converges to $x^*$.
\item If $r(K(1-\gamma))\ge 1$ and $K(1-\gamma)$ is irreducible, then $x_n(z)\to x^*(z)=\infty$ as $n\to\infty$ for all $z\in \ZZ$.
\end{enumerate}
\end{prop}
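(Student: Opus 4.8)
The plan is to read the fixed-point system $x=Fx$ of \eqref{eq:fixedpoint} through two structural features of the operator $F$. Writing $g(y)=(1+y^{1/\gamma})^\gamma$ and $K:=K(1-\gamma)$, so that $(Fx)(z)=g((Kx)(z))$ with $K$ a finite nonnegative matrix, the properties I would rely on are: (a) $F$ is monotone on $\R_+^Z$, since $g$ is increasing and $K\ge 0$; and (b) $F$ is strictly order-concave, i.e.\ $F(\lambda x)\gg\lambda F(x)$ for every $\lambda\in(0,1)$ and $x\in\R_+^Z$ --- this reduces to the elementary inequality $(1+\lambda^{1/\gamma}t)^\gamma>\lambda(1+t)^\gamma$ for finite $t\ge 0$, which holds because $t\mapsto(1+\lambda^{1/\gamma}t)/(1+t)$ decreases strictly from $1$ to $\lambda^{1/\gamma}$. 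I would also use $g(y)\ge y$, $g(y)/y\to 1$ as $y\to\infty$, continuity of $g$, and the fact that $F$ sends the zero vector $0$ to the all-ones vector $1$, so that $F$ maps $\R_+^Z$ into $[1,\infty)^Z$. These place the problem inside the fixed-point theory of increasing concave operators (\cite*{Du1990}), from which I would assemble the statement.

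\emph{The equivalence and uniqueness.} If $x^*=Fx^*\in\R_+^Z$, then $x^*\ge 1\gg 0$ and $x^*(z)=(1+(Kx^*)(z)^{1/\gamma})^\gamma>(Kx^*)(z)$, so $Kx^*\ll x^*$; the weighted supremum norm with weights $x^*$ then satisfies $\norm{K}<1$, hence $r(K)<1$. Conversely, suppose $r(K)<1$ and fix $\epsilon>0$ with $(1+\epsilon)r(K)<1$. Since $g(y)/y\to 1$, there is $M\ge 1$ with $g(y)\le(1+\epsilon)y+M$ for all $y\ge 0$, so $(Fx)(z)\le(Ax)(z)+M$ with $A:=(1+\epsilon)K$ and $r(A)<1$; hence $\bar x:=M(I-A)^{-1}1\gg 0$ is a supersolution, $(F\bar x)(z)\le(A\bar x)(z)+M=\bar x(z)$. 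Monotonicity then makes $\set{F^n(0)}_{n\ge 0}$ increasing and bounded above by $\bar x$, so it converges, by continuity of $F$, to a fixed point $x^*$. For uniqueness, given fixed points $x^*,y^*$ (both $\gg 0$), put $\lambda^*=\sup\set{\lambda>0:\lambda y^*\le x^*}\in(0,\infty)$; if $\lambda^*<1$, strict order-concavity gives $x^*=Fx^*\ge F(\lambda^*y^*)\gg\lambda^* Fy^*=\lambda^*y^*$, and since $y^*\gg 0$ this upgrades in finite dimensions to $x^*\ge(\lambda^*+\delta)y^*$ for some $\delta>0$, contradicting maximality; so $\lambda^*\ge 1$, i.e.\ $x^*\ge y^*$, and by symmetry $x^*=y^*$.

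\emph{Item (i).} For arbitrary $x_0\in\R_+^Z$, take the supersolution $\bar x$ above with $M$ further enlarged so that also $\bar x\ge x_0$ (enlarging $M$ preserves $F\bar x\le\bar x$). Then $0\le x_0\le\bar x$, and applying $F^n$ and using monotonicity gives $F^n(0)\le x_n\le F^n\bar x$; the left side increases to $x^*$ by construction, the right side decreases (because $F\bar x\le\bar x$) to a fixed point --- necessarily $x^*$ by uniqueness --- so the squeeze yields $x_n\to x^*$.

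\emph{Item (ii)} is the step I expect to be the main obstacle. Assume $r(K)\ge 1$ and $K$ irreducible, and let $w\gg 0$ be a Perron eigenvector, $Kw=r(K)w$. Since $x_1=Fx_0\ge 1\ge\delta w$ for some $\delta>0$ and $F$ is monotone, it suffices to show $F^n(\delta w)\to\infty$ componentwise. From $g(y)\ge y$ and $r(K)\ge 1$, $(F(\delta w))(z)\ge\delta(Kw)(z)=\delta r(K)w(z)\ge\delta w(z)$, so $\set{F^n(\delta w)}$ is increasing; let $x^\infty\in[0,\infty]^Z$ be its componentwise limit and $S=\set{z:x^\infty(z)=\infty}$. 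If $z\notin S$ then $(Kx_{n-1})(z)=\sum_{\hat z}K_{z\hat z}x_{n-1}(\hat z)$ stays bounded (as $x_n(z)=(1+(Kx_{n-1})(z)^{1/\gamma})^\gamma$ is bounded), forcing $\sum_{\hat z}K_{z\hat z}x^\infty(\hat z)<\infty$ and hence, since $K$ is finite, $K_{z\hat z}=0$ for every $\hat z\in S$. Thus $K_{z\hat z}=0$ whenever $z\notin S$ and $\hat z\in S$; if $S$ is neither empty nor all of $\ZZ$, this exhibits $K$ as block upper triangular after relabeling states, contradicting irreducibility. The remaining possibilities are $S=\ZZ$, which is the claim, or $S=\emptyset$; but $S=\emptyset$ makes $\set{F^n(\delta w)}$ converge to a fixed point, forcing $r(K)<1$ by the first part --- a contradiction. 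Hence $S=\ZZ$, i.e.\ $x_n(z)\to\infty$ for every $z$. The delicate points here are the $0/\infty$ arithmetic linking $x^\infty$ to the finite entries of $K$ and the clean extraction of the reducibility contradiction; the rest is monotone-convergence bookkeeping.
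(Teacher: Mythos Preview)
Your proof is correct and runs parallel to the paper's, but with two differences worth noting. For uniqueness, the paper splits into cases $\gamma\le 1$ (where $\phi$, hence $F$, is convex) and $\gamma\ge 1$ (concave) and invokes the fixed-point theorem for monotone convex/concave maps of \cite*{Du1990} (via \cite*{Zhang2013}); you instead observe that $F$ is \emph{subhomogeneous} ($F(\lambda x)\gg\lambda Fx$ for $\lambda\in(0,1)$) uniformly in $\gamma$, which lets you run the Krasnoselskii-type $\lambda^*$ argument directly without splitting cases. This is a nice simplification. For the divergence in (ii), the paper's route is shorter: starting from $x_0=0$, the sequence is monotone; if bounded it would converge to a fixed point, contradicting $r(K)\ge1$; so $x_n(\hat z)\to\infty$ for some $\hat z$, and then the single inequality $x_{m+n}(z)\ge(K^m x_n)(z)\ge K^m_{z\hat z}\,x_n(\hat z)$ (from $\phi(t)\ge t$, so $Fx\ge Kx$) together with irreducibility immediately propagates divergence to every coordinate. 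Your argument via the set $S=\{z:x^\infty(z)=\infty\}$ and the induced block structure of $K$ reaches the same conclusion but is more elaborate than necessary here; on the other hand, your decomposition is exactly what one needs in the reducible setting (cf.\ Theorem~\ref{thm:complete}), so it has the advantage of generalizing.
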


\begin{proof}
Immediate from Lemmas~\ref{lem:phi} and \ref{lem:xn} below.
\end{proof}

\begin{lem}\label{lem:phi}
Let $\gamma>0$ and define $\phi:\R_+\to\R_+$ by $\phi(t)=(1+t^{1/\gamma})^\gamma$. Then there exist $a\ge 1$ and $b\ge 0$ such that $\phi(t)\le at+b$. Furthermore, we can take $a\ge 1$ arbitrarily close to 1. (The choice of $b$ may depend on $a$.)
\end{lem}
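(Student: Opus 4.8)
The plan is to reduce the claim to the elementary fact that $\phi(t)/t\to 1$ as $t\to\infty$. First I would rewrite the ratio as
\[
\frac{\phi(t)}{t}=\frac{(1+t^{1/\gamma})^\gamma}{t}=\left(1+t^{-1/\gamma}\right)^\gamma,
\]
and note that, being the composition of the decreasing map $t\mapsto t^{-1/\gamma}$ with the increasing map $s\mapsto(1+s)^\gamma$, it is continuous and strictly decreasing on $(0,\infty)$, with $\phi(t)/t\to\infty$ as $t\downarrow 0$ and $\phi(t)/t\downarrow 1$ as $t\to\infty$. I would also record the trivial facts that $\phi$ itself is increasing on $[0,\infty)$ and $\phi(0)=1$.

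Next, fix any $a>1$. Solving $(1+t^{-1/\gamma})^\gamma=a$ gives the unique threshold $T=T(a)=(a^{1/\gamma}-1)^{-\gamma}\in(0,\infty)$, which is well defined since $a^{1/\gamma}>1$. Monotonicity of $t\mapsto\phi(t)/t$ then gives $\phi(t)\le at$ for all $t\ge T$, and monotonicity of $\phi$ gives $\phi(t)\le\phi(T)$ for all $t\in[0,T]$. Setting $b\coloneqq\phi(T)=(1+T^{1/\gamma})^\gamma$, which satisfies $b\ge\phi(0)=1>0$, I would combine the two regimes to conclude $\phi(t)\le\max\set{at,b}\le at+b$ for every $t\ge 0$. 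This is the asserted affine bound, with $a\ge 1$, $b\ge 0$, and $b$ depending on $a$ through $T$; and since $a>1$ was arbitrary, $a$ can be taken as close to $1$ as desired.

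There is no genuine obstacle here, as the lemma is elementary, but two remarks clarify the phrasing. When $\gamma>1$ one cannot take the unit slope $a=1$: a direct computation gives $\phi'(t)=(1+t^{-1/\gamma})^{\gamma-1}>1$, so $\phi(t)-t$ is increasing and unbounded, and no affine bound of slope $1$ exists — hence the statement allows any $a\ge 1$. Relatedly, a tangent-line argument at $t=0$ fails for $\gamma>1$ because $\phi'(0^+)=\infty$, whereas working with the ratio $\phi(t)/t$ sidesteps the behavior near the origin and covers all $\gamma>0$ uniformly; for $\gamma\le 1$ one even has $\phi'(t)=(1+t^{-1/\gamma})^{\gamma-1}\le 1$ everywhere, which yields the sharper bound $\phi(t)\le t+1$ directly.
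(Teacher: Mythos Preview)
Your proof is correct and takes a genuinely different route from the paper's. The paper splits into cases: for $\gamma\le 1$ it shows directly that $\phi(t)\le 1+t$ (exactly your closing remark), while for $\gamma>1$ it computes $\phi''(t)<0$, so $\phi$ is concave and any tangent line $\phi(u)+\phi'(u)(t-u)$ gives a global affine upper bound, with slope $\phi'(u)=(1+u^{-1/\gamma})^{\gamma-1}\downarrow 1$ as $u\to\infty$. You instead observe that the ratio $\phi(t)/t=(1+t^{-1/\gamma})^\gamma$ is strictly decreasing to $1$, which yields a threshold $T=T(a)$ for any $a>1$ and then a piecewise bound $\phi(t)\le\max\{at,\phi(T)\}\le at+b$.

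Your argument is more uniform in $\gamma$ (no case split is needed for the main estimate), and it is slightly more elementary since it avoids the second derivative. The paper's tangent-line approach, on the other hand, is marginally sharper in that the affine bound is tight at the point of tangency, and the concavity computation it records (equation for $\phi''$) is reused in the next lemma to determine whether the map $F=\phi\circ K$ is concave or convex. Either way, the lemma is routine and both arguments do the job.
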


\begin{proof}
The proof depends on $\gamma \gtrless 1$.
\begin{case}[$\gamma\le 1$]
Let us show that we can take $a=b=1$. Let $f(t)=1+t-\phi(t)$. Then $f(0)=0$ and
$$f'(t)=1-\phi'(t)=1-\gamma(1+t^{1/\gamma})^{\gamma-1}\frac{1}{\gamma}t^{1/\gamma-1}=1-(t^{-1/\gamma}+1)^{\gamma-1}\ge 0,$$
so $f(t)\ge 0$ for all $t\ge 0$. Therefore $\phi(t)\le 1+t$.
\end{case}
\begin{case}[$\gamma>1$]
By simple algebra we obtain
\begin{equation}
\phi''(t)=(\gamma-1)(t^{-1/\gamma}+1)^{\gamma-2}\left(-\frac{1}{\gamma}t^{-1/\gamma-1}\right)<0,\label{eq:phi''}
\end{equation}
so $\phi$ is increasing and concave. Therefore $\phi(t)\le \phi(u)+\phi'(u)(t-u)$ for all $t,u$. Letting $a=\phi'(u)$ and $b=\max\set{0,\phi(u)-\phi'(u)u}$, we obtain $\phi(t)\le at+b$. Furthermore, since $\phi'(t)=(t^{-1/\gamma}+1)^{\gamma-1}\to 1$ as $t\to\infty$, we can take $a=\phi'(u)$ arbitrarily close to 1 by taking $u$ large enough.\qedhere
\end{case}
\end{proof}

%The following lemma slightly generalizes a result in \cite*{Toda2019JME}.

\begin{lem}\label{lem:xn}
Let $\gamma>0$ and $K$ be a $Z \times Z$ nonnegative matrix. Define $F:\R_+^Z\to \R_+^Z$ by $Fx=\phi(Kx)$, where $\phi$ is as in Lemma~\ref{lem:phi} and is applied entry-wise. Then $F$ has a fixed point $x^*\in \R_+^Z$ if and only if $r(K)<1$, in which case $x^*$ is unique.

Take any $x_0\in \R_+^Z$ and define the sequence $\set{x_n}_{n=1}^\infty\subset \R_+^Z$ by $x_n=Fx_{n-1}$ for all $n\in \N$. Then the following statements are true:
\begin{enumerate}
\item If $r(K)<1$, then $\set{x_n}_{n=1}^\infty$ converges to $x^*$.
\item If $r(K)\ge 1$ and $K$ is irreducible, then $x_n(z)\to x^*(z)=\infty$ as $n\to\infty$ for all $z\in \ZZ$.
\end{enumerate}
\end{lem}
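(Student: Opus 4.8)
The plan is to isolate the scalar facts about $\phi(t)=(1+t^{1/\gamma})^\gamma$ that make $F=\phi\circ K$ (with $\phi$ applied entrywise) well behaved, and then run a monotone-iteration argument. The facts I use, all valid uniformly in $\gamma>0$ so that no case split on $\gamma$ is needed, are: (a) $\phi$ is continuous and strictly increasing with $\phi(0)=1$, hence $F$ is continuous, monotone, and $F0=\mathbf 1\gg 0$; (b) $\phi(t)>t$ for every $t\ge 0$, so every fixed point $x^*$ of $F$ satisfies $x^*_z>(Kx^*)_z$ for all $z$ and $x^*\ge\mathbf 1$; (c) $t\mapsto\phi(t)/t$ is strictly decreasing on $(0,\infty)$, equivalently $\phi(\tau t)>\tau\phi(t)$ for all $t>0$, $\tau\in(0,1)$ (strict superhomogeneity); and (d) by Lemma~\ref{lem:phi}, for $a\ge 1$ arbitrarily close to $1$ there is $b\ge 1$ with $\phi(t)\le at+b$ for all $t\ge 0$.

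First I would prove that a finite fixed point exists only if $r(K)<1$. If $x^*\in\R_+^Z$ solves $x^*=Fx^*$, then by (b) $x^*\ge\mathbf 1\gg 0$ and, since $x^*_z-(Kx^*)_z=\phi((Kx^*)_z)-(Kx^*)_z>0$, we get $Kx^*\le\theta x^*$ with $\theta\coloneqq\max_z(Kx^*)_z/x^*_z<1$; iterating gives $K^nx^*\le\theta^n x^*$, so $K^n\to 0$ entrywise and $r(K)\le\theta<1$. Conversely, assume $r(K)<1$ and use (d) to pick $a\ge 1$ with $ar(K)<1$ and a corresponding $b\ge 1$; then $aK$ has spectral radius below $1$, so $\bar x\coloneqq(I-aK)^{-1}(b\mathbf 1)=\sum_{n\ge 0}(aK)^n(b\mathbf 1)\ge b\mathbf 1\gg\mathbf 1$ is well defined, and since $aK\bar x+b\mathbf 1=\bar x$ it is a supersolution, $F\bar x\le\bar x$; more generally $\lambda\bar x$ is a supersolution for every $\lambda\ge 1$. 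Hence $[0,\bar x]$ is $F$-invariant, $F^n0$ is coordinatewise increasing and bounded above by $\bar x$, and its limit $x^*\in[\mathbf 1,\bar x]$ is a fixed point by continuity. Uniqueness follows from (c): given two fixed points $x^*,y^*\ (\ge\mathbf 1)$, let $t^*=\sup\{t>0:ty^*\le x^*\}\in(0,\infty)$, a value attained by closedness. If $t^*<1$, then $x^*=Fx^*\ge F(t^*y^*)=\phi(t^*Ky^*)$, and coordinatewise $\phi(t^*(Ky^*)_z)>t^*\phi((Ky^*)_z)=t^*y^*_z$ whenever $(Ky^*)_z>0$, while $x^*_z=1=y^*_z>t^*y^*_z$ when $(Ky^*)_z=0$; finiteness of $\ZZ$ then gives $(t^*+\varepsilon)y^*\le x^*$ for some $\varepsilon>0$, contradicting maximality, so $t^*\ge 1$, i.e.\ $y^*\le x^*$, and symmetrically $x^*\le y^*$. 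For convergence from an arbitrary $x_0\in\R_+^Z$, choose $\lambda\ge 1$ with $\lambda\bar x\ge x_0$; then $F^n0\le F^nx_0\le F^n(\lambda\bar x)$, the left side increases to the unique fixed point $x^*$, the right side decreases (as $\lambda\bar x$ is a supersolution) to a fixed point, hence also to $x^*$, and the squeeze yields $F^nx_0\to x^*$.

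For the case $r(K)\ge 1$ with $K$ irreducible, I would again track $x_n\coloneqq F^n0$, coordinatewise increasing with limit $x^*\in[1,\infty]^Z$. If $x^*$ were finite it would be a fixed point in $\R_+^Z$, contradicting the first paragraph; so $x^*_z=\infty$ for some $z$. To propagate this, let $S=\{z:x^*_z<\infty\}$ and suppose $S\neq\emptyset$. Passing to the limit in $x_n(z)=\phi((Kx_{n-1})(z))$ (monotone convergence inside $K$, continuity of $\phi$ on $[0,\infty]$ with $\phi(\infty)=\infty$) gives $x^*_z=\phi\big(\sum_{\hat z}K_{z\hat z}x^*_{\hat z}\big)$, so finiteness of $x^*_z$ forces $K_{z\hat z}=0$ whenever $\hat z\notin S$. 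Thus, ordering $S$ before $S^c$, $K$ is block triangular with a zero $(S,S^c)$ block; since $\emptyset\neq S\neq\ZZ$ (as $z\notin S$), this contradicts irreducibility. Hence $x^*\equiv\infty$, and for general $x_0\ge 0$ we conclude $F^nx_0\ge F^n0\to\infty$.

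I expect the main obstacle to be the bookkeeping in the irreducible divergence step: justifying the limit interchange $x_n(z)=\phi((Kx_{n-1})(z))\to\phi\big(\sum_{\hat z}K_{z\hat z}x^*_{\hat z}\big)$ cleanly when some limits are $+\infty$, and then extracting the forbidden block structure from finiteness of $x^*_z$. The uniqueness step is delicate only in that strict superhomogeneity (c) must be applied coordinatewise with the coordinates $(Ky^*)_z=0$ handled separately; once (a)--(d) are in hand, the remainder is a routine monotone-iteration sandwich.
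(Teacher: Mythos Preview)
Your proof is correct, but takes a genuinely different (and more self-contained) route than the paper.

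For existence and uniqueness when $r(K)<1$, the paper constructs the same supersolution $v_0=(I-aK)^{-1}(b\mathbf 1)$ but then invokes Du's fixed-point theorem for monotone convex/concave maps (via \cite{Zhang2013}), which requires noting that $F$ is convex when $\gamma\le 1$ and concave when $\gamma>1$. You instead exploit the strict superhomogeneity $\phi(\tau t)>\tau\phi(t)$ to run a Krasnoselskii-type argument, which works uniformly in $\gamma$ and avoids both the case split and the external reference. For convergence from arbitrary $x_0$, the paper iterates the affine bound to show $x_n$ eventually lands in $[0,v_0]$ and again cites Du/Zhang; your sandwich between $F^n0$ and $F^n(\lambda\bar x)$ (using that $\lambda\bar x$ is a supersolution for all $\lambda\ge 1$) is more elementary and equally effective. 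For the ``only if'' direction, the paper pairs $x^*\gg Kx^*$ with a left Perron eigenvector, while you use the equivalent Collatz--Wielandt bound $r(K)\le\max_z(Kx^*)_z/x^*_z$. Finally, for divergence in the irreducible case, the paper uses $\phi(t)\ge t$ to get $x_{m+n}\ge K^m x_n$ and then propagates $\infty$ via $K^m_{z\hat z}>0$; your block-structure argument (finite coordinates force $K_{z\hat z}=0$ into the infinite ones, contradicting irreducibility) reaches the same conclusion by a slightly different path. The paper's route is shorter once the cited theorem is granted; yours is fully elementary and more transparent about which property of $\phi$ drives each step.
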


\begin{proof}
We divide the proof into three steps.

\begin{step}
If $r(K)\ge 1$, then $F$ does not have a fixed point. If in addition $K$ is irreducible, then $x_n(z)\to\infty$ for all $z\in \Z$.
\end{step}

We prove the contrapositive. Suppose that $F$ has a fixed point $x^*\in \R_+^Z$. Since $\phi>0$, we have $x^*\gg 0$. Since clearly $\phi(t)>t$ for all $t\ge 0$, we have $x^*=\phi(Kx^*)\gg Kx^*$. Since $K$ is a nonnegative matrix, by the Perron-Frobenius theorem, we can take a right eigenvector $y>0$ such that $y' K=r(K)y'$. Since $x^*\gg Kx^*$ and $y>0$, we obtain $r(K)y'x^*=y'Kx^*<y'x^*$.
Dividing both sides by $y' x^*>0$, we obtain $r(K)<1$.

Suppose that $r(K)\ge 1$ and $K$ is irreducible. Since $K$ is nonnegative and $\phi$ is strictly increasing, $F=\phi\circ K$ is a monotone map. Therefore to show $x_n(z)\to\infty$, it suffices to show this when $x_0=0$. Since $x_1=Fx_0=F0=1\ge 0$, applying $F^{n-1}$ we obtain $x_n\ge x_{n-1}$ for all $n$. Since $\set{x_n}_{n=0}^\infty$ is an increasing sequence in $\R_+^Z$, if it is bounded, then it converges to some $x^*\in \R_+^Z$. By continuity, $x^*$ is a fixed point of $F$, which is a contradiction. Therefore $\set{x_n}_{n=0}^\infty$ is unbounded, so $x_n(\hat{z})\to\infty$ for at least one $\hat{z}\in \ZZ$. Since by assumption $K$ is irreducible, for each $(z,\hat{z})\in \ZZ^2$, there exists $m\in \N$ such that $K^m_{z\hat{z}}>0$. Therefore
$$x_{m+n}(z)\ge K^m_{z\hat{z}}x_n(\hat{z})\to \infty$$
as $n\to\infty$, so $x_n(z)\to\infty$ for all $z\in \ZZ$.

\begin{step}
If $r(K)<1$, then $F$ has a unique fixed point $x^*$ in $\R_+^Z$. If we take $a\in [1,1/r(K))$ and $b>0$ as in Lemma~\ref{lem:phi}, then
\begin{equation}
1\le x^*\ll (I-aK)^{-1}b1.\label{eq:x*ub}
\end{equation}
\end{step}

Take any fixed point $x^*\in \R_+^Z$ of $F$. Since $\phi(t)\ge 1$ for all $t\ge 0$, clearly $x^*\ge 1$. Since $K$ is nonnegative and $ar(K)<1$, the inverse $(I-aK)^{-1}=\sum_{k=0}^\infty (aK)^k$ exists and is nonnegative. Therefore
$$x^*=Fx^*\ll aKx^*+b1\implies x^*\ll (I-aK)^{-1}b1,$$
which is \eqref{eq:x*ub}.

The proof of existence and uniqueness uses a similar strategy to \cite*{BorovickaStachurski2020}. Clearly $F$ is a monotone map. Using \eqref{eq:phi''}, it follows that $F$ is convex if $\gamma\le 1$ and concave if $\gamma\ge 1$. Define $u_0=0$ and $v_0=(I-aK)^{-1}b1\gg 0$. Then $Fu_0=1\gg 0=u_0$ and $Fv_0=\phi(Kv_0)\ll aKv_0+b1=v_0$. Hence by Theorem 2.1.2 of \cite*{Zhang2013}, which is based on Theorem 3.1 of \cite*{Du1990}, $F$ has a unique fixed point in $[u_0,v_0]=[0,v_0]$. Since by \eqref{eq:x*ub} any fixed point $x^*$ must lie in this interval, it follows that $F$ has a unique fixed point in $\R_+^Z$.

\begin{step}
If $r(K)<1$, then $\set{x_n}_{n=1}^\infty$ converges to $x^*$.
\end{step}

Let $a\in [1,1/r(K))$, $b>0$, and $v_0\gg 0$ be as in the previous step. Since $Fx=\phi(Kx)$, we obtain
$$x_n=Fx_{n-1}=\phi(Kx_{n-1})\ll aKx_{n-1}+b1.$$
Iterating, we obtain
\begin{align*}
x_n&\ll (aK)^nx_0+\sum_{k=0}^{n-1}(aK)^k(b1)\\
&=(aK)^nx_0+\sum_{k=0}^\infty (aK)^k(b1)-\sum_{k=n}^\infty(aK)^k(b1)\\
&=(aK)^n(x_0-v_0)+v_0.
\end{align*}
Since $r(aK)=ar(K)<1$, we have $(aK)^n(x_0-v_0)\to 0$ as $n\to\infty$. Therefore $0=u_0\ll x_n\ll v_0$ for large enough $n$. Again by Theorem 2.1.2 of \cite*{Zhang2013}, we have $x_n\to x^*$ as $n\to\infty$.
\end{proof}

The following proposition allows us to obtain explicit linear lower bounds on consumption functions.

\begin{prop}\label{prop:LB}
Let everything be as in Theorem~\ref{thm:linear}. Suppose $r(K(1-\gamma))<1$ and let $x^*\in \R_{++}^Z$ be the unique fixed point of $F$ in \eqref{eq:fixedpoint}. Restrict the candidate space to
\begin{equation}
\cC_0=\set{c\in \cC| c(a,z)\ge \epsilon(z)a \quad \text{for all } a>0 \text{ and } z\in \ZZ},\label{eq:C2}
\end{equation}
where $\epsilon(z)=x^*(z)^{-1/\gamma}\in (0,1]$. Then $T\cC_0\subset \cC_0$.
\end{prop}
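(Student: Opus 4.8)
The plan is to verify the inclusion $T\cC_0\subset\cC_0$ directly. For $c\in\cC_0$ we already know from Theorem~\ref{thm:exist} that $Tc\in\cC$, so everything reduces to checking the lower bound $Tc(a,z)\ge\epsilon(z)a$ for each fixed $a>0$ and $z\in\ZZ$. Write $\xi=Tc(a,z)$; by Lemma~\ref{lem:xi} and $u'(0)=\infty$ we have $\xi\in(0,a]$, and since $u'(c)=c^{-\gamma}$ the defining equation \eqref{eq:xi} for $\xi$ reads
\[
\xi^{-\gamma}=\max\set{\E_z\hat\beta\hat R\,c(\hat R(a-\xi)+\hat Y,\hat Z)^{-\gamma},\ a^{-\gamma}}.
\]
I would argue by contradiction, assuming $\xi<\epsilon(z)a$. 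Since $x^*=Fx^*\ge 1$ (because $\phi\ge 1$ entrywise in \eqref{eq:fixedpoint}), we have $\epsilon(z)=x^*(z)^{-1/\gamma}\le 1$, so $\xi<\epsilon(z)a\le a$ and $\xi^{-\gamma}>\epsilon(z)^{-\gamma}a^{-\gamma}=x^*(z)a^{-\gamma}\ge a^{-\gamma}$; hence the maximum in the displayed equation must be attained by the first term, i.e.\ $\xi^{-\gamma}=\E_z\hat\beta\hat R\,c(\hat R(a-\xi)+\hat Y,\hat Z)^{-\gamma}$.

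Next I would bound the right-hand side using $c\in\cC_0$. Because $a-\xi>0$, on any realization with $\hat R>0$ we get $c(\hat R(a-\xi)+\hat Y,\hat Z)\ge\epsilon(\hat Z)(\hat R(a-\xi)+\hat Y)\ge\epsilon(\hat Z)\hat R(a-\xi)>0$, hence $\hat\beta\hat R\,c(\hat R(a-\xi)+\hat Y,\hat Z)^{-\gamma}\le\hat\beta\hat R^{1-\gamma}x^*(\hat Z)(a-\xi)^{-\gamma}$, using $\epsilon(\hat Z)^{-\gamma}=x^*(\hat Z)$; when $\hat R=0$ both sides vanish under the conventions of Footnote~\ref{fn:convention}, so the inequality holds in every case. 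Taking $\E_z$ and recalling the definition of $K$ in \eqref{eq:Ktheta} gives $\xi^{-\gamma}\le(a-\xi)^{-\gamma}(K(1-\gamma)x^*)(z)$, a finite quantity since $K(1-\gamma)$ is finite.

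The crucial step is then to use that $x^*$ solves \eqref{eq:fixedpoint}: rearranging $x^*(z)=(1+(K(1-\gamma)x^*)(z)^{1/\gamma})^\gamma$ yields $(K(1-\gamma)x^*)(z)=(x^*(z)^{1/\gamma}-1)^\gamma$, so that $\xi^{-\gamma}\le(a-\xi)^{-\gamma}(x^*(z)^{1/\gamma}-1)^\gamma$. Since $\xi^{-\gamma}>0$ this forces $x^*(z)>1$, and raising both sides to the (order-reversing) power $-1/\gamma$ gives $\xi\ge(a-\xi)(x^*(z)^{1/\gamma}-1)^{-1}$, equivalently $\xi\,x^*(z)^{1/\gamma}\ge a$, i.e.\ $\xi\ge x^*(z)^{-1/\gamma}a=\epsilon(z)a$, contradicting the assumption $\xi<\epsilon(z)a$. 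Hence $Tc(a,z)=\xi\ge\epsilon(z)a$, so $Tc\in\cC_0$.

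I expect the main difficulty to be bookkeeping rather than anything conceptual: carefully handling the boundary possibilities $\hat R=0$, $\xi=a$, and $x^*(z)=1$ consistently with the $0\cdot\infty=0$ convention, and observing that the pointwise inequality $c(\hat R(a-\xi)+\hat Y,\hat Z)\ge\epsilon(\hat Z)(\hat R(a-\xi)+\hat Y)$ holds on every realization (so the passage to $\E_z$ needs only finiteness of $K(1-\gamma)$). The substantive content is simply that the slopes $\epsilon(z)=x^*(z)^{-1/\gamma}$ are \emph{self-consistent}: they are chosen precisely so that the Euler-equation update of a consumption function bounded below by these slopes is again bounded below by them, which is why $F$ and its fixed point enter exactly as they do.
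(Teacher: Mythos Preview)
Your proposal is correct and follows essentially the same contradiction argument as the paper's proof: assume $\xi=Tc(a,z)<\epsilon(z)a$, deduce that the Euler equation is interior, bound the right-hand side using $c\in\cC_0$, and conclude using the fixed-point identity for $x^*$. The only cosmetic difference is that the paper immediately replaces $a-\xi$ by the smaller quantity $(1-\epsilon(z))a$ to produce the strict inequality $x^*(z)<(Fx^*)(z)$, whereas you keep $a-\xi$ and algebraically solve for $\xi$ at the end; the two are equivalent rearrangements of the same computation.
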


\begin{proof}
Suppose to the contrary that $T\cC_0\not\subset \cC_0$. Then there exists $c\in \cC_0$ such that for some $a>0$ and $z \in \ZZ$, we have $\xi\coloneqq Tc(a,z)<\epsilon(z)a$.

Since $u'$ is strictly decreasing and $\epsilon(z)\in (0,1]$, it follows from \eqref{eq:xi} and $u'(0)=\infty$ that
$$u'(a) \le u'(\epsilon(z)a)<u'(\xi)=\max\set{\E_z \hat{\beta}\hat{R}u'(c(\hat{R}(a-\xi)+\hat{Y},\hat{Z})),u'(a)}.$$
Therefore it must be $u'(a)<\E_z \hat{\beta}\hat{R}u'(c(\hat{R}(a-\xi)+\hat{Y},\hat{Z}))$. Since $u'$ is strictly decreasing and $c\in \cC_0$, we obtain
\begin{align*}
u'(\epsilon(z)a)<u'(\xi)&=\E_z \hat{\beta}\hat{R}u'(c(\hat{R}(a-\xi)+\hat{Y},\hat{Z}))\\
&\le \E_z \hat{\beta}\hat{R}u'(\epsilon(\hat{Z})(\hat{R}(a-\xi)+\hat{Y}))\\
&\le \E_z \hat{\beta}\hat{R}u'(\epsilon(\hat{Z})\hat{R}[1-\epsilon(z)]a).
\end{align*}
Using $u'(c)=c^{-\gamma}$ and $\epsilon(z)=x^*(z)^{-1/\gamma}$, we obtain
\begin{align*}
& x^*(z)<\E_z \hat{\beta}\hat{R}^{1-\gamma}x^*(\hat{Z})[1-x^*(z)^{-1/\gamma}]^{-\gamma}\\
\iff & x^*(z)<\left(1+(\E_z \hat{\beta}\hat{R}^{1-\gamma}x^*(\hat{Z}))^{1/\gamma}\right)^\gamma=\left(1+(K(1-\gamma)x^*)(z)^{1/\gamma}\right)^\gamma,
\end{align*}
which is a contradiction because $x^*$ is a fixed point of $F$ in \eqref{eq:fixedpoint}.
\end{proof}

With all the above preparations, we can prove Theorem~\ref{thm:linear}.

\setcounter{case}{0}
\begin{proof}[Proof of Theoreom \ref{thm:linear}]
Define the sequence $\set{c_n}\subset \cC$ by $c_0(a,z)=a$ and $c_n=Tc_{n-1}$ for all $n\ge 1$. Since $Tc(a,z)\le a$ for any $c\in \cC$, in particular $c_1(a,z)=Tc_0(a,z)\le a=c_0(a,z)$. Since $T:\cC\to \cC$ is monotone, by induction $0\le c_n\le c_{n-1}$ for all $n$ and $c(a,z)=\lim_{n\to\infty}c_n(a,z)$ exists. By Theorem~\ref{thm:exist}, this $c$ is the unique fixed point of $T$ and also the unique solution to the income fluctuation problem \eqref{eq:IF}.

Define the sequence $\set{x_n}\subset \R_{++}^Z$ by $x_0=1$ and $x_n=Fx_{n-1}$, where $F$ is as in \eqref{eq:fixedpoint}. By definition, we have $c_0(a,z)/a=1=x_0(z)^{-1/\gamma}$, so in particular $\limsup_{a\to\infty}c_0(a,z)/a\le x_0(z)^{-1/\gamma}$ for all $z\in \ZZ$. Since $c_n\downarrow c\ge 0$ point-wise, a repeated application of Proposition~\ref{prop:UB} implies that
\begin{equation}
0\le \limsup_{a\to\infty}\frac{c(a,z)}{a}\le \limsup_{a\to\infty}\frac{c_n(a,z)}{a} \le x_n(z)^{-1/\gamma}.\label{eq:barcub}
\end{equation}

\begin{case}[$r(K(1-\gamma))\ge 1$ and $K(1-\gamma)$ is irreducible]
By Proposition~\ref{prop:MPC_seq} we have $x_n(z)\to \infty$ for all $z\in \ZZ$. Letting $n\to\infty$ in \eqref{eq:barcub}, we obtain
$$\lim_{a\to\infty}\frac{c(a,z)}{a}=0.$$
\end{case}
\begin{case}[$r(K(1-\gamma))<1$]
By Proposition~\ref{prop:MPC_seq} we have $x_n(z)\to x^*(z)$, where $x^*$ is the unique fixed point of $F$ in \eqref{eq:fixedpoint}. Letting $n\to\infty$ in \eqref{eq:barcub}, we obtain
\begin{equation}
\limsup_{a\to\infty}\frac{c(a,z)}{a}\le x^*(z)^{-1/\gamma}.\label{eq:limsup}
\end{equation}
On the other hand, a repeated application of Proposition~\ref{prop:LB} implies that $c_n(a,z)\ge x^*(z)^{-1/\gamma}a$ for all $a>0$ and $z\in \ZZ$. Since $c_n\to c$ point-wise, letting $n\to\infty$, dividing both sides by $a>0$, and letting $a\to\infty$, we obtain
\begin{equation}
\liminf_{a\to\infty}\frac{c(a,z)}{a}\ge x^*(z)^{-1/\gamma}.\label{eq:liminf}
\end{equation}
Combining \eqref{eq:limsup} and \eqref{eq:liminf}, we obtain $\lim_{a\to\infty} c(a,z)/a=\bar{c}(z)=x^*(z)^{-1/\gamma}$. \qedhere
\end{case}
\end{proof}

\begin{proof}[Proof of Proposition \ref{prop:noinc}]
Since the proof is similar to \citet*[Proposition 1]{Toda2019JME}, we only provide a sketch.

If $V(a,z)$ denotes the value function, then by homotheticity we can show $V(\lambda a,z)=\lambda^{1-\gamma}V(a,z)$ for any $\lambda>0$. Setting $(a,\lambda)=(1,a)$, we obtain $V(a,z)=V(1,z)a^{1-\gamma}\eqqcolon \frac{x(z)}{1-\gamma}a^{1-\gamma}$ for some $x(z)>0$. The Bellman equation then implies
\begin{equation*}
\frac{x(z)}{1-\gamma}a^{1-\gamma}=\max_{0\le c\le a}\set{\frac{c^{1-\gamma}}{1-\gamma}+\E_z\hat{\beta}\frac{x(\hat{Z})}{1-\gamma}[\hat{R}(a-c)]^{1-\gamma}}.
\end{equation*}
Maximizing the right-hand side over $c$, elementary calculus shows
\begin{equation*}
c=\frac{a}{1+(\E_z\hat{\beta}\hat{R}^{1-\gamma}x(\hat{Z}))^{1/\gamma})}.
\end{equation*}
Substituting this consumption policy into the Bellman equation and comparing coefficients, after some algebra we obtain
\begin{equation*}
x(z)=\left(1+(\E_z\hat{\beta}\hat{R}^{1-\gamma}x(\hat{Z}))^{1/\gamma})\right)^\gamma.
\end{equation*}
Letting $x=(x(1),\dots,x(Z))\in \R_+^Z$, the above equation is exactly \eqref{eq:fixedpoint}, which has a solution if and only if $r(K(1-\gamma))<1$ by Proposition~\ref{prop:MPC_seq}. Under this condition, we can verify the transversality condition as in \cite*{Toda2019JME}. Therefore the zero income model has a solution if and only if $K(1-\gamma)<1$, in which case the value and consumption functions are given by \eqref{eq:noinc}.
\end{proof}

The proof of Theorem~\ref{thm:complete} follows from the same idea as Theorem~\ref{thm:linear} by considering each diagonal block separately.

\setcounter{case}{0}
\begin{proof}[Proof of Theorem~\ref{thm:complete}]
Since $K=K(1-\gamma)$ is a nonnegative matrix (with entries that are potentially infinite), the map $F$ in \eqref{eq:fixedpoint} is monotone and therefore $\set{x_n}_{n=0}^\infty$ monotonically converges to some $x^*\in [1,\infty]^Z$. To characterize $x^*(z)$ and $\bar{c}(z)$, we consider two cases.

\begin{case}[There exist $j$, $\hat{z}\in \ZZ_j$, and $m\in \N$ such that $K^m_{z\hat{z}}>0$ and $r(K_j)\ge 1$]
Define the block diagonal matrix $\tilde{K}=\diag(K_1,\dots,K_J)$ and the sequence $\set{\tilde{x}_n}_{n=0}^\infty\subset [0,\infty]^Z$ by $\tilde{x}_0=1$ and iterating \eqref{eq:fixedpoint}, where $K$ is replaced by $\tilde{K}$. Since $K\ge \tilde{K}\ge 0$, clearly $x_n\ge \tilde{x}_n\ge 1$ for all $n$. Since by definition $\tilde{K}$ is block diagonal with each diagonal block irreducible, by Lemma~\ref{lem:xn} we have $\tilde{x}_n(z)\to\infty$ as $n\to\infty$ if and only if there exists $j$ such that $z\in \ZZ_j$ and $r(K_j)\ge 1$. (Although Lemma~\ref{lem:xn} assumes the entries of $K$ are finite, the infinite case is similar.) Replacing the vector $1$ in \eqref{eq:fixedpoint} by $0$ and iterating, we obtain
$$x_{m+n}\ge K^mx_n\ge K^m\tilde{x}_n.$$
Therefore if there exist $j$, $\hat{z}\in \ZZ_j$ and $m\in \N$ such that $K^m_{z\hat{z}}>0$ and $r(K_j)\ge 1$, then
$$x_{m+n}(z)\ge K^m_{z\hat{z}}\tilde{x}_n(\hat{z})\to\infty$$
as $n\to\infty$, so $x^*(z)=\infty$. In this case we obtain $\bar{c}(z)=0$ by the same argument as in the proof of Proposition~\ref{prop:UB}.
\end{case}
\begin{case}[For all $j$, either $r(K_j)<1$ or $K^m_{z\hat{z}}=0$ for all $\hat{z}\in \ZZ_j$ and $m\in \N$]
For any $\hat{z}$ such that $K^m_{z\hat{z}}=0$ for all $m$, by \eqref{eq:fixedpoint} the value of $x_n(z)$ is unaffected by all previous $x_k(\hat{z})$ for $k<n$. Therefore for the purpose of computing $x_n(z)$, we may drop all rows and columns of $K$ corresponding to such $\hat{z}$. The resulting matrix has block diagonal entries $K_j$ with $r(K_j)<1$ only, so this matrix has spectral radius less than 1. Therefore by Lemma~\ref{lem:xn}, we have $x_n(z)\to x^*(z)<\infty$ as $n\to\infty$. In this case we obtain $\bar{c}(z)=x^*(z)^{-1/\gamma}$ by the same argument as in the proof of Theorem~\ref{thm:linear}.\qedhere
\end{case}
\end{proof}

\begin{proof}[Proof of Proposition \ref{prop:gamless1}]
If $\gamma=1$, then $r(K(1-\gamma))=r(K(0))<1$ by Assumption~\ref{asmp:spectral}\ref{item:rbeta}. Suppose $\gamma\in (0,1)$. For a nonnegative matrix $A$ and $\theta>0$, let $A^{(\theta)}=(A_{z\hat{z}}^\theta)$ be the matrix of $\theta$-th power. Also, let $\odot$ denote the Hadamard (entry-wise) product. Applying H\"older's inequality, we obtain
\begin{equation*}
\E_{z,\hat{z}} \hat{\beta}\hat{R}^{1-\gamma}=\E_{z,\hat{z}} \hat{\beta}^\gamma (\hat{\beta}\hat{R})^{1-\gamma}\le (\E_{z,\hat{z}} \hat{\beta})^\gamma (\E_{z,\hat{z}} \hat{\beta}\hat{R})^{1-\gamma}.
\end{equation*}
Multiplying both sides by $P_{z\hat{z}}\ge 0$ and collecting into a matrix,  we obtain
\begin{equation*}
K(1-\gamma)\le K(0)^{(\gamma)}\odot K(1)^{(1-\gamma)}.
\end{equation*}
Applying Theorem 1 of \cite*{ElsnerJohnsonDiasDaSilva1988}, we obtain
\begin{equation*}
r(K(1-\gamma))\le r(K(0))^\gamma r(K(1))^{1-\gamma}<1
\end{equation*}
by Assumption~\ref{asmp:spectral}\ref{item:rbeta}.

Next, suppose that there exists $z\in \ZZ$ such that $P_{zz}>0$, $\beta(z,z,\zeta)>0$, and $0<R(z,z,\zeta)<1$ with positive probability. Then $P_{zz}\E_{z,z}\hat{\beta}\hat{R}^{1-\gamma}>1$ for large enough $\gamma>1$. Letting $\tilde{K}$ be the matrix whose $(z,z)$ entry is $P_{zz}\E_{z,z}\hat{\beta}\hat{R}^{1-\gamma}>1$ and all other entries are zero, we obtain $K(1-\gamma)\ge \tilde{K}$ entry-wise. Therefore $r(K(1-\gamma))\ge r(\tilde{K})>1$ by Theorem 8.1.18 of \cite*{HornJohnson2013}.
\end{proof}

\begin{proof}[Proof of Proposition \ref{prop:Bewley}]
\cite*{StachurskiToda2019JET} show that it must be $\beta R<1$ in the stationary equilibrium.

If $R\ge 1$, then $\beta R^{1-\gamma}=(\beta R)R^{-\gamma}<1$. By Example~\ref{exmp:z1}, the asymptotic MPC is $\bar{c}=1-(\beta R^{1-\gamma})^{1/\gamma}\in (0,1)$. Therefore the asymptotic saving rate \eqref{eq:saverateinf} simplifies to
\begin{equation*}
\bar{s}=1-\frac{\bar{c}}{(R-1)(1-\bar{c})}=\frac{(\beta R)^{1/\gamma}-1}{(R-1)(\beta R^{1-\gamma})^{1/\gamma}}\in [-\infty,0)
\end{equation*}
because $\beta R<1$ and $R\ge 1$.

If $R<1$, then the saving rate \eqref{eq:saverate2} becomes
\begin{equation*}
s_{t+1}=1-\frac{(1-R)(1-c/a)+c/a}{\hat{Y}/a}.
\end{equation*}
As $a\to \infty$, we have $c/a\to \bar{c}\in [0,1]$ and $\hat{Y}/a\to 0$. Since $R<1$, it follows that $s_{t+1}\to -\infty$.
\end{proof}

\begin{proof}[Proof of Proposition \ref{prop:Bewley2}]
Since by assumption $\E \beta R^{1-\gamma}<1$, by Example~\ref{exmp:z1} the asymptotic MPC is $\bar{c}=1-(\E \beta R^{1-\gamma})^{1/\gamma}\in (0,1)$. If $\E R\ge 1$, the asymptotic saving rate \eqref{eq:saverateinf} evaluated at $\hat{R}=\E R$ becomes
\begin{equation*}
\bar{s}=1-\frac{\bar{c}}{(\E R-1)(1-\bar{c})}=\frac{\E R(1-\bar{c})-1}{(\E R-1)(1-\bar{c})}.
\end{equation*}
Since $\E R(1-\bar{c})$ is the expected growth rate of wealth for infinitely wealthy agents, if the wealth distribution is unbounded and $\E R(1-\bar{c})>1$, then wealth will grow at the top, which violates stationarity. Therefore in a stationary equilibrium, it must be $\E R(1-\bar{c})\le 1$ and hence $\bar{s} \le 0$.

If $\E R<1$, the proof is identical to the risk-free case (Proposition~\ref{prop:Bewley}).
\end{proof}

%\end{document}

% define local bibliography database
\begin{filecontents}[overwrite]{localbib.bib}

@Article{MaStachurskiToda2020JET,
  author  = {Ma, Qingyin and Stachurski, John and Toda, Alexis Akira},
  title   = {The Income Fluctuation Problem and the Evolution of Wealth},
  journal = {Journal of Economic Theory},
  year    = {2020},
  volume  = {187},
  pages   = {105003},
  month   = may,
  doi     = {10.1016/j.jet.2020.105003},
}

@Article{ChamberlainWilson2000,
  author  = {Chamberlain, Gary and Wilson, Charles A.},
  title   = {Optimal Intertemporal Consumption under Uncertainty},
  journal = {Review of Economic Dynamics},
  year    = {2000},
  volume  = {3},
  number  = {3},
  pages   = {365-395},
  month   = jul,
  doi     = {10.1006/redy.2000.0098},
}

@Book{BinghamGoldieTeugels1987,
  title     = {Regular Variation},
  publisher = {Cambridge University Press},
  year      = {1987},
  author    = {Bingham, Nicholas H. and Goldie, Charles M. and Teugels, Jozef L.},
  volume    = {27},
  series    = {Encyclopedia of Mathematics and Its Applications},
}

@Article{BorovickaStachurski2020,
  author  = {Borovi\v{c}ka, Jaroslav and Stachurski, John},
  journal = {Journal of Finance},
  title   = {Necessary and Sufficient Conditions for Existence and Uniqueness of Recursive Utilities},
  year    = {2020},
  month   = jun,
  number  = {3},
  pages   = {1457-1493},
  volume  = {75},
  doi     = {10.1111/jofi.12877},
}

@Article{Toda2019JME,
  author  = {Toda, Alexis Akira},
  title   = {Wealth Distribution with Random Discount Factors},
  journal = {Journal of Monetary Economics},
  year    = {2019},
  volume  = {104},
  pages   = {101-113},
  month   = jun,
  doi     = {10.1016/j.jmoneco.2018.09.006},
}

@Article{BrockGale1969,
  author  = {Brock, William A. and Gale, David},
  title   = {Optimal Growth under Factor Augmenting Progress},
  journal = {Journal of Economic Theory},
  year    = {1969},
  volume  = {1},
  number  = {3},
  pages   = {229-243},
  month   = oct,
  doi     = {10.1016/0022-0531(69)90032-5},
}

@Article{SchechtmanEscudero1977,
  author  = {Schechtman, Jack and Escudero, Vera L. S.},
  title   = {Some Results on ``An Income Fluctuation Problem''},
  journal = {Journal of Economic Theory},
  year    = {1977},
  volume  = {16},
  number  = {2},
  pages   = {151-166},
  month   = dec,
  doi     = {10.1016/0022-0531(77)90003-5},
}

@Article{StachurskiToda2019JET,
  author  = {Stachurski, John and Toda, Alexis Akira},
  title   = {An Impossibility Theorem for Wealth in Heterogeneous-agent Models with Limited Heterogeneity},
  journal = {Journal of Economic Theory},
  year    = {2019},
  volume  = {182},
  pages   = {1-24},
  month   = jul,
  doi     = {10.1016/j.jet.2019.04.001},
}

@Article{LiStachurski2014,
  author  = {Li, Huiyu and Stachurski, John},
  title   = {Solving the Income Fluctuation Problem with Unbounded Rewards},
  journal = {Journal of Economic Dynamics and Control},
  year    = {2014},
  volume  = {45},
  pages   = {353-365},
  month   = aug,
  doi     = {10.1016/j.jedc.2014.06.003},
}

@Article{BenhabibBisinZhu2015,
  author    = {Benhabib, Jess and Bisin, Alberto and Zhu, Shenghao},
  title     = {The Wealth Distribution in {B}ewley Economies with Capital Income Risk},
  journal   = {Journal of Economic Theory},
  year      = {2015},
  volume    = {159},
  number    = {A},
  pages     = {489-515},
  month     = sep,
  doi       = {10.1016/j.jet.2015.07.013},
}

@Article{Hardy1909,
  author    = {Hardy, Godfrey H.},
  title     = {The Integral $\int_0^\infty \frac{\sin x}{x}dx$},
  journal   = {Mathematical Gazette},
  year      = {1909},
  volume    = {5},
  number    = {80},
  pages     = {98-103},
  month     = jun,
  doi       = {10.2307/3602798},
}

@Article{CarrollKimball1996,
  author  = {Carroll, Christopher D. and Kimball, Miles S.},
  title   = {On the Concavity of the Consumption Function},
  journal = {Econometrica},
  year    = {1996},
  volume  = {64},
  number  = {4},
  pages   = {981-992},
  month   = jul,
  doi     = {10.2307/2171853},
}

@Article{krusell-smith1998,
  author  = {Krusell, Per and Smith, Jr., Anthony A.},
  title   = {Income and Wealth Heterogeneity in the Macroeconomy},
  journal = {Journal of Political Economy},
  year    = {1998},
  volume  = {106},
  number  = {5},
  pages   = {867-896},
  month   = oct,
  doi     = {10.1086/250034},
}

@Book{keynes1936,
  Title                    = {The General Theory of Employment, Interest, and Money},
  Author                   = {Keynes, John Maynard},
  Publisher                = {Harcourt, Brace \& World},
  Year                     = {1936},
  Address                  = {New York}
}

@Article{Toda2014JET,
  author  = {Toda, Alexis Akira},
  title   = {Incomplete Market Dynamics and Cross-Sectional Distributions},
  journal = {Journal of Economic Theory},
  year    = {2014},
  volume  = {154},
  pages   = {310-348},
  month   = nov,
  doi     = {10.1016/j.jet.2014.09.015},
}

@Article{Carroll2009,
  author    = {Carroll, Christopher D.},
  title     = {Precautionary Saving and the Marginal Propensity to Consume out of Permanent Income},
  journal   = {Journal of Monetary Economics},
  year      = {2009},
  volume    = {56},
  number    = {6},
  pages     = {780-790},
  month     = sep,
  doi       = {10.1016/j.jmoneco.2009.06.016},
}

@Article{Carroll2001JEP,
  author    = {Carroll, Christopher D.},
  title     = {A Theory of the Consumption Function, With and Without Liquidity Constraints},
  journal   = {Journal of Economic Perspectives},
  year      = {2001},
  volume    = {15},
  number    = {3},
  pages     = {23-45},
  month     = aug,
  doi       = {10.1257/jep.15.3.23},
}

@Unpublished{Gouin-BonenfantTodaParetoExtrapolation,
  author   = {Gouin-Bonenfant, {\'E}milien and Toda, Alexis Akira},
  title    = {{P}areto Extrapolation: An Analytical Framework for Studying Tail Inequality},
  year     = {2018},
  keywords = {wp},
  url      = {https://ssrn.com/abstract=3260899},
}

@Article{samuelson1969,
  author  = {Samuelson, Paul A.},
  title   = {Lifetime Portfolio Selection by Dynamic Stochastic Programming},
  journal = {Review of Economics and Statistics},
  year    = {1969},
  volume  = {51},
  number  = {3},
  pages   = {239-246},
  month   = aug,
  doi     = {10.2307/1926559},
}

@Article{Poterba2000,
  author    = {Poterba, James M.},
  title     = {Stock Market Wealth and Consumption},
  journal   = {Journal of Economic Perspectives},
  year      = {2000},
  volume    = {14},
  number    = {2},
  pages     = {99-118},
  doi       = {10.1257/jep.14.2.99},
}

@Article{WangWangYang2016,
  author    = {Wang, Chong and Wang, Neng and Yang, Jinqiang},
  title     = {Optimal Consumption and Savings with Stochastic Income and Recursive Utility},
  journal   = {Journal of Economic Theory},
  year      = {2016},
  volume    = {165},
  pages     = {292-331},
  month     = {September},
  doi       = {10.1016/j.jet.2016.04.002},
}

@InCollection{Carroll2000Why,
  author    = {Carroll, Christopher D.},
  title     = {Why Do the Rich Save So Much?},
  booktitle = {Does Atlas Shrug? The Economic Consequences of Taxing the Rich},
  publisher = {Harvard University Press},
  year      = {2000},
  editor    = {Slemrod, Joel B.},
  chapter   = {14},
  pages     = {465-484},
  address   = {Cambridge, MA},
}

@Unpublished{FagerengHolmMollNatvikWP,
  author = {Fagereng, Andreas and Holm, Martin Blomhoff and Moll, Benjamin and Natvik, Gisle},
  title  = {Saving Behavior Across the Wealth Distribution: The Importance of Capital Gains},
  year   = {2019},
}

@Article{Holm2018,
  author    = {Holm, Martin Blomhoff},
  title     = {Consumption with Liquidity Constraints: An Analytical Characterization},
  journal   = {Economics Letters},
  year      = {2018},
  volume    = {167},
  pages     = {40-42},
  month     = jun,
  doi       = {10.1016/j.econlet.2018.03.004},
}

@Article{Jensen2018,
  author  = {Jensen, Martin Kaae},
  title   = {Distributional Comparative Statics},
  journal = {Review of Economic Studies},
  year    = {2018},
  volume  = {85},
  number  = {1},
  pages   = {581-610},
  month   = jan,
  doi     = {10.1093/restud/rdx021},
}

@Article{Light2020,
  author  = {Light, Bar},
  title   = {Uniqueness of Equilibrium in a {B}ewley-{A}iyagari Economy},
  journal = {Economic Theory},
  year    = {2020},
  volume  = {69},
  pages   = {435-450},
  doi     = {10.1007/s00199-018-1167-z},
}

@Article{Light2018,
  author  = {Light, Bar},
  title   = {Precautionary Saving in a {M}arkovian Earnings Environment},
  journal = {Review of Economic Dynamics},
  year    = {2018},
  volume  = {29},
  pages   = {138-147},
  month   = jul,
  doi     = {10.1016/j.red.2017.12.004},
}

@Article{LehrerLight2018,
  author    = {Lehrer, Ehud and Light, Bar},
  title     = {The Effect of Interest Rates on Consumption in An Income Fluctuation Problem},
  journal   = {Journal of Economic Dynamics and Control},
  year      = {2018},
  volume    = {94},
  pages     = {63-71},
  month     = sep,
  doi       = {10.1016/j.jedc.2018.07.004},
}

@Article{FarmerToda2017QE,
  author    = {Farmer, Leland E. and Toda, Alexis Akira},
  title     = {Discretizing Nonlinear, Non-{G}aussian {M}arkov Processes with Exact Conditional Moments},
  journal   = {Quantitative Economics},
  year      = {2017},
  volume    = {8},
  number    = {2},
  pages     = {651-683},
  month     = jul,
  doi       = {10.3982/QE737},
}

@Article{welch-goyal2008,
  author  = {Welch, Ivo and Goyal, Amit},
  title   = {A Comprehensive Look at the Empirical Performance of Equity Premium Prediction},
  journal = {Review of Financial Studies},
  year    = {2008},
  volume  = {21},
  number  = {4},
  pages   = {1455-1508},
  month   = jul,
  doi     = {10.1093/rfs/hhm014},
}

@Article{Du1990,
  author  = {Du, Yihong},
  title   = {Fixed Points of Increasing Operators in Ordered {B}anach Spaces and Applications},
  journal = {Applicable Analysis},
  year    = {1990},
  volume  = {38},
  number  = {1-2},
  pages   = {1-20},
  doi     = {10.1080/00036819008839957},
}

@Book{Zhang2013,
  title     = {Variational, Topological, and Partial Order Methods with Their Applications},
  publisher = {Springer},
  year      = {2013},
  author    = {Zhang, Zhitao},
  volume    = {29},
  series    = {Developments in Mathematics},
  doi       = {10.1007/978-3-642-30709-6},
}

@Article{Coleman1990,
  author  = {Coleman, II, Wilbur John},
  title   = {Solving the Stochastic Growth Model by Policy-Function Iteration},
  journal = {Journal of Business and Economic Statistics},
  year    = {1990},
  volume  = {8},
  number  = {1},
  pages   = {27-29},
  month   = jan,
  doi     = {10.1080/07350015.1990.10509769},
}

@Article{TanakaToda2013EL,
  author    = {Tanaka, Ken'ichiro and Toda, Alexis Akira},
  title     = {Discrete Approximations of Continuous Distributions by Maximum Entropy},
  journal   = {Economics Letters},
  year      = {2013},
  volume    = {118},
  number    = {3},
  pages     = {445-450},
  month     = mar,
  doi       = {10.1016/j.econlet.2012.12.020},
}

@Article{TanakaToda2015SINUM,
  author    = {Tanaka, Ken'ichiro and Toda, Alexis Akira},
  title     = {Discretizing Distributions with Exact Moments: Error Estimate and Convergence Analysis},
  journal   = {SIAM Journal on Numerical Analysis},
  year      = {2015},
  volume    = {53},
  number    = {5},
  pages     = {2158-2177},
  doi       = {10.1137/140971269},
}

@Article{DeNardiFella2017,
  author    = {De Nardi, Mariacristina and Fella, Giulio},
  title     = {Saving and Wealth Inequality},
  journal   = {Review of Economic Dynamics},
  year      = {2017},
  volume    = {26},
  pages     = {280-300},
  month     = oct,
  doi       = {10.1016/j.red.2017.06.002},
}

@Article{huggett1996,
  author  = {Huggett, Mark},
  title   = {Wealth Distribution in Life-Cycle Economies},
  journal = {Journal of Monetary Economics},
  year    = {1996},
  volume  = {38},
  number  = {3},
  pages   = {469-494},
  month   = dec,
  doi     = {10.1016/S0304-3932(96)01291-3},
}

@Article{Quadrini1999,
  author    = {Quadrini, Vincenzo},
  title     = {The Importance of Entrepreneurship for Wealth Concentration and Mobility},
  journal   = {Review of Income and Wealth},
  year      = {1999},
  volume    = {45},
  number    = {1},
  pages     = {1-19},
  month     = mar,
  doi       = {10.1111/j.1475-4991.1999.tb00309.x},
}

@Article{DynanSkinnerZeldes2004,
  author    = {Dynan, Karen E. and Skinner, Jonathan and Zeldes, Stephen P.},
  title     = {Do the Rich Save More?},
  journal   = {Journal of Political Economy},
  year      = {2004},
  volume    = {112},
  number    = {2},
  pages     = {397-444},
  month     = apr,
  doi       = {10.1086/381475},
}

@Article{bewley1977,
  author  = {Bewley, Truman F.},
  title   = {The Permanent Income Hypothesis: A Theoretical Formulation},
  journal = {Journal of Economic Theory},
  year    = {1977},
  volume  = {16},
  number  = {2},
  pages   = {252-292},
  month   = dec,
  doi     = {10.1016/0022-0531(77)90009-6},
}

@Article{Cao2020,
  author  = {Cao, Dan},
  title   = {Recursive Equilibrium in {K}rusell and {S}mith (1998)},
  journal = {Journal of Economic Theory},
  year    = {2020},
  volume  = {186},
  pages   = {104978},
  month   = mar,
  doi     = {10.1016/j.jet.2019.104978},
}

@Article{Acikgoz2018,
  author  = {A\c{c}{\i}kg{\"o}z, {\"O}mer T.},
  title   = {On the Existence and Uniqueness of Stationary Equilibrium in {B}ewley Economies with Production},
  journal = {Journal of Economic Theory},
  year    = {2018},
  volume  = {173},
  pages   = {18-55},
  month   = jan,
  doi     = {10.1016/j.jet.2017.10.006},
}

@Article{ElsnerJohnsonDiasDaSilva1988,
  author  = {Elsner, Ludwig and Johnson, Charles R. and Dias da Silva, Jos{\'e} Ant{\'o}nio},
  title   = {The {P}erron Root of a Weighted Geometric Mean of Nonneagative Matrices},
  journal = {Linear and Multilinear Algebra},
  year    = {1988},
  volume  = {24},
  number  = {1},
  pages   = {1-13},
  month   = nov,
  doi     = {10.1080/03081088808817892},
}

@Article{Carroll2020,
  author  = {Carroll, Christopher D.},
  title   = {Theoretical Foundations of Buffer Stock Saving},
  journal = {Quantitative Economics},
  year    = {2020},
  note    = {Forthcoming},
}

@Article{saez-zucman2016,
  author  = {Saez, Emmanuel and Zucman, Gabriel},
  title   = {Wealth Inequality in the {U}nited {S}tates since 1913: Evidence from Capitalized Income Tax Data},
  journal = {Quarterly Journal of Economics},
  year    = {2016},
  volume  = {131},
  number  = {2},
  pages   = {519-578},
  month   = may,
  doi     = {10.1093/qje/qjw004},
}

@Article{friend-blume1975,
  author  = {Friend, Irwin and Blume, Marshall E.},
  title   = {The Demand for Risky Assets},
  journal = {American Economic Review},
  year    = {1975},
  volume  = {65},
  number  = {5},
  pages   = {900-922},
  month   = dec,
}

@Unpublished{mcdaniel2007average,
  author = {McDaniel, Cara},
  title  = {Average Tax Rates on Consumption, Investment, Labor and Capital in the {OECD} 1950-2003},
  year   = {2007},
}

@Article{luttmer2010,
  author  = {Luttmer, Erzo G. J.},
  title   = {Models of Growth and Firm Heterogeneity},
  journal = {Annual Review of Economics},
  year    = {2010},
  volume  = {2},
  pages   = {547-576},
  doi     = {10.1146/annurev.economics.102308.124410},
}

@Article{BakshiChen1996,
  author  = {Bakshi, Gurdip S. and Chen, Zhiwu},
  title   = {The Spirit of Capitalism and Stock-Market Prices},
  journal = {American Economic Review},
  year    = {1996},
  volume  = {86},
  number  = {1},
  pages   = {133-157},
  month   = mar,
}

@Article{StraubSavingWP,
  author = {Straub, Ludwig},
  title  = {Consumption, Savings, and the Distribution of Permanent Income},
  year   = {2019},
}

@Article{KaplanMollViolante2018,
  author  = {Kaplan, Greg and Moll, Benjamin and Violante, Giovanni L.},
  title   = {Monetary Policy According to {HANK}},
  journal = {American Economic Review},
  year    = {2018},
  volume  = {108},
  number  = {3},
  pages   = {697-743},
  month   = mar,
  doi     = {10.1257/aer.20160042},
}

@Article{DeNardi2004,
  author  = {De Nardi, Mariacristina},
  title   = {Wealth Inequality and Intergenerational Links},
  journal = {Review of Economic Studies},
  year    = {2004},
  volume  = {71},
  number  = {3},
  pages   = {743-768},
  month   = jul,
  doi     = {10.1111/j.1467-937X.2004.00302.x},
}

@Article{Kuhn2013,
  author    = {Kuhn, Moritz},
  title     = {Recursive Equilibria in an {A}iyagari-style Economy with Permanent Income Shocks},
  journal   = {International Economic Review},
  year      = {2013},
  volume    = {54},
  number    = {3},
  pages     = {807-835},
  month     = aug,
  doi       = {10.1111/iere.12018},
}

@Article{Rabault2002,
  author  = {Rabault, Guillaume},
  title   = {When Do Borrowing Constraints Bind? {S}ome New Results on the Income Fluctuation Problem},
  journal = {Journal of Economic Dynamics and Control},
  year    = {2002},
  volume  = {26},
  number  = {2},
  pages   = {217-245},
  month   = feb,
  doi     = {10.1016/S0165-1889(00)00042-7},
}

@Article{DattaMirmanReffett2002,
  author    = {Datta, Manjira and Mirman, Leonard J. and Reffett, Kevin L.},
  title     = {Existence and Uniqueness of Equilibrium in Distorted Dynamic Economies with Capital and Labor},
  journal   = {Journal of Economic Theory},
  year      = {2002},
  volume    = {103},
  number    = {2},
  pages     = {377--410},
  month     = apr,
  doi       = {10.1006/jeth.2000.2789},
}

@Article{MorandReffett2003,
  author    = {Morand, Olivier F. and Reffett, Kevin L.},
  title     = {Existence and Uniqueness of Equilibrium in Nonoptimal Unbounded Infinite Horizon Economies},
  journal   = {Journal of Monetary Economics},
  year      = {2003},
  volume    = {50},
  number    = {6},
  pages     = {1351-1373},
  month     = sep,
  doi       = {10.1016/S0304-3932(03)00082-5},
}

@Article{krebs2006,
  author  = {Krebs, Tom},
  title   = {Recursive Equilibrium in Endogenous Growth Models with Incomplete Markets},
  journal = {Economic Theory},
  year    = {2006},
  volume  = {29},
  number  = {3},
  pages   = {505-523},
  doi     = {10.1016/S0165-1889(03)00062-9},
}

@Article{MaStachurskiToda2020JET,
  author  = {Ma, Qingyin and Stachurski, John and Toda, Alexis Akira},
  title   = {The Income Fluctuation Problem and the Evolution of Wealth},
  journal = {Journal of Economic Theory},
  year    = {2020},
  volume  = {187},
  pages   = {105003},
  month   = may,
  doi     = {10.1016/j.jet.2020.105003},
}

@Article{StachurskiToda2019JET,
  author  = {Stachurski, John and Toda, Alexis Akira},
  title   = {An Impossibility Theorem for Wealth in Heterogeneous-agent Models with Limited Heterogeneity},
  journal = {Journal of Economic Theory},
  year    = {2019},
  volume  = {182},
  pages   = {1-24},
  month   = jul,
  doi     = {10.1016/j.jet.2019.04.001},
}

@Article{LiStachurski2014,
  author  = {Li, Huiyu and Stachurski, John},
  title   = {Solving the Income Fluctuation Problem with Unbounded Rewards},
  journal = {Journal of Economic Dynamics and Control},
  year    = {2014},
  volume  = {45},
  pages   = {353-365},
  month   = aug,
  doi     = {10.1016/j.jedc.2014.06.003},
}

@TechReport{MianStraubSufiWP,
  author = {Mian, Atif R. and Straub, Ludwig and Sufi, Amir},
  title  = {Indebted Demand},
  year   = {2020},
  number = {26940},
  type   = {NBER Working Paper},
  url    = {https://www.nber.org/papers/w26940},
}

@Book{HornJohnson2013,
  title     = {Matrix Analysis},
  publisher = {Cambridge University Press},
  year      = {2013},
  author    = {Horn, Roger A. and Johnson, Charles R.},
  address   = {New York},
  edition   = {second},
}

@Article{alvarez-stokey1998,
  author  = {Alvarez, Fernando and Stokey, Nancy L.},
  journal = {Journal of Economic Theory},
  title   = {Dynamic Programming with Homogeneous Functions},
  year    = {1998},
  month   = sep,
  number  = {1},
  pages   = {167-189},
  volume  = {82},
  doi     = {10.1006/jeth.1998.2431},
}

@Article{GilchristYankovZakrajsek2009,
  author    = {Gilchrist, Simon and Yankov, Vladimir and Zakraj{\v{s}}ek, Egon},
  journal   = {Journal of Monetary Economics},
  title     = {Credit Market Shocks and Economic Fluctuations: Evidence from Corporate Bond and Stock Market},
  year      = {2009},
  month     = may,
  number    = {4},
  pages     = {471-493},
  volume    = {56},
  doi       = {10.1016/j.jmoneco.2009.03.017},
}

@Article{CagettiDeNardi2006,
  author  = {Cagetti, Marco and De Nardi, Mariacristina},
  title   = {Entrepreneurship, Frictions, and Wealth},
  journal = {Journal of Political Economy},
  year    = {2006},
  volume  = {114},
  number  = {5},
  pages   = {835-870},
  month   = oct,
  doi     = {10.1086/508032},
}

@Article{Carroll2006,
  author  = {Carroll, Christopher D.},
  journal = {Economics Letters},
  title   = {The Method of Endogenous Gridpoints for Solving Dynamic Stochastic Optimization Problems},
  year    = {2006},
  month   = jun,
  number  = {3},
  pages   = {312--320},
  volume  = {91},
  doi     = {10.1016/j.econlet.2005.09.013},
}

@Unpublished{BeareToda-dPL,
  author   = {Beare, Brendan K. and Toda, Alexis Akira},
  title    = {Geometrically Stopped {M}arkovian Random Growth Processes and {P}areto Tails},
  year     = {2017},
  keywords = {wp},
  url      = {https://arxiv.org/abs/1712.01431},
}

@Article{Vermeulen2018,
  author  = {Vermeulen, Philip},
  title   = {How Fat is the Top Tail of the Wealth Distribution?},
  journal = {Review of Income and Wealth},
  year    = {2018},
  volume  = {64},
  number  = {2},
  pages   = {357-387},
  month   = jun,
  doi     = {10.1111/roiw.12279},
}

@Article{TodaHARA,
  author  = {Toda, Alexis Akira},
  journal = {Journal of Mathematical Economics},
  title   = {Necessity of Hyperbolic Absolute Risk Aversion for the Concavity of Consumption Functions},
  year    = {2020},
  doi     = {10.1016/j.jmateco.2020.102460},
}

\end{filecontents}

\end{document}